\newtheorem{theorem}{Theorem}[section]
\newtheorem{corollary}[theorem]{Corollary}
\newtheorem{proposition}[theorem]{Proposition}
\newtheorem{lemma}[theorem]{Lemma}
\newtheorem{conjecture}[theorem]{Conjecture}
\theoremstyle{definition}
\newtheorem{definition}[theorem]{Definition}
\title{\bf The Price of Connectivity in Fair Division}
\author[1]{Xiaohui Bei} 
\author[2]{Ayumi Igarashi}
\author[3]{Xinhang Lu}
\author[4]{Warut Suksompong}
\affil[1]{Nanyang Technological University, Singapore}
\affil[2]{National Institute of Informatics, Japan}
\affil[3]{University of New South Wales, Australia}
\affil[4]{National University of Singapore, Singapore}
\date{}
\begin{document}

\maketitle

\begin{abstract}
We study the allocation of indivisible goods that form an undirected graph and quantify the loss of fairness when we impose a constraint that each agent must receive a connected subgraph.
Our focus is on well-studied fairness notions including envy-freeness and maximin share fairness.
We introduce the \emph{price of connectivity} to capture the largest multiplicative gap between the graph-specific and the unconstrained maximin share, and derive bounds on this quantity which are tight for large classes of graphs in the case of two agents and for paths and stars in the general case.
For instance, with two agents we show that for biconnected graphs it is possible to obtain at least $3/4$ of the maximin share with connected allocations, while for the remaining graphs the guarantee is at most $1/2$.
In addition, we determine the optimal relaxation of envy-freeness that can be obtained with each graph for two agents, and characterize the set of trees and complete bipartite graphs that always admit an allocation satisfying envy-freeness up to one good (EF1) for three agents.
Our work demonstrates several applications of graph-theoretic tools and concepts to fair division problems.
\end{abstract}

\section{Introduction}

We consider a classical resource allocation setting where a set of goods is to be allocated among interested agents. Our goal is to find an allocation that is fair to all agents.
This problem has been addressed in a large body of literature on \emph{fair division} \citep{BramsTa96,Moulin03}, which has found applications ranging from divorce settlement \citep{BramsTa96-2} to credit assignment \citep{DeclippelMoTi08}.
The two most prominent fairness notions in the literature are \emph{envy-freeness} and \emph{proportionality}.
An allocation is said to be envy-free if every agent likes her bundle at least as much as any other agent's bundle, and proportional if every agent receives value at least $1/n$ of her value for the entire set of goods, where $n$ denotes the number of agents.

Our focus in this paper is on the setting where we allocate \emph{indivisible} goods.
This pertains to the allocation of houses, cars, artworks, electronics, and many other common items. When goods are indivisible, neither envy-freeness nor proportionality can always be fulfilled, e.g., when two agents try to divide a single valuable good.
As a result, relaxations of both notions have been studied. Envy-freeness is often relaxed to \emph{envy-freeness up to one good (EF1)}---this means  that any envy that an agent has towards another agent can be eliminated by removing a good from the latter agent's bundle.
An EF1 allocation exists for any number of agents with arbitrary monotonic utilities \citep{LiptonMaMo04}.
Likewise, proportionality can be relaxed to \emph{maximin share fairness}---the \emph{maximin share (MMS)} of an agent is the largest value that the agent can guarantee for herself if she is allowed to divide the goods into $n$ parts and always receives the worst part.
An allocation that gives every agent her maximin share---said to satisfy maximin share fairness---does not always exist for additive utilities, but a constant multiplicative approximation can be obtained \citep{KurokawaPrWa18}.

Perhaps the most well-known fair division protocol is the \emph{cut-and-choose protocol}, which dates back to at least the Bible and can be used to allocate a \emph{divisible} good between two agents.
In this protocol, the first agent divides the good into two equal parts (this is possible because the good is divisible), and the second agent chooses the part that she prefers.
The cut-and-choose protocol has a direct analogue in the indivisible goods setting: since an equal partition may no longer exist, the first agent now divides the goods into two parts that are as equal as possible in her view.
The resulting allocation is guaranteed to satisfy both maximin share fairness and EF1.\footnote{In fact, it also satisfies a notion called \emph{envy-freeness up to any good (EFX)}, which is stronger than EF1 \citep{PlautRo20}.}
However, these guarantees rely crucially on the assumption that any allocation of the goods to the two agents can be chosen---in reality, there are often constraints on the allocations that we desire.
One common type of constraints is captured by a model of \citet{BouveretCeEl17}, where the goods are vertices of a connected undirected graph and each agent must be allocated a connected subgraph.
For instance, the goods could represent offices in a university building that we wish to divide between research groups, and it is desirable for each group to receive a connected set of offices in order to facilitate communication within the group.
To what extent do the fairness guarantees continue to hold when connectivity constraints are imposed, and how does the answer depend on the underlying graph?
Put another way, what is the price in terms of fairness that we have to pay if we desire connectivity?

\subsection{Our Contributions}

In this paper, we make several contributions to the active line of work on fairly allocating indivisible goods under connectivity constraints.\footnote{We survey this line of work in Section~\ref{sec:relatedwork}.}
While we also provide fairness guarantees for any number of agents, the majority of our results concern the setting of two agents.
We emphasize here that this setting is fundamental in fair division.
Indeed, a number of fair division applications including divorce settlements, inheritance division, and international border disputes often fall into this setting, and numerous prominent works in the field deal exclusively with the two-agent case (e.g., \citep{BramsFi00,BramsKiKl12,BramsKiKl14,KilgourVe18}).\footnote{See also \cite[Section~1.1.1]{PlautRo20-2} for further discussion on the importance of the two-agent setting.}
In addition, as we will see, under connectivity constraints the setting with two agents is already surprisingly rich and gives rise to several mathematically deep and challenging questions.

\renewcommand{\arraystretch}{1.2}
\begin{table*}
\centering
    \begin{tabular}{| c | c | c |}
    \hline
    Class of graphs & $n=2$ & $n\geq 2$ \\ \hline \hline
     Paths & \begin{tabular}[c]{@{}c@{}}$1$ if $m=2$\\$2$ if $m\geq 3$\end{tabular} & \begin{tabular}[c]{@{}c@{}}$1$ if $m<n$\\$m-n+1$ if $n\leq m< 2n-1$\\ $n$ if $m\geq 2n-1$ \end{tabular}  \\ \hline
     Stars & $m-1$ & $m-n+1$ \\  \hline
     Vertex connectivity $1$ & $k$ (see caption) & $\leq m-n+1$  \\ \hline
     Vertex connectivity $2$ & $\frac{4}{3}$  & $\leq m-n+1$  \\  \hline
     Vertex connectivity $\geq 3$ & $\leq \frac{4}{3}$  & $\leq m-n+1$  \\
    \hline
    \end{tabular}
    \vspace{5mm}
    \caption{Summary of our PoC bounds, where $n$ and $m$ denote the number of agents and goods, respectively. For $n=2$ and graphs with vertex connectivity $1$ (which include all trees), the parameter $k$ denotes the maximum number of components that can result from deleting a single vertex from the graph.}
    \label{table:summary}
\end{table*}

We begin by studying maximin share fairness for agents with additive utilities.
We define the \emph{price of connectivity (PoC)} of a graph to be the largest multiplicative gap between the maximin share defined over all possible partitions and the \emph{graph maximin share (G-MMS)}, which is defined over all partitions that respect the connectivity constraints of the graph.
For any graph and any number of agents, it follows from the definitions that if the PoC is $\alpha$ and one can give each agent $\beta$ times her G-MMS, then it is also possible to guarantee all agents a $\beta/\alpha$ fraction of their MMS.
Moreover, in cases where giving every agent their full G-MMS is possible (i.e., $\beta=1$), we observe in \textbf{Section~\ref{sec:prelim}} that the resulting factor $1/\alpha$ is tight---in other words, the PoC is the reciprocal of the optimal MMS approximation that can be achieved.
Since it is known from prior work that $\beta=1$ for two agents and arbitrary graphs as well as for any number of agents and trees, our PoC notion precisely captures the best possible MMS guarantee in these cases.
Hence, determining the PoC, whose definition only involves a single utility function, allows us to identify the optimal MMS guarantee for agents with possibly different utility functions.

With this relationship in hand, we proceed to determine the PoC of various graphs; our results are summarized in Table~\ref{table:summary}.
In the two-agent case (\textbf{Section~\ref{sec:mms-two}}), we show that the PoC is related to the \emph{vertex connectivity} of the graph, i.e., the minimum number of vertices whose deletion disconnects the graph. For graphs with connectivity exactly $1$, including all trees, we show that the PoC is equal to the maximum number of connected components that result from deleting one vertex.
As a consequence, the PoC is at least $2$ for any graph in this class. On the other hand, we show an upper bound of $4/3$ for all graphs with connectivity at least $2$---this bound is tight for all graphs with connectivity exactly $2$ and, perhaps surprisingly, for certain graphs with connectivity up to $5$.
In addition, we pose an intriguing conjecture that the PoC of any graph with connectivity at least $2$ is closely related to its ``linkedness''---the two-agent case would be completely solved if the conjecture holds---and verify our conjecture when the graph is a complete graph with an arbitrary matching removed.

For any number of agents (\textbf{Section~\ref{sec:mms-any}}), we establish a general upper bound of $m-n+1$ on the PoC (where $m$ and $n$ denote the number of goods and agents, respectively), and show that this implies the existence of a connected allocation that gives every agent at least a $1/(m-n+1)$ fraction of her MMS with respect to any graph. We also derive the exact PoC for paths and stars.
Notably, in order to establish the PoC for paths, we introduce a new relaxation of proportionality that we call the \emph{indivisible proportional share (IPS) property}. This notion strengthens a number of relaxations of proportionality in the literature while maintaining guaranteed existence, so we believe that it may be of independent interest as well.

Next, in \textbf{Section~\ref{sec:envyfree}} we turn our attention to envy-freeness relaxations and allow agents to have arbitrary monotonic utilities.
In the case of two agents, \citet{BiloCaFl19} characterized the graphs for which an EF1 allocation always exists as the graphs that admit a ``bipolar ordering'' (defined in Section~\ref{sec:prelim}).
While the characterization yields a strong fairness guarantee for this class of graphs, it does not give any guarantee for the remaining graphs.
We generalize this result by establishing the optimal relaxation of envy-freeness for every graph---specifically, for each graph, we determine the smallest $k$ for which an allocation that is \emph{envy-free up to $k$ goods (EF$k$)} always exists with two agents.
Intuitively, the less connected the graph is, the weaker the fairness guarantee we can make, i.e., the higher the price we have to pay.
As a corollary, an EF$(m-2)$ allocation exists for any connected graph, and the bound $m-2$ is tight for stars.
By contrast, we show that the fairness notion \emph{envy-freeness up to any good (EFX)}, which is stronger than EF1, can only be guaranteed for complete graphs with two agents.
We then address the case of three agents, where we characterize the set of trees and complete bipartite graphs that admit an EF1 allocation for arbitrary utilities.
In particular, our result for complete bipartite graphs answers an open question raised by \citet{BiloCaFl19}.

From a technical point of view, our work makes extensive use of tools and concepts from graph theory, including vertex connectivity, linkedness, ear decompositions, bipolar orderings, and block decompositions.
While bipolar ordering and block decomposition have been used by \citet{BiloCaFl19} in the EF1 characterization that we mentioned, the other concepts have not previously appeared in the fair division literature to the best of our knowledge.
We believe that establishing these connections enriches the growing literature and lays the groundwork for fruitful collaborations between researchers across the two well-established fields.

Finally, we remark that with the exception of Theorem~\ref{thm:mms-two-almost-complete}, all of our guarantees are constructive.
In particular, we exhibit polynomial-time algorithms that produce allocations satisfying the guarantees.

\subsection{Related Work}
\label{sec:relatedwork}
The fair allocation of indivisible goods has received considerable attention from various research communities, especially in the last few years.
We refer to surveys by \citet{Thomson16}, \citet{Markakis17}, and \citet{Moulin19} for an overview of recent developments in the area.

The papers most closely related to ours are the two papers that we mentioned, by \citet{BouveretCeEl17} and \citet{BiloCaFl19}. Bouveret et al.~showed that for any number of agents with additive utilities, there always exists an allocation that gives every agent her maximin share when the graph is a tree, but not necessarily when the graph is a cycle.
It is important to note that their maximin share notion corresponds to our G-MMS notion and is defined based on the graph, with only connected allocations with respect to that graph taken into account in an agent's calculation.
As an example of a consequence, even though a cycle permits strictly more connected allocations than a path, it offers less guarantee in terms of the G-MMS.
Our approach of considering the (complete-graph) MMS allows us to directly compare the guarantees that can be obtained for different graphs.

\citet{BiloCaFl19} investigated the same model with respect to relaxations of envy-freeness.
As we mentioned, they characterized the set of graphs for which EF1 can be guaranteed in the case of two agents with arbitrary monotonic utilities.
Moreover, they showed that an EF1 allocation always exists on a path for $n\leq 4$.
Intriguingly, the existence question for $n\geq 5$ remains open, although they showed that an EF2 allocation can be guaranteed for any $n$.

Besides \citet{BouveretCeEl17} and \citet{BiloCaFl19}, a number of other authors have recently studied fairness under connectivity constraints.
\citet{LoncTr18} investigated maximin share fairness in the case of cycles, also using the G-MMS notion, while \citet{Suksompong19} focused on  paths and provided approximations of envy-freeness, proportionality, as well as another fairness notion called equitability.
\citet{IgarashiPe19} considered fairness in conjunction with the economic efficiency notion of Pareto optimality.
\citet{BouveretCeLe18} studied the problem of \emph{chore division}, where all items yield disutility to the agents, and gave complexity results on deciding the existence of envy-free, proportional, and equitable allocations for paths and stars.
\citet{BeiSu21} and \citet{IgarashiZw21} proposed similar models in which the resource is \emph{divisible} and forms the edges of a graph (as opposed to the vertices).
\citet{ElkindSeSu21-Graph} studied such models with respect to maximin share fairness.

Considering connected allocations can also be useful in settings where we are not interested in connectedness per se, or perhaps the goods do not even lie on any graph.
A technique that has received interest recently is to arrange the goods on a path and compute a connected allocation with respect to the path.
Variants of this technique have been used to devise algorithms that find a fair allocation using few queries \citep{OhPrSu19} or divide goods fairly among groups of agents \citep{SegalhaleviSu19,KyropoulouSuVo19}.

A related line of work also combines graphs with resource allocation, but uses graphs to capture the connection between agents instead of goods.
In particular, a graph specifies the acquaintance relationship among agents.
\citet{AbebeKlPa17} and \citet{BeiQiZh17} defined graph-based versions of envy-freeness and proportionality with divisible resources where agents only evaluate their shares relative to other agents with whom they are acquainted.
\citet{BeynierChGo18} and \citet{BredereckKaNi18} studied the graph-based version of envy-freeness with indivisible goods.
\citet{AzizBoCa18} introduced a number of fairness notions parameterized by the acquaintance graph.
In addition to graphs, other types of constraints that have been considered in the fair division literature include cardinality constraints~\citep{BiswasBa18}, matroid constraints~\citep{GourvesMo19,DrorFeSe21}, and separation constraints~\citep{ElkindSeSu21,ElkindSeSu21-Land}.\footnote{Constraints in fair division have recently been surveyed by \citet{Suksompong21}.}

Beyond resource allocation, the problem of partitioning a graph into connected subgraphs has been studied in other areas of discrete mathematics and theoretical computer science \citep{DyerFr85,VanthofPaWo09,PaulusmaVa11}.

\section{Preliminaries}
\label{sec:prelim}

Let $N=\{1,2,\dots,n\}$ denote the set of agents, and $M=\{1,2,\dots,m\}$ the set of goods.
There is a bijection between the goods in $M$ and the $m$ vertices of a connected undirected graph $G$; we will refer to goods and vertices interchangeably.
A \emph{bundle} is a subset of goods, and an \emph{allocation} is a partition of $M$ into $n$ bundles $(M_1,\dots,M_n)$ such that agent $i$ receives bundle $M_i$.
A bundle is called \emph{connected} if the goods in it form a connected subgraph of $G$, and an allocation or a partition is \emph{connected} if all of its bundles are connected. We assume in this paper that allocations are required to be connected.

Each agent $i$ has a nonnegative \emph{utility} $u_i(M')$ for each bundle $M'\subseteq M$, where we assume without loss of generality that $u_i(\emptyset)=0$ for all $i$.
For a good $g\in M$, we will use $u_i(\{g\})$ and $u_i(g)$ interchangeably.
We assume that utilities are \emph{additive}, i.e., $u(M')=\sum_{g\in M'}u(g)$ for all $M'\subseteq M$; this assumption is commonly made in the fair division literature, especially when studying maximin share fairness \citep{BouveretCeEl17,KurokawaPrWa18,GourvesMo19,LoncTr18}.
An \emph{instance} consists of the goods, their underlying graph, the agents, and their utilities for the goods.

We are ready to define maximin share fairness.

\begin{definition}
Given a graph $G$, an additive utility function $u$, and the number of agents $n$, the \emph{graph maximin share (G-MMS)} for $G,u,n$ is defined as
\[
\text{G-MMS}(G,u,n) := \max_{(M_1,\dots,M_n)}\min_{i=1,\dots,n} u(M_i),
\]
where the maximum is taken over all partitions $(M_1,\dots,M_n)$ that are connected with respect to $G$.
The \emph{maximin share (MMS)} for $u,n$ is defined as \[\text{MMS}(u,n) := \text{G-MMS}(K_m,u,n),\] where $K_m$ denotes the complete graph over the goods.
When the parameters are clear from the context, we will refer to the graph maximin share and the maximin share simply as G-MMS and MMS, respectively.
A partition for which the maximum is attained is called a \emph{G-MMS partition} (resp., \emph{MMS partition}).
\end{definition}

It follows from the definition that
\[\text{G-MMS}(G,u,n)\leq \text{MMS}(u,n)\leq \frac{u(M)}{n}\]
for all $G,u,n$, and $\text{G-MMS}(G_1,u,n)\leq \text{G-MMS}(G_2,u,n)$ if $G_1$ is a subgraph of $G_2$.
Moreover, $\text{G-MMS}(G,u,n)= \text{MMS}(u,n)=0$ if $m<n$.

Next, we define the price of connectivity.

\begin{definition}
\label{def:PoC}
Given a graph $G$ and the number of agents $n$, the \emph{price of connectivity (PoC)} of $G$ for $n$ agents is defined as
\[\sup_{u}\frac{\text{MMS}(u,n)}{\text{G-MMS}(G,u,n)},\]
where the supremum is taken over all possible additive utility functions $u$.\footnote{We interpret $\frac{0}{0}$ in this context to be equal to $1$. Note that $\text{MMS}(u,n) = 0$ if and only if $\text{G-MMS}(G,u,n) = 0$---indeed, since we assume that $G$ is connected, both conditions are equivalent to the condition that fewer than $n$ goods yield a positive utility according to $u$.} We denote the PoC of a graph $G$ for $n$ agents by $\text{PoC}(G,n)$.
\end{definition}

By definition of the PoC, we have \begin{equation}
\text{PoC}(G,n)\cdot \text{G-MMS}(G,u,n) \geq \text{MMS}(u,n)
\label{eq:PoC}
\end{equation} for any $G,u,n$, and the factor $\text{PoC}(G,n)$ cannot be replaced by any smaller factor.
When $G$ and $n$ are clear from the context, we will refer to $\text{PoC}(G,n)$ simply as PoC. Note that the PoC is always at least $1$, and is exactly $1$ for complete graphs of any size. Moreover, the PoC is $1$ if $m\leq n$.

Suppose that for some graph $G$ and number of agents $n$, there always exists a connected allocation that gives each agent at least $\beta$ times her G-MMS.
By \eqref{eq:PoC}, this allocation also gives each agent at least $\beta/\text{PoC}(G,n)$ times her MMS.
Prior work has established that $\beta = 1$ when $n=2$ and $G$ is arbitrary \cite[Cor.~2]{LoncTr18}, as well as when $G$ is a tree and $n$ is arbitrary \cite[Thm.~5.4]{BouveretCeEl17}.
Hence, in these cases, we can guarantee each agent at least $1/\text{PoC}(G,n)$ times her MMS.
The factor $1/\text{PoC}(G,n)$ is also the best possible.
To see this, consider $n$ agents with the same utility function $u$.
From the definition of G-MMS, any connected allocation gives some agent a value of at most $\text{G-MMS}(G,u,n)$.
By considering $u$ such that $\text{G-MMS}(G,u,n)$ is arbitrarily close to $\text{MMS}(u,n)/\text{PoC}(G,n)$, this agent receives arbitrarily close to $1/\text{PoC}(G,n)$ times her MMS.
To summarize, we have the following proposition.

\begin{proposition}
\label{prop:mms-CG-approx}
Let $n$ be any positive integer and $G$ be any graph.
If $n=2$ (and $G$ is arbitrary), or if $G$ is a tree (and $n$ is arbitrary), then there always exists a connected allocation that gives each agent at least $1/\text{PoC}(G,n)$ times her MMS. Moreover, the factor $1/\text{PoC}(G,n)$ is tight in both cases.
\end{proposition}

Proposition~\ref{prop:mms-CG-approx} implies that if there are two agents or $G$ is a tree, in order to determine the optimal MMS approximation for agents with possibly different utilities, it suffices to determine the value $\text{PoC}(G,n)$, which only concerns a single utility function.

We now introduce relaxations of envy-freeness \citep{LiptonMaMo04,CaragiannisKuMo16}.

\begin{definition}
An allocation $(M_1,\dots,M_n)$ satisfies
\begin{itemize}
\item \emph{envy-freeness up to $k$ goods (EF$k$)}, for a given nonnegative integer $k$, if for any agents $i,j$,
there exists a (possibly empty) bundle $M'\subseteq M_j$ with $|M'|\leq k$ such that $u_i(M_i)\geq u_i(M_j\setminus M')$.
\item \emph{envy-freeness up to any good (EFX)} if for any agents $i,j$ and any good $g\in M_j$, we have $u_i(M_i)\geq u_i(M_j\setminus\{g\})$.
\end{itemize}
\end{definition}

An EF0 allocation is said to be \emph{envy-free}. It follows immediately from the definition that envy-freeness implies EFX, which in turn implies EF1.
If we do not have to allocate all of the goods, achieving envy-freeness and all of its relaxations is trivial, for example by simply not allocating any good. Hence we will assume that all goods must be allocated when we discuss envy-freeness and its relaxations.

All graphs considered in this paper are assumed to be connected. The \emph{vertex connectivity} (or simply \emph{connectivity}) of a graph $G$ is the minimum number of vertices whose deletion disconnects $G$.
A graph with vertex connectivity at least $k$ is said to be \emph{$k$-connected}. By definition, every connected graph is 1-connected. A 2-connected graph is also called \emph{biconnected}.
A \emph{bipolar ordering} (also called \emph{bipolar numbering}) of a graph is an ordering of its vertices such that every prefix and every suffix of the ordering forms a connected graph.

\section{Maximin Share Fairness}
\label{sec:mms}

In this section, we consider maximin share fairness.
Our goal is to derive bounds on the PoC for arbitrary graphs in the case of two agents, and for paths and stars in the general case.
By Proposition~\ref{prop:mms-CG-approx}, this also yields the optimal MMS approximation for each of these cases.

\subsection{Two Agents}
\label{sec:mms-two}
We first focus on the case of two agents and start by establishing the PoC for all graphs with connectivity~$1$.

\begin{theorem}
\label{thm:mms-two-connectivity-1}
Let $G$ be a graph with connectivity exactly $1$, and let $k\geq 2$ be the maximum number of connected components that can result from deleting a single vertex of $G$. Then $\text{PoC}(G,2) = k$.
\end{theorem}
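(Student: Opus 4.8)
The plan is to prove the two inequalities $\text{PoC}(G,2)\le 1/k$ and $\text{PoC}(G,2)\ge 1/k$ separately. Throughout I write $W=u(M)$ and $P=\text{MMS}(u,2)$, and I record two elementary facts to be used repeatedly: since putting a single vertex against the rest is a feasible partition in the complete graph, we have $P\le W/2$; moreover every vertex satisfies $u(v)\le W-P$, since otherwise in any partition the side not containing $v$ would have value below $P$, forcing $\text{MMS}<P$.

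For the upper bound I exhibit a single utility whose ratio $\text{G-MMS}/\text{MMS}$ equals $1/k$. Fix a cut vertex $v^*$ whose deletion produces $k$ components $C_1,\dots,C_k$, place value $1$ on one vertex of each $C_j$ and value $k$ on $v^*$ (all other vertices $0$), so $W=2k$. The balanced partition with $v^*$ alone against the $k$ unit-value vertices shows $\text{MMS}=k$. For the G-MMS I use the structural observation that in any connected partition the part avoiding $v^*$ must lie inside a single component $C_j$ (the components are pairwise nonadjacent), so it has value at most $1$; taking one full component against the rest attains value $1$, whence $\text{G-MMS}=1$ and the ratio is $1/k$.

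The lower bound---showing $\text{G-MMS}\ge P/k$ for every $u$, which is the main work---I establish by a depth-first-search argument with target $\tau:=P/k$. Root a DFS tree of $G$ at an arbitrary vertex and let $x$ be a deepest vertex whose rooted subtree has value at least $\tau$. Summing the child-subtree values at the root, together with the identity $k\tau=P$, the bound $u(\text{root})\le W-P$, and the fact that the root has at most $k$ children, shows that the root itself cannot be such a vertex; hence $x$ is a proper descendant and every child-subtree of $x$ has value below $\tau$. I then split the children of $x$ into the \emph{isolated} ones, whose subtree forms its own component of $G-x$ and which must therefore be kept on $x$'s side to preserve connectivity, and the \emph{free} ones, which possess a back edge to a proper ancestor. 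Because $G-x$ has at most $k$ components, there are at most $k-1$ isolated children. I form one bundle $S$ from $x$, all isolated child-subtrees, and just enough (possibly zero) free child-subtrees to raise its value to at least $\tau$, and take the other bundle to be the complement.

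Connectivity of both bundles is immediate from the DFS structure: $S$ is connected through $x$, while the complement stays connected because the ancestor side is connected and every excluded child-subtree is free and so reattaches via its back edge. The crux is the value window. Using $u(x)\le W-P$ and at most $k-1$ isolated children, the forced part already has value below $W-\tau$; since all child-subtrees of $x$ have value below $\tau$, adding free children raises the value in increments smaller than $\tau$ and first reaches the interval $[\tau,2\tau)$, and here the hypothesis $k\ge 2$ is exactly what guarantees $2\tau\le W-\tau$, so $u(S)$ lands in $[\tau,W-\tau]$ and both bundles have value at least $\tau=P/k$. I expect this value-window bookkeeping---in particular the interplay of the $k-1$ isolated-child bound with $k\ge 2$---to be the delicate step; it is also precisely where connectivity exactly $1$ enters, since for $2$-connected graphs ($k=1$) the inequality $2\tau\le W-\tau$ fails, consistent with the larger price of connectivity for those graphs.
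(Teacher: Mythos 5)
Your proof is correct, and while your upper-bound instance is identical to the paper's, your lower bound takes a genuinely different route. The paper argues by contradiction: normalizing $u(M)=1$, it assumes no connected bipartition gives both parts value at least $1/(2k)$, walks along an arbitrary spanning tree to a centroid-type vertex $v$ all of whose subtrees have value at most $1/2$, merges subtrees joined by non-tree edges into at most $k$ components each of value below $1/(2k)$, deduces $u(v)>1/2$ so that $\text{MMS}(u,2)=1-u(v)$, and finally pairs a largest component against the rest. You instead give a direct construction on a DFS tree with threshold $\tau=\text{MMS}(u,2)/k$: the deepest vertex $x$ whose subtree has value at least $\tau$ is a proper descendant of the root (using $u(\text{root})\le u(M)-\text{MMS}(u,2)$ and the at-most-$k$-children bound), the isolated/free classification of $x$'s children via back edges replaces the paper's merging step (valid because DFS trees of undirected graphs have no cross edges), and the greedy assembly lands $u(S)$ in $[\tau,\,u(M)-\tau]$. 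One small presentational point: when the forced part $\{x\}\cup(\text{isolated subtrees})$ already has value at least $\tau$, the value of $S$ need not lie in $[\tau,2\tau)$ as your wording suggests, but your own bound $u(F)<u(M)-\tau$ covers that case, so nothing breaks. As for what each approach buys: yours is constructive rather than by contradiction and isolates exactly where connectivity $1$ (i.e., $k\ge 2$) enters, via $3\tau\le u(M)$. The paper's proof, by contrast, uses only polynomial-time-computable thresholds ($u(M)/(2k)$ in the first phase, then $1-u(v)$ in the fallback), which is why it immediately yields the polynomial-time $1/k$-approximation algorithm remarked after the theorem; your threshold $\text{MMS}(u,2)/k$ is NP-hard to compute exactly, so your argument establishes the existence statement $\text{PoC}(G,2)\ge 1/k$ (which is all the theorem requires) but would need an additional idea, such as the paper's two-phase thresholding, to become an efficient algorithm.
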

\begin{proof}
First, we show that the PoC of $G$ is at least $k$.
Let $v$ be a vertex of $G$ whose deletion results in $k$ components.
Consider a utility function with value $k$ for $v$, value $1$ for an arbitrary vertex in each of the $k$ components, and value $0$ for all other vertices.
The MMS is $k$. In any connected bipartition, the part that does not contain $v$ is a subset of one of the $k$ components, so this part has value at most $1$. Hence the PoC is at least $k$.

Next, we show that the PoC of $G$ is at most $k$. Take an arbitrary utility function $u$, and assume without loss of generality that $u(M)=1$.
Since $\text{MMS}(u,2)\leq u(M)/2 = 1/2$, the desired claim follows if there is a connected bipartition such that both parts have value at least $1/(2k)$.
Assume that no such bipartition exists.

Pick a spanning tree $T$ of $G$, and let $v$ be an arbitrary vertex. The removal of $v$ results in a number of subtrees of $T$; clearly, at most one of these subtrees can have value more than $1/2$.
If such a subtree exists, we move from $v$ towards the adjacent vertex in that subtree and repeat the procedure with the new center vertex.
Note that we will never traverse back an edge---otherwise there are two disjoint subtrees with value more than $1/2$ each, contradicting $u(M)=1$.
Since the tree is finite, we eventually reach a vertex $v$ such that all subtrees $T_1,\dots,T_r$ resulting from the removal of $v$ have value at most $1/2$ each.

Since $T_i$ and $T\setminus T_i$ are both connected for every $i$, by our earlier assumption, each of the subtrees $T_1,\dots,T_r$ has value less than $1/(2k)$.
Recall that in the original graph $G$, removing $v$ can result in at most $k$ components. This means that if $r>k$, the $r$ subtrees must be connected by some edges not belonging to $T$.
If subtrees $T_i$ and $T_j$ are connected by such an edge, we can merge $T_i$ and $T_j$ into one component.
Note that $T_i\cup T_j$ has value less than $1/(2k)+1/(2k) = 1/k \leq 1/2$, so since  $T_i\cup T_j$ and $T\setminus(T_i\cup T_j)$ are both connected, $T_i\cup T_j$ must again have value less than $1/(2k)$.
Our procedure can be repeated until the components can no longer be merged, at which point we are left with at most $k$ components.
Each of these components has value less than $1/(2k)$, which implies that $v$ has value more than $1-k/(2k) = 1/2$.
In this case, a bipartition with $v$ as one part is an MMS partition, so $\text{MMS}(u,2) = 1 - u(v)$.
On the other hand, at least one of the (at most) $k$ components has value at least $(1-u(v))/k$, which is $1/k$ of the MMS.
We can take a connected bipartition with such a component as one part and obtain the desired result.
\end{proof}

We remark that the proof of Theorem~\ref{thm:mms-two-connectivity-1} also yields a polynomial-time algorithm for computing a bipartition such that both parts have value at least $1/k$ of the MMS. To compute an allocation between two agents such that both agents receive $1/k$ of their MMS, we simply let the first agent compute a desirable bipartition, and let the second agent choose the part that she prefers. Since $\text{MMS}(u,2)\leq u(M)/2$, the second agent is always satisfied.

Before we move on to results about graphs with higher connectivity, we show the following lemma, which will help simplify our subsequent proofs. The lemma implies that in order to prove an upper bound on the PoC in the case of two agents, it suffices to establish the bound for utility functions such that in an MMS partition, the two parts are of equal value.

\begin{lemma}
\label{lem:mms-two-equal}
For $n=2$ and any graph $G$, the PoC remains the same if instead of taking the supremum in Definition~\ref{def:PoC}
\[\sup_{u}\frac{\text{MMS}(u,2)}{\text{G-MMS}(G,u,2)}\]
over all utility functions $u$, we only take the supremum over all utility functions $u$ such that in any MMS partition according to $u$, the two parts are of equal value.
\end{lemma}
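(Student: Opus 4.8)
The plan is to prove the two inequalities between the two infima separately. One direction is immediate: the utility functions whose MMS partitions all have equal value form a subset of all additive utility functions, so the infimum over this subset is at least $\text{PoC}(G,2)=\inf_u \text{G-MMS}(G,u,2)/\text{MMS}(u,2)$. The content lies in the reverse direction, for which it suffices to show that every additive $u$ admits a utility function $u'$ in the restricted class with $\text{G-MMS}(G,u',2)/\text{MMS}(u',2)\le \text{G-MMS}(G,u,2)/\text{MMS}(u,2)$; taking the infimum over $u$ then exhibits $\text{PoC}(G,2)$ as an upper bound for the restricted infimum, and combining the two directions yields equality.

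To build $u'$ from a given $u$, I would first fix an MMS partition $(A,B)$ of $u$ and assume without loss of generality that $u(A)\ge u(B)$, so that $\text{MMS}(u,2)=u(B)$. I then define $u'$ by leaving every good in $B$ untouched and rescaling every good in $A$ by the factor $u(B)/u(A)\in(0,1]$. The result is again additive and nonnegative, and satisfies $u'(A)=u'(B)=u(B)$, hence $u'(M)=2u(B)$.

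The key verifications are threefold. First, since $u'(M)=2u(B)$ we have $\text{MMS}(u',2)\le u(B)$, while the partition $(A,B)$ attains $\min(u'(A),u'(B))=u(B)$; hence $\text{MMS}(u',2)=u(B)=\text{MMS}(u,2)$, and because the two parts of \emph{any} MMS partition of $u'$ sum to $2u(B)$ with minimum $u(B)$, both parts must equal $u(B)$, so $u'$ indeed lies in the restricted class. Second, because $u'(g)\le u(g)$ for every good $g$ (we only decreased values on $A$), every connected bipartition has no larger minimum value under $u'$ than under $u$, whence $\text{G-MMS}(G,u',2)\le \text{G-MMS}(G,u,2)$; note that this step never alters the graph, so connectivity is respected throughout. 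Combining these two facts gives the desired inequality on the ratios.

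The main obstacle is the middle claim: that after rescaling, \emph{every} MMS partition of $u'$, and not merely the chosen $(A,B)$, has equal parts. The clean way around it is exactly the observation that equalizing the two bundle values forces the total to be twice the MMS, which pins both parts of any optimal partition to the common value $u(B)$; the monotonicity of $\text{G-MMS}$ under pointwise domination of utilities is then routine. I would also flag the boundary cases $u(M)=0$ and $u(B)=0$, where the ratio equals $1$ by the $0/0$ convention and $u$ already belongs to the restricted class (all parts having value $0$), so no modification is needed there.
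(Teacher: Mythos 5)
Your proof is correct and takes essentially the same approach as the paper's: both modify $u$ by decreasing the values of goods in the heavier part of an MMS partition so that the two parts become equal, then observe that $\text{MMS}(u',2)=\text{MMS}(u,2)$ while $\text{G-MMS}(G,u',2)\leq\text{G-MMS}(G,u,2)$ (you argue this via pointwise domination, the paper via an equivalent value-threshold computation), so the ratio does not increase. One minor slip: when $u(B)=0<u(A)$, the original $u$ does \emph{not} lie in the restricted class (the MMS partition $(A,B)$ has unequal parts), but this case is harmless since the ratio there equals $1$ and any member of the nonempty restricted class witnesses the required inequality.
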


\begin{proof}
Let $u$ be an arbitrary utility function, and suppose that in an MMS partition, the two parts are of value $x\leq y$. We have $\text{MMS}(u,2)=x$. Let $\alpha := \frac{\text{MMS}(u,2)}{\text{G-MMS}(G,u,2)}$.
In any connected bipartition, each part either has value at most $x/\alpha$, or at least $(x+y) - x/\alpha = y + (1-1/\alpha)x$.

Consider a modified utility function $u'$ where in the MMS partition above, we arbitrarily decrease the values of some goods in the part with value $y$ so that the part has value $x$.
It is clear that $\text{MMS}(u',2) = x$.
With respect to $u'$, in any connected bipartition, each part either has value at most $x/\alpha$, or at least $y + (1-1/\alpha)x - (y-x) = (2-1/\alpha)x$.
This means that $\text{G-MMS}(G,u',2)\leq x/\alpha = \text{MMS}(u',2)/\alpha$, or $\frac{\text{MMS}(u',2)}{\text{G-MMS}(G,u',2)}\geq \alpha$.
Since the two parts in any MMS partition according to $u'$ are of equal value, the proof is complete.
\end{proof}

Next, we consider biconnected graphs, i.e., graphs with connectivity at least $2$. We show that the PoC is at most $4/3$ for all such graphs---this is in contrast to graphs with connectivity $1$, which have PoC at least $2$ according to Theorem~\ref{thm:mms-two-connectivity-1}.
For this result, we will use a property of biconnected graphs which we state in the following proposition.
An \emph{open ear decomposition} of a graph consists of a cycle as the first \emph{ear} and a sequence of paths as subsequent ears such that in each path, the first and last vertices (which must be different) belong to previous ears while the remaining vertices do not.
See Figure~\ref{fig:ear} for an illustration.

\begin{figure}[!ht]
\centering
\begin{tikzpicture}[scale=0.95]
\draw [teal] (4.8,3.2) ellipse (2cm and 1.5cm);
\draw [red,fill=white] (4,2) ellipse (2cm and 1cm);
\draw [blue,fill=white] (3,3) ellipse (2cm and 1cm);
\draw [fill] (3.15,4) circle [radius=0.09];
\draw [fill] (4.97,2.86) circle [radius=0.09];
\draw [fill] (2.02,2.13) circle [radius=0.09];
\draw [fill] (3.62,2.05) circle [radius=0.09];
\draw [fill] (1.22,3.44) circle [radius=0.09];
\draw [fill] (6,2.04) circle [radius=0.09];
\draw [fill] (4.8,1.09) circle [radius=0.09];
\draw [fill] (3.3,1.07) circle [radius=0.09];
\draw [fill] (5,4.69) circle [radius=0.09];
\draw [fill] (6.62,3.8) circle [radius=0.09];
\node [blue] at (3,3) {$G_1$};
\node [red] at (4.75,1.8) {$G_2$};
\node [teal] at (5.6,3.8) {$G_3$};
\node at (1.0,3.7) {$g_1$};
\node at (2.95,4.3) {$h_1$};
\node at (4.65,2.87) {$g_2$};
\node at (3.62,1.75) {$h_2$};
\node at (1.75,1.95) {$h_3$};
\end{tikzpicture}
\caption{An example of an open ear decomposition with ears $G_1$, $G_2$, and $G_3$. The vertex labels are used in the proof of Theorem~\ref{thm:mms-two-connectivity-2}.}
\label{fig:ear}
\end{figure}

\begin{proposition}[\citep{Whitney32a,Whitney32b}]
\label{prop:biconnected}
In a biconnected graph with at least three vertices, any two vertices belong to a common cycle, and there exists an open ear decomposition.
Moreover, we may choose any cycle in the graph as the first ear.\footnote{There is also a linear-time algorithm for computing an open ear decomposition with an arbitrary cycle as the first ear \citep{Schmidt13}.}
\end{proposition}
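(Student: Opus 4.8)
The plan is to prove the three assertions of the proposition in turn, drawing on the standard connectivity toolkit. The first assertion---that any two distinct vertices $u,v$ of a biconnected graph $G$ on at least three vertices lie on a common cycle---is essentially Menger's theorem. By $2$-connectivity there exist two internally vertex-disjoint $u$--$v$ paths, and since $G$ is simple on at least three vertices at least one of them has length $\geq 2$; concatenating the two paths therefore yields a cycle (of length $\geq 3$) through both $u$ and $v$. Formally one invokes the vertex form of Menger, or proves the two-path statement directly by induction on $|V(G)|$, which is the usual route to Whitney's characterization of $2$-connectivity.

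For the open ear decomposition I would proceed greedily. Start with the prescribed cycle $C$ as the first ear; this settles the ``moreover'' clause at once, and is legitimate because any cycle on at least three vertices is itself biconnected (deleting one of its vertices leaves a path), so it gives a valid base case. I then maintain a biconnected subgraph $H \subseteq G$ with $C \subseteq H$, initialized to $H = C$. As long as $H \neq G$, I append one further ear---an $x$--$y$ path $P$ whose endpoints $x,y$ are \emph{distinct} vertices of $H$ and whose internal vertices (if any) avoid $V(H)$---and replace $H$ by $H \cup P$. Since each ear adds at least one new edge and $G$ is finite, the process terminates with $H = G$, and the sequence $C, P_1, P_2, \dots$ is the required decomposition; openness holds automatically because I insist on $x \neq y$ at every step.

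The crux, and the main obstacle, is the single-step lemma: \emph{if $H$ is a proper biconnected subgraph of $G$, then an open ear $P$ exists and $H \cup P$ is again biconnected}. Existence of $P$ splits into two cases. If some edge of $G$ joins two vertices of $H$ but does not lie in $H$, that edge is a length-one open ear. Otherwise every edge between $H$-vertices already belongs to $H$, so $V(H) \neq V(G)$; pick $w \in V(G) \setminus V(H)$ adjacent to some $x \in V(H)$ (such a $w$ exists since $G$ is connected), and use that $G - x$ is still connected to take a shortest path in $G - x$ from $w$ to $V(H) \setminus \{x\}$. This path ends at some $y \neq x$, has interior outside $H$ by minimality, and prepending the edge $xw$ produces an open ear with distinct endpoints $x,y$.

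The delicate part is verifying that $H \cup P$ stays biconnected, i.e.\ that inserting $P$ creates no cut vertex; this is where I expect the real work. I would establish it by showing $(H \cup P) - z$ is connected for every vertex $z$. If $z$ is an internal vertex of $P$, then $H$ survives intact and the two resulting arcs of $P - z$ each remain attached to $H$ through a distinct endpoint. If $z \in V(H)$, then $H - z$ is connected because $H$ is biconnected, and since $P$ has two distinct endpoints at most one equals $z$, so the interior of $P$ stays attached to $H - z$ through the surviving endpoint. Combining this invariant with the termination argument completes the ear decomposition, and together with the common-cycle statement this finishes the proposition.
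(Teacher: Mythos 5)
The paper gives no proof of this proposition: it is imported as a classical result, with the citations to Whitney's 1932 papers standing in place of an argument (and the footnote citing Schmidt for the linear-time algorithmic version). Your proposal is therefore a self-contained derivation rather than an alternative to any proof in the paper, and it is correct---it is essentially the standard textbook proof of Whitney's ear-decomposition theorem. The common-cycle claim follows from Menger's theorem as you say: two internally disjoint $u$--$v$ paths in a simple graph cannot both equal the single edge $uv$, so their union is a cycle through $u$ and $v$; and since Menger has proofs independent of Whitney's characterization, there is no circularity in invoking it. The greedy augmentation is also right: your two cases for producing an ear (a chord of $H$; otherwise an edge $xw$ leaving $H$, extended by a shortest path in $G-x$ from $w$ to $V(H)\setminus\{x\}$, whose minimality keeps its interior outside $H$) are exhaustive, because if every edge of $G$ between $H$-vertices lay in $H$ and $V(H)=V(G)$ held simultaneously, then $H$ would equal $G$. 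The biconnectivity-preservation check is sound as well; the one fact you use silently is that $H-z$ being connected requires $|V(H)|\geq 3$, which holds throughout because $H$ always contains the initial cycle. Termination (each ear adds at least one new edge) and the observation that seeding the process with the prescribed cycle settles the ``moreover'' clause complete the argument. The only component of the statement your proof does not address is the footnote's linear-time computability claim, but that is an algorithmic refinement the paper sources separately and uses only in Appendix~\ref{app:algo-connectivity-2}.
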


\begin{theorem}
\label{thm:mms-two-connectivity-2}
Let $G$ be a biconnected graph. Then $\text{PoC}(G,2) \leq 4/3$.
\end{theorem}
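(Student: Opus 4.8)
The plan is to prove the bound directly: for every additive $u$ I will exhibit a connected bipartition in which both parts are worth at least $\tfrac34$ of the MMS. By Lemma~\ref{lem:mms-two-equal} it suffices to treat utilities whose MMS partition splits $M$ into two equal halves; normalizing $u(M)=1$, this means $\text{MMS}(u,2)=\tfrac12$, and the goal becomes a connected bipartition $(M_1,M_2)$ with $u(M_1),u(M_2)\in[\tfrac38,\tfrac58]$ (since the two values sum to $1$, asking $\min\ge\tfrac38$ is the same as asking both to lie in this interval). Note that the existence of an equal halving forces $u(g)\le\tfrac12$ for every good $g$.

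First I would dispose of the easy case. If some good $g$ has $u(g)\in[\tfrac38,\tfrac12]$, then because $G$ is biconnected the graph $G-g$ is connected, so $(\{g\},M\setminus\{g\})$ is a connected bipartition; its values are $u(g)\in[\tfrac38,\tfrac12]$ and $1-u(g)\in[\tfrac12,\tfrac58]$, both in range. Hence I may assume $u(g)<\tfrac38$ for all $g$. For this remaining case I would run a discrete intermediate-value argument along a vertex ordering in which every prefix and every suffix is connected; such a bipolar ordering exists precisely because $G$ is biconnected (and comes, together with the richer cut structure needed below, from the open ear decomposition of Proposition~\ref{prop:biconnected}). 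Adding vertices one at a time changes the prefix value by less than $\tfrac38$, so the first prefix whose value reaches $\tfrac12$ either already lies in $[\tfrac12,\tfrac58]$, in which case that prefix and its connected complementary suffix solve the problem, or it overshoots $\tfrac58$. The latter can happen only when a single good $v$ of value in $(\tfrac14,\tfrac38)$ straddles the balance point, and then the prefix $L$ preceding $v$ and the suffix $R$ following $v$ each have value in $(\tfrac14,\tfrac38)$ as well. This leaves the genuinely hard configuration: three connected blocks $L,\{v\},R$, each worth strictly between $\tfrac14$ and $\tfrac38$, every pairwise union exceeding $\tfrac58$, so that no split into whole blocks is balanced.

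The main obstacle is to break this three-block deadlock while keeping both parts connected, and this is where I expect the real work to lie. The equal-halving hypothesis helps: since no union of whole blocks has value $\tfrac12$, the balanced subset must cut through $L$ or $R$, so one of these blocks can be subdivided. Sliding a connected set that contains $v$ outward into $L$ (or $R$) moves its value continuously from $u(v)<\tfrac38$ up past $\tfrac58$, so by the small-increment property it passes through $[\tfrac38,\tfrac58]$; the difficulty is that the complement of such a ``middle'' set is a prefix-plus-suffix and need not be connected in a bare linear ordering. I would resolve this using the extra chords guaranteed by the ear decomposition, that is, choosing the ear structure and the cut so that the complement remains connected in $G$, which is the step that genuinely requires biconnectivity rather than mere existence of a bipolar ordering. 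Carrying out this rebalancing with provable connectivity of \emph{both} parts is, I expect, the crux of the entire argument.
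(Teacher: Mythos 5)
Your setup is sound and matches the paper's: the reduction via Lemma~\ref{lem:mms-two-equal} to $\text{MMS}(u,2)=u(M)/2$, the normalization, and the discrete intermediate-value argument along a bipolar ordering are exactly the intended skeleton. The easy case (a single good of value in $[\tfrac38,\tfrac12]$ split off on its own, using biconnectivity of $G-g$) is also correct. But the proof is not complete: everything hinges on the ``three-block deadlock'' ($L,\{v\},R$ each of value in $(\tfrac14,\tfrac38)$), and there you only state an intention --- ``resolve this using the extra chords guaranteed by the ear decomposition'' --- without exhibiting the cut or proving that both sides stay connected. As you yourself note, this is the crux, and the plan you sketch (subdividing $L$ or $R$ so that a set containing $v$ lands in $[\tfrac38,\tfrac58]$ while its complement, a prefix-plus-suffix, remains connected) has no obvious implementation: the chords of an arbitrary ear decomposition give no general guarantee that such a complement is connected, so this step would need a genuinely new argument.

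The missing idea is that the deadlock should be \emph{prevented}, not broken: you get to choose the bipolar ordering, and you should choose its two endpoints to be the heavy goods. Since $\text{MMS}(u,2)=\tfrac12$ with two equal halves, each half of an MMS partition contains at most one good of value exceeding $\tfrac14$, so there are at most two such heavy goods $g_1,g_2$ in total. By Proposition~\ref{prop:biconnected}, $g_1$ and $g_2$ lie on a common cycle, and an open ear decomposition can be started from that cycle; from it one builds a bipolar ordering that \emph{begins} with $g_1$ and \emph{ends} with $g_2$. Now every good strictly inside the ordering has value at most $\tfrac14$, so growing the prefix from $\{g_1\}$ increases its value in steps of at most $\tfrac14$ and must land in $[\tfrac38,\tfrac58]$ before $g_2$ is ever added (the prefix excluding $g_2$ already has value $1-u(g_2)\geq\tfrac12$). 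Your problematic middle vertex $v$ is heavy, so under this choice of ordering it sits at an end and the overshoot case never arises. In short: your argument runs the sweep on an arbitrary bipolar ordering and then tries to repair the failure; the correct proof uses the freedom in Whitney's theorem to pin the (at most two) heavy goods at the two ends, after which your own intermediate-value sweep closes the argument with no hard case left.
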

\begin{proof}
The case $m\leq 2$ is trivial since $n=2$ and the PoC is $1$ in this case, so consider $m\geq 3$. Take an arbitrary utility function $u$, and assume without loss of generality that $u(M)=1$.
By Lemma~\ref{lem:mms-two-equal}, we may also assume that $\text{MMS}(u,2)=1/2$.
Call a good \emph{heavy} if it has value strictly more than $1/4$.
Since there can be at most one heavy good in each part of an MMS partition, there are at most two heavy goods in total.
Pick goods $g_1$ and $g_2$ so that together they include all of the heavy goods.
By Proposition~\ref{prop:biconnected}, there is a cycle in $G$ containing $g_1$ and $g_2$, and an open ear decomposition with this cycle as the first ear.

We will construct a bipolar ordering of the vertices that begins with $g_1$ and ends with $g_2$.
Assume that the first ear is a cycle with vertex order $$g_1,h_1,\dots,h_i,g_2,h_{i+1},\dots,h_j;$$
see Figure~\ref{fig:ear} for an illustration.
We arrange these vertices as $$g_1,h_1,h_2,\dots,h_i,h_j,h_{j-1},\dots,h_{i+1},g_2.$$
For each subsequent ear, suppose that the two vertices belonging to previous ears are $h$ and $h'$, where $h$ appears before $h'$ in the current ordering.
We insert the remaining vertices on the path from $h$ to $h'$ into the ordering directly after $h$, following the same order as in the path.
One can check (for example, by induction on the number of ears) that the resulting ordering is a bipolar ordering beginning with $g_1$ and ending with $g_2$.

Consider first the case where $\max\{u(g_1),u(g_2)\}> 1/2$; assume without loss of generality that $u(g_1)>1/2$. In this case, $\text{MMS}(u,2)=1-u(g_1)<1/2$, contradicting the assumption that $\text{MMS}(u,2)=1/2$.

Assume now that $\max\{u(g_1),u(g_2)\}\leq 1/2$, and recall that $u(g)\leq 1/4$ for all $g\not\in\{g_1,g_2\}$.
Since $\text{MMS}(u,2) = 1/2$, it suffices to find a connected bipartition such that both parts have value at least $3/8$.
Let $S=\{g_1\}$, so $u(S)\leq 1/2$. We add one good at a time to $S$ following the bipolar ordering until $u(S)\geq 1/2$. Since $u(g_2)\leq 1/2$, we stop (not necessarily directly) before we add $g_2$.
Moreover, since each good besides $g_1$ and $g_2$ has value at most $1/4$, at some point during this process we must have $3/8\leq u(S)\leq 5/8$. In the bipartition with $S$ as one part, both parts are connected and have value at least $3/8$, completing the proof.
\end{proof}

Unlike for Theorem~\ref{thm:mms-two-connectivity-1}, the proof of Theorem~\ref{thm:mms-two-connectivity-2} does not directly lead to a polynomial-time algorithm for computing an allocation such that both agents receive at least $3/4$ of their MMS.
The problematic step is when we apply Lemma~\ref{lem:mms-two-equal}, since computing the MMS value is NP-hard by a straightforward reduction from the partition problem.
\citet{Woeginger97} showed that a PTAS for the problem exists---using his PTAS, we can obtain a $(3/4-\epsilon)$-approximation algorithm that runs in polynomial time for any constant $\epsilon > 0$.
Nevertheless, we show in Appendix~\ref{app:algo-connectivity-2} that by building upon the proof of Theorem~\ref{thm:mms-two-connectivity-2}, we can also achieve a polynomial-time $3/4$-approximation algorithm.

In light of Theorems~\ref{thm:mms-two-connectivity-1} and \ref{thm:mms-two-connectivity-2}, it is tempting to believe that for graphs with connectivity $3$ or higher, the PoC is strictly less than $4/3$.
Perhaps surprisingly, this is not the case: a counterexample is the wheel graph shown in Figure~\ref{fig:wheel}, which has connectivity $3$.
In the instance shown in the figure, the MMS is $4$ while the G-MMS is $3$, so the PoC of the graph is at least $4/3$ (and by Theorem~\ref{thm:mms-two-connectivity-2}, exactly $4/3$).
The key point of this example is that the graph cannot be partitioned into two connected subgraphs in such a way that one subgraph contains the vertices with value $1$ and $3$, while the other subgraph contains the two vertices with value $2$.
This observation allows us to generalize the counterexample. A graph is said to be \emph{$2$-linked} if for any two disjoint pairs of vertices $(a,b)$ and $(c,d)$, there exist two vertex-disjoint paths, one from $a$ to $b$ and the other from $c$ to $d$.

\begin{figure}[!ht]
\centering
\begin{tikzpicture}[scale=0.85]
\draw (2,2) circle [radius = 2];
\draw [fill] (2,0) circle [radius = 0.1];
\draw [fill] (2,4) circle [radius = 0.1];
\draw [fill] (0,2) circle [radius = 0.1];
\draw [fill] (4,2) circle [radius = 0.1];
\draw [fill] (3.41,3.41) circle [radius = 0.1];
\draw [fill] (0.59,3.41) circle [radius = 0.1];
\draw [fill] (3.41,0.59) circle [radius = 0.1];
\draw [fill] (0.59,0.59) circle [radius = 0.1];
\draw [fill] (2,2) circle [radius = 0.1];
\draw (2,2) -- (2,0);
\draw (2,2) -- (2,4);
\draw (2,2) -- (0,2);
\draw (2,2) -- (4,2);
\draw (2,2) -- (3.41,3.41);
\draw (2,2) -- (0.59,3.41);
\draw (2,2) -- (3.41,0.59);
\draw (2,2) -- (0.59,0.59);
\node at (4.3,2) {$3$};
\node at (2,4.3) {$1$};
\node at (-0.3,2) {$0$};
\node at (2,-0.3) {$0$};
\node at (3.61,3.61) {$2$};
\node at (0.39,3.61) {$0$};
\node at (3.61,0.39) {$2$};
\node at (0.39,0.39) {$0$};
\node at (2.15,1.6) {$0$};
\end{tikzpicture}
\caption{An instance showing that the PoC of a wheel graph is at least $4/3$.}
\label{fig:wheel}
\end{figure}
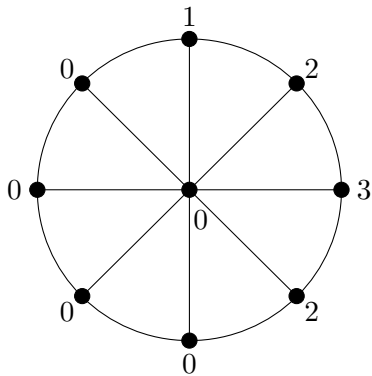

\begin{proposition}
\label{prop:mms-2-2-linked}
Let $G$ be a graph that is not $2$-linked. Then $\text{PoC}(G,2) \geq 4/3$.
\end{proposition}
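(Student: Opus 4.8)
The plan is to generalize the wheel-graph counterexample directly. Since $G$ is not $2$-linked, by definition there exist two disjoint pairs of vertices $(a,b)$ and $(c,d)$ --- four distinct vertices, so in particular $m\geq 4$ and the remaining construction is well defined --- such that $G$ contains no two vertex-disjoint paths, one joining $a$ to $b$ and the other joining $c$ to $d$. I would use these four vertices to place the values $3,1,2,2$ exactly as in Figure~\ref{fig:wheel}: set $u(a)=3$, $u(b)=1$, $u(c)=u(d)=2$, and $u(v)=0$ for every other vertex.

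First I would pin down the MMS. The total value is $u(M)=8$, so $\text{MMS}(u,2)\leq u(M)/2 = 4$; and the unconstrained partition $\{a,b\}\mid\{c,d\}$ achieves $\min\{4,4\}=4$, whence $\text{MMS}(u,2)=4$.

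Next I would bound the G-MMS from above by $3$. The crux is the observation that the only subsets of $\{a,b,c,d\}$ whose total value is $4$ are $\{a,b\}$ and $\{c,d\}$. Consequently, in any connected bipartition $(M_1,M_2)$ with both parts of value at least $4$, both parts must have value exactly $4$ (as the total is $8$), which forces $\{a,b\}\subseteq M_1$ and $\{c,d\}\subseteq M_2$ after possibly swapping the parts. Because $M_1$ and $M_2$ are connected and vertex-disjoint, $M_1$ contains a path from $a$ to $b$ and $M_2$ a path from $c$ to $d$, and these two paths are vertex-disjoint --- contradicting the choice of $a,b,c,d$. Hence no connected bipartition has both parts of value at least $4$; since all values are integers, some part always has value at most $3$, giving $\text{G-MMS}(G,u,2)\leq 3$. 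Combining the two bounds yields $\text{PoC}(G,2)\leq \text{G-MMS}(G,u,2)/\text{MMS}(u,2)\leq 3/4$.

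I expect the only delicate point to be the combinatorial claim driving the separation: one must verify that the integer values $3,1,2,2$ admit the single balanced split $\{3,1\}\mid\{2,2\}$, so that equalizing both parts necessarily groups $a$ with $b$ and $c$ with $d$. Once this is established, the connectivity of the two parts immediately furnishes the forbidden pair of vertex-disjoint paths, and the contradiction with non-$2$-linkedness closes the argument.
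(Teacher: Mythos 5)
Your proposal is correct and follows essentially the same route as the paper: both construct the utility function placing values $2,2$ on one unlinkable pair and $3,1$ on the other (zero elsewhere), show $\text{MMS}(u,2)=4$, and argue that a connected bipartition with both parts of value $4$ would yield the forbidden vertex-disjoint paths, forcing $\text{G-MMS}(G,u,2)\leq 3$. Your swap of which pair receives $\{3,1\}$ versus $\{2,2\}$ is immaterial since the non-linkedness condition is symmetric in the two pairs, and your explicit verification that $\{3,1\}$ and $\{2,2\}$ are the only balanced splits is the same integrality argument the paper uses implicitly.
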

\begin{proof}
Suppose that $G$ is not $2$-linked, and let $(a,b)$ and $(c,d)$ be disjoint pairs of vertices such that there do not exist two disjoint paths, one from $a$ to $b$ and the other from $c$ to $d$.
Consider a utility function $u$ such that $u(a)=u(b)=2$, $u(c)=3$, $u(d)=1$, and $u(g)=0$ for every other vertex $g$. We have $\text{MMS}(u,2) = 4$.
On the other hand, the graph cannot be partitioned into two connected subgraphs in such a way that one subgraph contains $a$ and $b$ while the other subgraph contains $c$ and $d$---indeed, such a partition would give rise to two disjoint paths that cannot exist by our assumption.
This means that $\text{G-MMS}(G,u,2) \leq 3$. Hence $\text{PoC}(G,2) \geq 4/3$.
\end{proof}

Every graph with connectivity at most $2$ is not $2$-linked,\footnote{Indeed, given such a graph, let $a,b$ be two vertices whose removal disconnects the graph, and let $c,d$ be vertices from distinct components in the resulting graph. Then any path between $c$ and $d$ must go through either $a$ or $b$.} and Figure~\ref{fig:wheel} shows an example of a $3$-connected graph that also does not satisfy the property.
In fact, \citet{Mezaros15} constructed a $5$-connected graph that still fails to be $2$-linked!\footnote{On the other hand, a $6$-connected graph is always $2$-linked \citep{Jung70}.} Combining these facts with Theorem~\ref{thm:mms-two-connectivity-2} yields the following corollaries:

\begin{corollary}
\label{cor:mms-two-connectivity-2}
For every graph $G$ with connectivity $2$, $\text{PoC}(G,2)=4/3$.
\end{corollary}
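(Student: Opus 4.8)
The plan is to sandwich $\text{PoC}(G,2)$ between the two bounds already established. The lower bound is immediate: a graph with connectivity exactly $2$ is in particular biconnected, so Theorem~\ref{thm:mms-two-connectivity-2} gives $\text{PoC}(G,2)\geq 3/4$ with no additional work.

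For the upper bound I would invoke Proposition~\ref{prop:mms-2-2-linked}, which reduces the task to verifying that $G$ is not $2$-linked. This is the only step that requires an argument, and it is the short combinatorial observation indicated in the footnote. Since $G$ has connectivity $2$, there is a pair of vertices $\{a,b\}$ whose deletion disconnects $G$; I then choose $c$ and $d$ to lie in two distinct components of $G\setminus\{a,b\}$. Every $c$--$d$ path must pass through $a$ or $b$, so no such path can be vertex-disjoint from an $a$--$b$ path. Hence the disjoint pairs $(a,b)$ and $(c,d)$ witness that $G$ fails to be $2$-linked, and Proposition~\ref{prop:mms-2-2-linked} yields $\text{PoC}(G,2)\leq 3/4$.

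Combining the two inequalities gives $\text{PoC}(G,2)=3/4$. I do not anticipate a genuine obstacle, since the corollary is a direct synthesis of Theorem~\ref{thm:mms-two-connectivity-2} and Proposition~\ref{prop:mms-2-2-linked}; the only point deserving care is ensuring that $a,b,c,d$ are four distinct vertices so that $(a,b)$ and $(c,d)$ form genuinely disjoint pairs, which holds because each of the two chosen components contributes a vertex lying outside the cut $\{a,b\}$.
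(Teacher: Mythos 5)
Your proposal is correct and matches the paper's own argument: the lower bound is exactly Theorem~\ref{thm:mms-two-connectivity-2}, and the upper bound is Proposition~\ref{prop:mms-2-2-linked} together with the same cut-vertex-pair observation the paper makes (in a footnote) to show that graphs of connectivity at most $2$ are never $2$-linked. Your added care about the distinctness of $a,b,c,d$ is sound and implicit in the paper's reasoning.
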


\begin{corollary}
For some graph $G$ with connectivity $5$, $\text{PoC}(G,2)=4/3$.
\end{corollary}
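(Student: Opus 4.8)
The plan is to combine two ingredients that are both already available in the excerpt. First, by Theorem~\ref{thm:mms-two-connectivity-2}, every biconnected graph $G$ satisfies $\text{PoC}(G,2)\geq 3/4$; in particular any $5$-connected graph, being biconnected, inherits this lower bound. So to exhibit a $5$-connected graph with $\text{PoC}(G,2)=3/4$ it suffices to produce a $5$-connected graph whose PoC is at most $3/4$, and then the two bounds pinch to give exactly $3/4$. This is precisely the structure used in Corollary~\ref{cor:mms-two-connectivity-2} for the connectivity-$2$ case.

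Second, for the upper bound I would invoke Proposition~\ref{prop:mms-2-2-linked}: any graph that fails to be $2$-linked has $\text{PoC}(G,2)\leq 3/4$. The paper has already flagged the key fact that makes the corollary work---\cite{Mezaros15} constructed a $5$-connected graph that is not $2$-linked. So the entire argument is: take $G$ to be M\'{e}sz\'{a}ros's $5$-connected non-$2$-linked graph. By Proposition~\ref{prop:mms-2-2-linked}, $\text{PoC}(G,2)\leq 3/4$. Since $5$-connectivity implies biconnectivity, Theorem~\ref{thm:mms-two-connectivity-2} gives $\text{PoC}(G,2)\geq 3/4$. Therefore $\text{PoC}(G,2)=3/4$, as desired.

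There is essentially no obstacle here beyond correctly citing the external construction; the two hard theorems (the $3/4$ lower bound for biconnected graphs and the linkedness-based upper bound) are already proved above. The only point requiring a moment of care is the implication ``$5$-connected $\Rightarrow$ biconnected,'' which is immediate from the definition: a graph with vertex connectivity at least $5$ certainly has vertex connectivity at least $2$, and the excerpt notes that the lower bound applies to every biconnected graph. I would write the proof in two or three sentences, explicitly naming the M\'{e}sz\'{a}ros graph as the witness, applying Proposition~\ref{prop:mms-2-2-linked} for the upper bound and Theorem~\ref{thm:mms-two-connectivity-2} for the lower bound, and concluding equality.

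If one wanted a self-contained version rather than a pure citation, the harder part would be verifying that a concrete $5$-connected graph actually fails to be $2$-linked---this is the nontrivial combinatorial content and is exactly why the paper defers to \cite{Mezaros15} rather than reproducing the construction. For the corollary as stated, however, no such verification is needed: the existence claim is witnessed by quoting the known graph and pinching the PoC between the two bounds.
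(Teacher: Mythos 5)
Your proposal is correct and matches the paper's own derivation exactly: the paper obtains this corollary by combining the M\'{e}sz\'{a}ros $5$-connected non-$2$-linked graph with the upper bound from Proposition~\ref{prop:mms-2-2-linked} and the lower bound from Theorem~\ref{thm:mms-two-connectivity-2} (which applies since $5$-connectivity implies biconnectivity). Nothing is missing.
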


While we have not been able to precisely determine the PoC for all graphs with connectivity $3$ or above, we present a conjecture that, if settled in the affirmative, would complete the picture for the two-agent case.
Before we can describe the conjecture, we need the following generalization of 2-linkedness \citep{Mezaros15}:

\begin{definition}
\label{def:linkedness}
Given positive integers $a,b$, a graph $G$ is said to be \emph{$(a,b)$-linked} if for any disjoint set of vertices $M_1,M_2$ with $|M_1|=a$ and $|M_2|=b$, there exist disjoint connected subgraphs $G_1,G_2$ of $G$ such that $M_i$ is contained in $G_i$ for $i=1,2$.
\end{definition}

For example, $(2,1)$-linkedness is equivalent to biconnectivity,\footnote{To see this, first consider a graph $G$ that is not biconnected---suppose that removing a vertex~$x$ disconnects $G$. If $y$ and $z$ are vertices in different components of the resulting disconnected graph, then taking $M_1=\{y,z\}$ and $M_2=\{x\}$ yields a violation of Definition~\ref{def:linkedness}, meaning that $G$ is not $(2,1)$-linked. Conversely, suppose that $G$ is biconnected, and consider any disjoint set of vertices $M_1 = \{y,z\}$ and $M_2 = \{x\}$. By definition of biconnectivity, the graph $G$ remains connected upon the removal of $x$. Hence, we may take $G_2$ to be the subgraph induced only on $x$ and $G_1$ to be the subgraph induced on all vertices except $x$ in Definition~\ref{def:linkedness}. This implies that $G$ is $(2,1)$-linked.} while $(2,2)$-linked graphs correspond to what we have so far called $2$-linked graphs.
The new definition allows us to extend the lower bound from Proposition~\ref{prop:mms-2-2-linked}.

\begin{proposition}
\label{prop:mms-2-k-linked}
Let $k$ be a positive integer, and let $G$ be a graph that is not $(2,k)$-linked. Then $\text{PoC}(G,2)\geq 2k/(2k-1)$.
\end{proposition}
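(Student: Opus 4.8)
The plan is to generalize the construction from Proposition~\ref{prop:mms-2-2-linked} by exploiting the failure of $(2,k)$-linkedness to build a utility function whose G-MMS is strictly smaller than its MMS. Since $G$ is not $(2,k)$-linked, there exist disjoint vertex sets $M_1,M_2$ with $|M_1|=2$ and $|M_2|=k$ such that no pair of disjoint connected subgraphs $G_1,G_2$ can be found with $M_i\subseteq G_i$. I would place weight so that the ``ideal'' MMS partition separates $M_1$ from $M_2$, but this partition is precisely the one forbidden by the linkedness failure, forcing any connected bipartition to be worse.

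Concretely, I would assign the two vertices in $M_1=\{a,b\}$ values summing to the intended half-value of each part, and distribute weight across the $k$ vertices in $M_2$ to fill the other half. A natural choice mirroring the $k=1$ case is to give each vertex of $M_1$ value $k$ and give the $k$ vertices of $M_2$ values $2k-1,1,1,\dots,1$ (or some arrangement totalling $2k$), with all other vertices getting value $0$; then $u(M)=4k$ and $\text{MMS}(u,2)=2k$, achieved by putting $M_1$ in one part and $M_2$ in the other. The first step is thus to fix the weights and verify that this separating partition is a genuine MMS partition giving value exactly $2k$ to each side. The second step is to argue that \emph{no} connected bipartition can replicate this: any connected bipartition induces disjoint connected subgraphs, so by the linkedness failure it cannot keep all of $M_1$ in one part and all of $M_2$ in the other. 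Hence in any connected bipartition, either some vertex of $M_1$ is separated from the other, or some vertex of $M_2$ crosses into the $M_1$-side, and in each case I would bound the value of the lighter part by $2k-1$, yielding $\text{G-MMS}(G,u,2)\le 2k-1$ and therefore $\text{PoC}(G,2)\le(2k-1)/2k$.

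The main obstacle will be the case analysis in the second step: I need to show the lighter part has value at most $2k-1$ in \emph{every} connected bipartition that violates the separation, and to do this cleanly the weights must be chosen so that any deviation from the perfect split costs at least value $1$. The delicate point is that there are several distinct ways the separation can fail---a vertex of $M_1$ might join the $M_2$-side, a vertex of $M_2$ might join the $M_1$-side, or the two parts might each contain a mix---and I want a single weighting under which all of these force the minimum part down to $2k-1$. I expect the cleanest route is to make the two vertices of $M_1$ heavy (value $k$ each) so that splitting them is catastrophic, and to make the $M_2$-vertices light enough that moving any one of them shifts the balance by at least $1$; the symmetry between the $a,b$ values and the total $M_2$ value should then make the bound fall out without optimizing over the exact distribution within $M_2$. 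Verifying that this weighting simultaneously handles all violation patterns is the part that requires care, so I would structure the argument by first reducing to the observation that some connected subgraph must contain at least one vertex from each of $M_1$ and $M_2$, and then tracking the resulting value loss.
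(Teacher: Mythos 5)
Your overall strategy is exactly the paper's: take witness sets $\{a_1,a_2\}$ and $\{b_1,\dots,b_k\}$ for the failure of $(2,k)$-linkedness, give $a_1,a_2$ value $k$ each, give the $b_i$'s positive integer values totalling $2k$, observe that $\text{MMS}(u,2)=2k$ via the separating partition, and conclude $\text{G-MMS}(G,u,2)\leq 2k-1$ because that partition cannot be realized by connected parts and all values are integers. However, your execution has a genuine gap. First, an arithmetic slip: your ``natural choice'' $2k-1,1,\dots,1$ for $M_2$ sums to $(2k-1)+(k-1)=3k-2$, not $2k$; it agrees with the intended total only when $k=2$. This is not cosmetic: with those values and $k\geq 4$, the partition $\{a_1,a_2,b_2\}$ versus $\{b_1,b_3,\dots,b_k\}$ has minimum value $2k+1$, so $\text{MMS}(u,2)>2k$, and this partition is \emph{not} excluded by non-$(2,k)$-linkedness, so the comparison between G-MMS and MMS collapses. (The paper uses $k+1,1,\dots,1$, which does sum to $2k$.)

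Second, and more fundamentally, your hope that the bound ``falls out without optimizing over the exact distribution within $M_2$'' is false---that choice is precisely the delicate point. What the argument needs is that \emph{no subset of the $M_2$-values sums to exactly $k$}; otherwise there is a balanced $2k$--$2k$ partition placing one of $a_1,a_2$ on each side, which non-$(2,k)$-linkedness does not forbid and which may well be connected. For instance, if you distribute the total $2k$ as value $2$ on each $b_i$ (with $k$ even), then $\{a_1,b_1,\dots,b_{k/2}\}$ versus $\{a_2,b_{k/2+1},\dots,b_k\}$ is balanced, and nothing in the hypothesis prevents it from being connected, so one could have $\text{G-MMS}=\text{MMS}=2k$. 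The values $k+1,1,\dots,1$ work exactly because subsets of $M_2$ avoiding $b_1$ sum to at most $k-1$ while subsets containing $b_1$ sum to at least $k+1$; combined with integrality, this makes the separating partition the \emph{unique} partition in which both parts reach $2k$, so every connected bipartition has a part of value at most $2k-1$. With that choice of weights (and this uniqueness check spelled out), your outline becomes the paper's proof.
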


\begin{proof}
Suppose that $G$ is not $(2,k)$-linked, and let $\{a_1,a_2\}$ and $\{b_1,b_2,\dots,b_k\}$ be sets of vertices for which there do not exist disjoint connected subgraphs separating them.
Consider a utility function $u$ such that $u(a_1)=u(a_2)=k$, $u(b_1)=k+1$, $u(b_2)=u(b_3)=\dots=u(b_k)=1$, and $u(g)=0$ for every other vertex $g$.
We have $\text{MMS}(u,2) = 2k$.
On the other hand, the graph cannot be partitioned into two connected subgraphs in such a way that one subgraph contains $a_1,a_2$ while the other subgraph contains $b_1,b_2,\dots,b_k$. 
Since all vertex values are integers, this implies that $\text{G-MMS}(G,u,2)\leq 2k-1$. Hence $\text{PoC}(G,2)\geq 2k/(2k-1)$.
\end{proof}

Our conjecture is that for biconnected graphs, the PoC is exactly captured by $(2,k)$-linkedness:

\begin{conjecture}
\label{conj:mms-two}
Let $k\geq 2$ be an integer, and let $G$ be a graph that is $(2,k-1)$-linked but not $(2,k)$-linked.
Then $\text{PoC}(G,2) = 2k/(2k-1)$.
\end{conjecture}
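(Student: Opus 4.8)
The upper bound is already in hand: since $G$ is not $(2,k)$-linked, Proposition~\ref{prop:mms-2-k-linked} gives $\text{PoC}(G,2)\le (2k-1)/2k$. The entire content of the conjecture is therefore the matching lower bound $\text{PoC}(G,2)\ge (2k-1)/2k$, which must convert the hypothesis that $G$ is $(2,k-1)$-linked into a fairness guarantee. The plan is to mimic the structure of the proof of Theorem~\ref{thm:mms-two-connectivity-2} (the case $k=2$), where biconnectivity, i.e.\ $(2,1)$-linkedness, was turned via an open ear decomposition into a bipolar ordering isolating the at most two heavy goods at the two endpoints.

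Concretely, I would fix $u$ with $u(M)=1$ and, invoking Lemma~\ref{lem:mms-two-equal}, assume the MMS partition is balanced, so $\text{MMS}(u,2)=\tfrac12$. It then suffices to produce a connected bipartition in which one part, and hence both, has value in the interval $I=[\tfrac12-\tfrac1{4k},\,\tfrac12+\tfrac1{4k}]$, since a part of value at least $\tfrac12-\tfrac1{4k}=\tfrac{2k-1}{2k}\cdot\tfrac12$ realizes the bound while its complement has value at least $1-(\tfrac12+\tfrac1{4k})=\tfrac12-\tfrac1{4k}$ as well. Call a good \emph{heavy} if its value exceeds $1/(2k)$, the width of $I$, and \emph{light} otherwise. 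The key accounting observation is that each of the two value-$\tfrac12$ blocks of the MMS partition can contain at most $k-1$ heavy goods (else their values would sum to more than $k\cdot\tfrac1{2k}=\tfrac12$), so there are at most $2k-2$ heavy goods in total. Light goods are harmless: if the first prefix of a sweep to reach value $\tfrac12-\tfrac1{4k}$ does so by adding a light good, then that prefix has value below $\tfrac12-\tfrac1{4k}+\tfrac1{2k}=\tfrac12+\tfrac1{4k}$ and hence lies in $I$.

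The crux is to choose a connected sweep, ideally a bipolar ordering whose prefixes and suffixes are all connected, along which the midpoint of $[0,1]$ is crossed through a light good despite the presence of up to $2k-2$ heavy goods. This is exactly where $(2,k-1)$-linkedness should enter: it lets us separate any pair of vertices from any $(k-1)$-set into two disjoint connected subgraphs, which is the natural tool for forcing the heavy goods to be distributed between the two parts rather than piling up on one side of every admissible cut. I would try to select two heavy ``anchors'' to occupy the extremes of the ordering and use linkedness to route the remaining heavy goods so that enough light mass always straddles the midpoint; the hard instance behind Proposition~\ref{prop:mms-2-k-linked}, a heavy pair $\{a_1,a_2\}$ together with a block $\{b_1,\dots,b_k\}$ of total value a full MMS share that cannot be separated from the pair, is precisely the configuration that $(2,k)$-linkedness fails to resolve and that $(2,k-1)$-linkedness must.

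The main obstacle is that, unlike biconnectivity, $(2,k-1)$-linkedness for $k\ge 3$ comes with no clean decomposition theorem, no ear-decomposition analogue, producing a single bipolar ordering with prescribed behavior at several distinguished vertices; coordinating up to $2k-2$ heavy goods so that some admissible connected cut always falls in the narrow interval $I$ appears to require genuinely new structural input, which is why the statement is posed as a conjecture rather than proved. For this reason I would first settle it on a tractable family where the connected bipartitions are easy to describe, for example a complete graph with a matching removed, in which every set of size at least three is connected, and use that special case both as a correctness check on the heavy/light accounting and as a template for the general separation argument.
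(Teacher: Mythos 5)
The statement you were asked to prove is posed in the paper as a \emph{conjecture}: the paper itself contains no proof of the lower bound $\text{PoC}(G,2)\geq (2k-1)/2k$ for general $(2,k-1)$-linked graphs, and your proposal---which candidly stops short of a proof---matches the paper's state of knowledge almost exactly. Your upper bound is precisely the paper's Proposition~\ref{prop:mms-2-k-linked}; your heavy/light accounting (normalize via Lemma~\ref{lem:mms-two-equal} so the MMS partition is balanced, call a good heavy if its value exceeds $1/(2k)$, conclude there are at most $k-1$ heavy goods per block and hence at most $2k-2$ in total) is the correct generalization of the proof of Theorem~\ref{thm:mms-two-connectivity-2}; and your diagnosis of the obstruction is also the right one: the ear-decomposition/bipolar-ordering trick can only place two distinguished goods at the two ends of the ordering, while for $k\geq 3$ there may be up to $2k-2$ heavy goods, and $(2,k-1)$-linkedness comes with no decomposition theorem that would let one route all of them. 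This is exactly why the paper leaves the statement open (verifying it only for $k=2$ via Corollary~\ref{cor:mms-two-connectivity-2}).

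One concrete comparison worth making: your proposed warm-up family---complete graphs with a nonempty matching removed---is exactly the family the paper settles (Theorem~\ref{thm:mms-two-almost-complete}, giving $\text{PoC}(G,2)=(2m-5)/(2m-4)$ for every such graph except $L_5$; since these graphs are $(2,m-3)$-linked but not $(2,m-2)$-linked, this confirms the conjecture there with $k=m-2$). Note, however, that the paper's verification does not use a sweep or bipolar ordering at all. It starts from a balanced MMS partition, observes that if that partition is disconnected then one of its parts must be a missing edge $\{v_1,v_2\}$, and then either moves a single vertex of value at most $1$ across to that part, or, when every remaining vertex has value exceeding $1$, invokes a subset-sum rounding lemma (Lemma~\ref{lem:subset-sum}) to find a set of vertices whose total value lands in a window of width $2$, yielding a connected bipartition with both parts of value at least $2m-5$. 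So if you pursue the special case as a template, expect the argument to look like discrete rounding exploiting near-completeness (where almost every subset is connected) rather than a generalization of the ear-decomposition sweep; the sweep approach genuinely appears to stop at $k=2$.
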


The case $k=2$ of Conjecture~\ref{conj:mms-two} holds by Corollary~\ref{cor:mms-two-connectivity-2}.
We demonstrate next that the conjecture also holds for `almost-complete' graphs, i.e., for complete graphs with a nonempty matching removed.
These graphs have minimum degree $m-2$, where $m$ is the number of vertices (i.e., goods), and, with the exception of the graph $L_5$ that results from removing two disjoint edges from the complete graph $K_5$ (Figure~\ref{fig:L5}), are $(2,m-3)$-linked but not $(2,m-2)$-linked.
We show that the PoC of these graphs is always exactly $(2m-4)/(2m-5)$.
The exceptional graph $L_5$ is not $2$-linked, so Proposition~\ref{prop:mms-2-2-linked} (or alternatively, the utilities in Figure~\ref{fig:L5}) implies that its PoC is at least $4/3$ instead of $6/5$.
In fact, since the graph has connectivity $3$, Theorem~\ref{thm:mms-two-connectivity-2} tells us that its PoC is exactly $4/3$, thereby again confirming Conjecture~\ref{conj:mms-two}.

\begin{figure}[!ht]
\centering
\begin{tikzpicture}[scale=0.85]
\draw [fill] (2,4) circle [radius = 0.1];
\draw [fill] (3.9,2.62) circle [radius = 0.1];
\draw [fill] (0.1,2.62) circle [radius = 0.1];
\draw [fill] (3.18,0.38) circle [radius = 0.1];
\draw [fill] (0.82,0.38) circle [radius = 0.1];
\draw (2,4) -- (3.9,2.62);
\draw (2,4) -- (0.1,2.62);
\draw (2,4) -- (3.18,0.38);
\draw (2,4) -- (0.82,0.38);
\draw (3.9,2.62) -- (0.1,2.62);
\draw (3.18,0.38) -- (0.82,0.38);
\draw (3.9,2.62) -- (0.82,0.38);
\draw (0.1,2.62) -- (3.18,0.38);
\node at (2,4.3) {$0$};
\node at (4.2,2.62) {$2$};
\node at (-0.2,2.62) {$1$};
\node at (3.48,0.38) {$2$};
\node at (0.52,0.38) {$3$};
\end{tikzpicture}
\caption{Graph $L_5$ and utilities showing that its PoC is at least $4/3$.}
\label{fig:L5}
\end{figure}
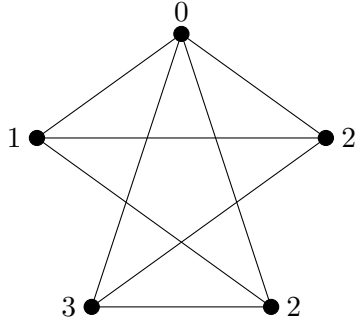

\begin{theorem}
\label{thm:mms-two-almost-complete}
Let $G$ be a graph that results from removing a nonempty matching from a complete graph with at least three vertices, and assume that $G$ is different from $L_5$.
Then $\text{PoC}(G,2)=(2m-4)/(2m-5)$.
\end{theorem}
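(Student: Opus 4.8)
The plan is to prove the two inequalities separately, with the lower bound carrying essentially all of the difficulty.

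For the upper bound $\text{PoC}(G,2)\le (2m-5)/(2m-4)$, I would invoke Proposition~\ref{prop:mms-2-k-linked} with $k=m-2$, for which it suffices to check that $G$ is \emph{not} $(2,m-2)$-linked. Since the removed matching is nonempty, fix one removed edge $\{x,y\}$ and set $M_1=\{x,y\}$ and $M_2=M\setminus\{x,y\}$, so that $|M_1|=2$ and $|M_2|=m-2$. Any pair of disjoint connected subgraphs containing $M_1$ and $M_2$ respectively would, by a count of vertices, have to equal $M_1$ and $M_2$ exactly; but the subgraph induced on $\{x,y\}$ has no edge and is therefore disconnected. Hence $G$ is not $(2,m-2)$-linked, and Proposition~\ref{prop:mms-2-k-linked} yields $(2(m-2)-1)/(2(m-2))=(2m-5)/(2m-4)$.

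For the lower bound, normalize $u(M)=1$ and, by Lemma~\ref{lem:mms-two-equal}, assume $\text{MMS}(u,2)=1/2$; as in the proof of Theorem~\ref{thm:mms-two-connectivity-2} one checks that $u(g)\le 1/2$ for every good. Writing $\delta:=1/(4m-8)$, the target value $(2m-5)/(4m-8)=1/2-\delta$ for the smaller part is exactly the requirement of a connected bipartition $(S,\bar S)$ with $u(S)\in[1/2-\delta,\,1/2+\delta]$. The cases $m\le 4$ are already covered (paths and connectivity-$2$ graphs, via Theorem~\ref{thm:mms-two-connectivity-1} and Corollary~\ref{cor:mms-two-connectivity-2}), so assume $m\ge 5$. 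The crucial structural fact I would record is that in $G=K_m-F$ a vertex set is disconnected only if it is a removed matching edge: any set of size at least three induces a complete graph minus a matching and is connected. Consequently a bipartition fails to be connected only when one of its two sides is exactly a removed edge.

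The engine of the lower bound is a repair procedure starting from a balanced partition. Since $\text{MMS}(u,2)=1/2$, there is a partition $(A,B)$ with $u(A)=u(B)=1/2$; if it is connected we are done, so assume it is not, whence by the structural fact one side, say $A=\{x,y\}$, is a removed matching edge, while $|B|=m-2\ge 3$ and $B$ is connected. Because $x$ and $y$ are each adjacent to every vertex of $B$, the sets $\{x\}\cup B'$ and $\{y\}\cup(B\setminus B')$ are connected for \emph{every} $B'\subseteq B$; and since $u(x)+u(y)=1/2$, such a bipartition lands in the target window as soon as $u(B')\in[u(y)-\delta,\,u(y)+\delta]$. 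Thus it suffices to find a subset of $B$ whose value is within $\delta$ of $u(y)$: if every good of $B$ has value at most $2\delta$ the prefix sums of $B$ form a $2\delta$-net of $[0,1/2]$ and such a subset exists, and more generally a subset exists whenever the small goods of $B$ carry enough mass. A second repair route keeps $x,y$ together by using a light connector $b\in B$, giving $S=\{x,y,b\}$ with $u(S)=1/2+u(b)$, which succeeds whenever some $b\in B$ satisfies $u(b)\le\delta$ and $B\setminus\{b\}$ is not a removed edge.

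The main obstacle — and the reason the proof is non-constructive — is to show that the simultaneous failure of all these repair routes forces $G=L_5$. The plan here is a careful extremal analysis. Failure of the subset-sum route concentrates the mass of $B$ on goods larger than $2\delta$, with its small goods summing to less than $u(y)-\delta$; failure of the connector route means every light good of $B$ is blocked, either because none exists or because removing one exposes a second removed matching edge inside $B$. I would argue that these constraints, together with $u(x)+u(y)=1/2$ and the global budget $u(M)=1$, can be met only when $m=5$ and $F$ consists of exactly two edges — that is, precisely the graph $L_5$ with a scaling of the utilities of Figure~\ref{fig:L5}. Since $G\ne L_5$ by hypothesis, one of the repairs always succeeds, giving $\text{PoC}(G,2)\ge (2m-5)/(2m-4)$ and hence, combined with the upper bound, equality.
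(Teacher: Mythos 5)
Your upper bound is correct and is essentially the paper's: invoking Proposition~\ref{prop:mms-2-k-linked} with $k=m-2$, after checking that a removed edge $\{x,y\}$ together with the remaining $m-2$ vertices witnesses failure of $(2,m-2)$-linkedness, reproduces exactly the instance used in the paper's direct argument (the paper itself remarks that these graphs are not $(2,m-2)$-linked). Your lower-bound setup also matches the paper's proof: normalize via Lemma~\ref{lem:mms-two-equal}, observe that a disconnected part of an MMS partition must be a removed matching edge $\{x,y\}$, and then attempt two repairs --- splitting into $\{x\}\cup B'$ versus $\{y\}\cup(B\setminus B')$ (the paper's second case), or attaching a light connector $b$ to $\{x,y\}$ (the paper's first case, which is precisely where the hypothesis $G\neq L_5$ enters).

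The gap is the core combinatorial step. When the connector route is blocked --- which for $G\neq L_5$ and $m\geq 5$ means every good of $B$ has value more than $\delta=1/(4m-8)$ --- you must prove that some $B'\subseteq B$ has $u(B')$ within $\delta$ of $u(y)$ (or that $u(y)\le\delta$, in which case $B'=\emptyset$ works). You verify only the easy case in which every good of $B$ is at most $2\delta$ (prefix sums), and the phrase ``a subset exists whenever the small goods of $B$ carry enough mass,'' followed by the concluding ``extremal analysis \dots can be met only when $m=5$,'' is an assertion of exactly the hard content of the theorem, not a proof of it. The paper supplies this step as Lemma~\ref{lem:subset-sum}: if $x_1,\dots,x_k\ge 1$ sum to $s\le 2k$, then every window $[r,r+2]$ with $0\le r\le s-2$ contains a subset sum; this is proved by an induction with its own delicate case analysis, and it is applied here at the boundary $s=2k$ (the goods of $B$ each exceed $\delta$ and sum to exactly $(2m-4)\delta$). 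The hypotheses of that lemma are essential --- for instance, two goods of value $3$ have no subset sum in the window $[1/2,5/2]$, so the claim fails once $s>2k$ --- hence window-hitting is not a routine consequence of how mass is distributed, and your proposed dichotomy ``simultaneous failure of the repairs forces $G=L_5$'' cannot be established without proving this lemma or an equivalent. Until that is done, the lower bound, and with it the theorem, remains unproven.
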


To prove Theorem~\ref{thm:mms-two-almost-complete}, we will use the following lemma.

\begin{lemma}
\label{lem:subset-sum}
Let $k$ be a positive integer, $2\leq s\leq 2k$ be a real number, and let $x_1,x_2,\dots,x_k\geq 1$ be real numbers with sum $s$.
For any real number $0\leq r\leq s-2$, there exists a subset $J\subseteq\{1,2,\dots,k\}$ such that $r\leq\sum_{j\in J}x_j\leq r+2$.
\end{lemma}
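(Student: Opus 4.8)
The plan is to prove Lemma~\ref{lem:subset-sum} by induction on $k$, at each step peeling off the \emph{largest} of the numbers. It helps to first record the underlying intuition. Writing $0=v_0<v_1<\dots<v_N=s$ for the distinct values attained by the subset sums $\sum_{j\in J}x_j$, the statement is equivalent to saying that consecutive attainable sums never jump by more than $2$, i.e.\ $v_{t+1}-v_t\le 2$ for all $t$. Indeed, given this, for any $r$ one takes the largest attainable sum not exceeding $r+2$; it must be at least $r$, since otherwise the next attainable value would lie above $r+2$ and create a jump larger than $2$. Both hypotheses are needed and essentially tight here: the constraint $x_j\ge 1$ keeps the number of elements from exploding, while $s\le 2k$ forces the largest number to be small relative to the total, and the extremal instance $x_1=k+1$, $x_2=\dots=x_k=1$ (so $s=2k$) realizes a jump of exactly $2$ between $k-1$ and $k+1$.

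For the induction, the base case $k=1$ is immediate, since $2\le s\le 2$ forces $s=2$ and $x_1=2$, and the only admissible target $r=0$ is handled by $J=\emptyset$. For the inductive step, I would let $M:=\max_j x_j$ and write $s':=s-M$ for the sum of the remaining $k-1$ numbers. Two preliminary inequalities drive the whole argument. First, because each of the other $k-1$ numbers is at least $1$, we have $M\le s-(k-1)$; combining this with $s\le 2k$ yields the crucial bound $2M\le s+2$, equivalently $M-2\le s'$. Second, since $M$ is at least the average $s/k$, we get $s'=s-M\le s(k-1)/k\le 2(k-1)$, so the $k-1$ remaining numbers again satisfy the hypotheses of the lemma and the induction hypothesis applies to them.

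I would then split on the location of $r\in[0,s-2]$. If $r\ge M$, include the largest number and apply the induction hypothesis to the remaining $k-1$ numbers with target $r-M\in[0,s'-2]$, then re-insert $M$. If $M-2\le r<M$, then $M\in[r,r+2]$ and the singleton $J=\{\,\text{the index of }M\,\}$ already works. If $r<M-2$ and moreover $r\le s'-2$, exclude the largest number and invoke the induction hypothesis directly on the remaining $k-1$ numbers. The remaining and most delicate case is the middle range $s'-2<r<M-2$: here including $M$ overshoots, yet the rest cannot be trimmed down to $r$ by induction, so instead I take \emph{all} numbers except $M$, whose sum is exactly $s'$; the bound $M-2\le s'$ gives $r<M-2\le s'$ (hence $s'\ge r$), while $r>s'-2$ gives $s'<r+2$, so $s'\in[r,r+2]$ as required. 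The hard part is precisely this middle region, which is exactly where neither recursive move is available and where the hypothesis $s\le 2k$ (through $M-2\le s'$) becomes indispensable; the proof is then finished by checking that these four ranges of $r$ genuinely cover all of $[0,s-2]$, which they do since $[0,s'-2]$, $(s'-2,M-2)$, $[M-2,M)$, and $[M,s-2]$ union to $[0,s-2]$.
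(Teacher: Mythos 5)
Your proof is correct, and it rests on the same skeleton as the paper's: induction on $k$, singling out the maximum element, with the decisive inequality being your $M-2\le s'$, which is literally the paper's bound $x_{\max}\le s/2+1$ rewritten (both derived from $x_j\ge 1$ and $s\le 2k$). Where you differ is in how the induction is organized. The paper first disposes of the case $x_{\max}\le 2$ by a greedy prefix-sum argument (consecutive partial sums jump by at most $2$, so some prefix lands in $[r,r+2]$), and then, for $x_{\max}>2$, proves the claim only for $r\ge s/2-1$ (either $J=\{x_{\max}\}$ works or one adds $x_{\max}$ to a subset produced by the induction hypothesis) and handles $r<s/2-1$ by passing to complements: find a subset summing into $[s-r-2,s-r]$ and take its complement. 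Your four-way split on the location of $r$ relative to $s'-2$, $M-2$, and $M$ eliminates both devices: the singleton case $[M-2,M)$ plays the role of the paper's $J=\{1\}$ subcase, the ``all but $M$'' move on $(s'-2,M-2)$ --- exactly where $M-2\le s'$ is indispensable --- replaces the complementation trick, and every recursive call is a genuine instance of the lemma itself, so no separate treatment of small $x_{\max}$ is needed. This buys a more uniform argument; the paper's version buys a shorter recursion since complementation halves the range of $r$ treated directly. One detail you should make explicit: your blanket claim that the remaining $k-1$ numbers ``satisfy the hypotheses of the lemma'' is not quite right as stated, since the lower bound $s'\ge 2$ can fail in general (e.g.\ $k=2$, $x_1=1.5$, $x_2=1$); however, it does hold in both places where you invoke the induction hypothesis --- in your first case because $M\le r\le s-2$ forces $s'\ge 2$, and in your third case because $0\le r\le s'-2$ does --- so the issue is purely expository, not a gap.
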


\begin{proof}
We proceed by induction on $k$. For the base case $k=1$ we must have $s=2$, $x_1=2$, $r=0$, and the result holds trivially.
Suppose now that the result holds for $k-1$; we will prove it for $k$. Assume without loss of generality that $x_1=\max\{x_1,x_2,\dots,x_k\}$.

First, assume that $x_1\leq 2$. Define $y_i:=x_1+x_2+\dots+x_i$ for each $i$.
The sequence $0,y_1,y_2,\dots,y_k=s$ is strictly increasing and any two consecutive terms differ by at most $2$, so one of the terms $x_1+x_2+\dots+x_i$ must be between $r$ and $r+2$.
Hence we may take $J=\{1,2,\dots,i\}$ to fulfill the claim.

Assume from now on that $x_1>2$. We first prove the statement for $r\geq s/2 - 1$.
If $x_1 > s/2+1$, then since $x_i\geq 1$ for all $i$, we have
$$s = x_1+x_2+\dots+x_k>(s/2+1) + (k-1) = s/2+k,$$
thus $s>2k$, a contradiction.
So $x_1\leq s/2 + 1\leq r+2$. If $x_1\geq r$, we are done by choosing $J=\{1\}$, so assume that $x_1<r$.

Let $t := x_2+x_3+\dots+x_k$. Note that $0\leq t \leq s-2\leq 2(k-1)$ and $0<r-x_1\leq s-2-x_1 = t-2$.
Applying the induction hypothesis on $x_2,x_3,\dots,x_k$, we find that there is a set $L\subseteq\{2,3,\dots,k\}$ such that $r-x_1\leq\sum_{l\in L}x_l\leq r-x_1+2$.
Take $J=L\cup\{1\}$. We have $r\leq\sum_{j\in J}x_j\leq r+2$, as desired.

Finally, suppose that $r < s/2 - 1$. We have $$s-2\geq s-r-2 > s - (s/2-1)-2 = s/2-1,$$ so we know from the previous case ($r\geq s/2 - 1$) that there exists a subset $J\subseteq\{1,2,\dots,k\}$ for which $s-r-2\leq\sum_{j\in J}x_j\leq s-r$.
Since $\sum_{j=1}^k x_j = s$, it follows that $r\leq \sum_{j\in\{1,2,\dots,k\}\setminus J}x_j\leq r+2$, completing the proof.
\end{proof}

We are now ready to establish Theorem~\ref{thm:mms-two-almost-complete}.

\begin{proof}[Proof of Theorem~\ref{thm:mms-two-almost-complete}]
First, we show that the PoC of $G$ is at least $(2m-4)/(2m-5)$. Let $(v_1,v_2)$ be a missing edge.
Consider a utility function with value $m-2$ for each of $v_1$ and $v_2$, value $m-1$ for another vertex $v_3$, and value $1$ for each of the remaining $m-3$ vertices (so the total value is $4m-8$).
The MMS is $2m-4$, attained by the bipartition with $\{v_1,v_2\}$ as one part.
Take an arbitrary connected bipartition.
If $v_1$ and $v_2$ are in the same part, this part must contain at least one other vertex, so the other part has value at most $2m-5$.
On the other hand, if $v_1$ and $v_2$ are in different parts, the part that does not contain $v_3$ has value at most $2m-5$.
In either case, there is a part with value no more than $2m-5$, so the G-MMS is at most $2m-5$.
It follows that the PoC is at least $(2m-4)/(2m-5)$.

Next, we show that the PoC of $G$ is at most $(2m-4)/(2m-5)$.
Take an arbitrary utility function $u$, and assume without loss of generality that $u(M)=4m-8$.
By Lemma~\ref{lem:mms-two-equal}, we may also assume that $\text{MMS}(u,2) = (4m-8)/2 = 2m-4$. It suffices to show that $\text{G-MMS}(G,u,2)\geq 2m-5$.
Consider any MMS partition. If the partition is connected, we have that the G-MMS is $2m-4$.
Suppose therefore that the partition is not connected. Since $G$ results from removing a nonempty matching from a complete graph, this means that (at least) one of the parts corresponds to a missing edge.
Let $v_1$ and $v_2$ be the two vertices in that part (so $u(\{v_1,v_2\}) = 2m-4$), and $v_3,\dots,v_m$ be the remaining vertices of $G$.

Assume first that there exists a vertex $v\not\in\{v_1,v_2\}$ such that $u(v)\leq 1$. We have $2m-4\leq u(\{v_1,v_2,v\})\leq 2m-3$, and the vertices $v_1,v_2,v$ form a connected subgraph.
Moreover, since the graph $G$ is different from $L_5$, the remaining vertices also form a connected subgraph; together these vertices have value at least $(4m-8)-(2m-3) = 2m-5$.
Hence, in the connected bipartition with $\{v_1,v_2,v\}$ as one part, both parts have value at least $2m-5$.
It follows that $\text{G-MMS}(G,u,2)\geq 2m-5$ in this case.

Assume now that every vertex $v\not\in\{v_1,v_2\}$ satisfies $u(v) > 1$.
If $u(v_1) \geq 2m-5$, then taking the connected bipartition with $v_1$ alone as one part again yields $\text{G-MMS}(G,u,2)\geq 2m-5$; an analogous argument applies if $u(v_2) \geq 2m-5$.
Suppose therefore that $\max\{u(v_1),u(v_2)\} < 2m-5$.
Since $u(v_1)+u(v_2) = 2m-4$, we have $1 < u(v_1) < 2m-5$, and so $0 < 2m-5-u(v_1) < 2m-6$.
Applying Lemma~\ref{lem:subset-sum} with $k=m-2$, $s=2m-4$, $\{x_1,\dots,x_k\} = \{u(v_3),\dots,u(v_m)\}$, and $r = 2m-5-u(v_1)$, we find that there exists a subset of $\{u(v_3),\dots,u(v_m)\}$ for which the sum of the elements belongs to the interval $[2m-5-u(v_1), 2m-3-u(v_1)]$.
Letting $S$ be the set of corresponding goods along with $v_1$, we have $2m-5\leq u(S)\leq 2m-3$.
Hence, in the connected bipartition with $S$ as one part, both parts have value at least $2m-5$.
Therefore $\text{G-MMS}(G,u,2)\geq 2m-5$ in this case as well, and the proof is complete.
\end{proof}

One can check that any graph $G$ satisfying the condition of Theorem~\ref{thm:mms-two-almost-complete} is $(2,m-3)$-linked but not $(2,m-2)$-linked, so Theorem~\ref{thm:mms-two-almost-complete} confirms Conjecture~\ref{conj:mms-two} for this class of graphs.

\subsection{Any Number of Agents}
\label{sec:mms-any}

We proceed to the general setting where the goods are divided among an arbitrary number of agents.
In this setting, it is no longer true that the PoC alone captures the MMS approximation that can be guaranteed to the agents---this is evident in the case of a complete graph, where the PoC is $1$ by definition, but an allocation that gives all agents their full MMS does not always exist \citep{KurokawaPrWa18}.
At first glance, it may seem conceivable that certain graphs do not admit any useful MMS approximation.
However, we provide a non-trivial guarantee for arbitrary graphs that depends only on the number of agents and goods and, in particular, not on the utilities (Theorem~\ref{thm:mms-arbitrary-approx}). We begin by establishing a general upper bound on the PoC.

\begin{theorem}
\label{thm:mms-PoC-general}
For any graph $G$ and number of agents $n$, we have $\text{PoC}(G,n)\leq \max\{1,m-n+1\}$.
\end{theorem}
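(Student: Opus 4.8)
The goal is to show that for any connected graph $G$ on $m$ vertices and any number of agents $n$, we can always find a connected allocation in which every agent receives at least $1/(m-n+1)$ of their MMS. By the definition of PoC, it suffices to exhibit, for an arbitrary additive utility profile, a connected partition into $n$ bundles each of value at least $\text{MMS}/(m-n+1)$.

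The plan is to work with a spanning tree $T$ of $G$, since any connected partition of $T$ is automatically a connected partition of $G$ (edges of $G$ only help connectivity). I would then try to carve off connected bundles from $T$ one at a time. The natural approach is a ``moving knife'' or leaf-pruning procedure: repeatedly locate a subtree (e.g.\ hanging off a leaf or a suitably chosen edge) whose value is at least the target threshold $\tau := \text{MMS}/(m-n+1)$ but which is minimal in some sense, assign it to an agent, and recurse on the remainder. The key quantitative idea is a counting/averaging argument. Since $\text{MMS}(u,n) \le u(M)/n$, and more usefully since the MMS partition itself has $n$ parts each of value at least $\text{MMS}$, the total value is at least $n\cdot\text{MMS}$; I would want to argue that the ``waste'' incurred when cutting off a bundle is bounded because each cut can only be forced to overshoot by the value of a single vertex, and there are only $m$ vertices to distribute among $n$ bundles, leaving a budget of $m-n+1$ vertices' worth of slack for the critical bundle.

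More concretely, I expect the cleanest route is induction on $n$. For the inductive step, I would root $T$ and find a connected subtree $S$ (for instance, a minimal subtree rooted at some vertex whose value exceeds a carefully chosen cutoff, so that removing $S$ leaves $T\setminus S$ connected) to assign to one agent, then apply the inductive hypothesis with $n-1$ agents on the remaining connected tree $T\setminus S$, which has $m-|S|$ vertices. The bookkeeping must ensure that the threshold $1/(m-n+1)$ is preserved: if the removed bundle uses $|S|$ vertices, the subproblem has $m-|S|$ goods and $n-1$ agents, giving a recursive guarantee of $1/((m-|S|)-(n-1)+1) = 1/(m-|S|-n+2)$, which is at least $1/(m-n+1)$ precisely when $|S|\ge 1$. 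So the recursion is self-consistent as long as each removed bundle is nonempty and still clears the absolute threshold $\tau$ measured against the original MMS.

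The main obstacle will be the first step: guaranteeing that a single removable connected piece of value at least $\tau$ exists without destroying connectivity of the remainder, and simultaneously that its value is not so large that the remaining $n-1$ agents cannot each still reach $\tau$. The subtlety is that a single vertex could be very heavy (value close to the whole MMS or more), in which case I must handle it by assigning that heavy vertex alone and checking the arithmetic still closes; dually, if all vertices are light, the moving-knife accumulation overshoots $\tau$ by at most one vertex's value, and I must verify this overshoot is absorbed by the $m-n+1$ slack. Pinning down the correct minimal-subtree selection rule so that both the threshold and the vertex-count accounting hold simultaneously in every case is the delicate part; once that invariant is isolated, the induction should go through routinely.
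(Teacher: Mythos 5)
Your induction is not self-consistent in the way you claim, and this is a genuine gap rather than a routine detail. When you recurse on the residual instance, the inductive hypothesis (the theorem for $n-1$ agents) guarantees each remaining agent a $1/(m-|S|-n+2)$ fraction of $\text{MMS}(u',n-1)$, the maximin share of the \emph{residual} instance on $M\setminus S$ --- not of the original instance. Your bookkeeping silently requires $\text{MMS}(u',n-1)\geq \text{MMS}(u,n)$. That inequality holds when $|S|=1$ (the classical lemma that removing one good and one agent never decreases the MMS), but it can fail for multi-vertex bundles, even minimal pendant ones. Concretely: take a path whose vertices have values $1, 500, 99, 100, 100$ in order, with $n=3$. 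Then $\text{MMS}(u,3)=101$ (parts $\{500\}, \{100,1\}, \{99,100\}$) and $\tau = 101/3$. The prefix $\{1,500\}$ is a minimal pendant bundle of value at least $\tau$ (since $\{1\}$ alone falls short), its removal keeps the remainder connected, yet the residual instance $\{99,100,100\}$ has $2$-agent MMS equal to $100 < 101$. So the quantity you measure against strictly drops. In this example the final numbers still happen to close because the denominator $m-|S|-n+2$ also shrank, but your argument offers no reason why this denominator slack always compensates for the MMS decrease; proving that refined inequality, or exhibiting a selection rule under which the residual MMS never drops, is exactly the delicate content your proposal defers and does not supply.

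For comparison, the paper's proof needs no recursion, no spanning tree, and no thresholds. It takes an unconstrained MMS partition $(M_1,\dots,M_n)$; when $\text{MMS}(u,n)>0$ every part is nonempty, hence every part has at most $m-n+1$ goods, so the single most valuable good $g_i$ in $M_i$ already satisfies $u(g_i)\geq u(M_i)/(m-n+1)\geq \text{MMS}(u,n)/(m-n+1)$. It then seeds $n$ singleton bundles $\{g_1\},\dots,\{g_n\}$ (trivially connected) and grows them greedily: as long as some good is unallocated, connectivity of $G$ provides an unallocated vertex adjacent to one of the bundles, which can be absorbed while preserving connectedness, and nonnegative additive utilities ensure values only increase. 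The idea your proposal is missing is that the unconstrained MMS partition itself can serve as a certificate for seeding connected bundles in one shot, which sidesteps all residual-MMS accounting.
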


\begin{proof}
If $m<n$, the PoC is $1$. Assume that $m\geq n$, and consider an arbitrary utility function $u$. Let $(M_1,\dots,M_n)$ be a (not necessarily connected) partition of $M$ that maximizes $\min_{i=1,\dots,n} u(M_i)$. We assume without loss of generality that $|M_i|\geq 1$ for each $i$, which also means that $|M_i|\leq m-n+1$ for every $i$.

For each $i$, let $g_i$ be a good of highest value in $M_i$ according to $u$, and let $M_i'=\{g_i\}$.
As long as $\cup_{i=1}^n M_i'\neq M$, we add a good not already in $\cup_{i=1}^n M_i'$ to one of the bundles $M_i'$ so that the bundle remains connected; this is always possible since $G$ is connected.
At the end of this process, $(M_1',\dots,M_n')$ is a connected partition of $M$.
By our choice of $g_i$, we have
$$u(M_i')\geq\frac{1}{m-n+1}\cdot u(M_i)$$
for every $i$. It follows that
\begin{align*}
\text{G-MMS}(G,u,n)
&\geq\min_{i=1,\dots,n} u(M_i') \\
&\geq \frac{1}{m-n+1}\cdot \min_{i=1,\dots,n} u(M_i) \\
&= \frac{1}{m-n+1}\cdot \text{MMS}(u,n).
\end{align*}
Hence, we have that $\text{PoC}(G,n)\leq m-n+1$.
\end{proof}

As we will see in Theorems~\ref{thm:mms-any-star} and \ref{thm:mms-any-path}, the bound $m-n+1$ is tight for sufficiently short paths and all stars.
We now give a maximin share guarantee for arbitrary graphs.

\begin{theorem}
\label{thm:mms-arbitrary-approx}
For any graph $G$ and any number of agents $n$, if $m\ge n$, there exists a connected allocation that gives each agent at least $1/(m-n+1)$ of her MMS.
On the other hand, if $m < n$, there exists a connected allocation that gives each agent her full MMS.
\end{theorem}

\begin{proof}
The statement for $m < n$ is trivial since the MMS is $0$ in that case, so assume that $m\ge n$.
Take an arbitrary spanning tree $H$ of $G$. By Theorem~\ref{thm:mms-PoC-general}, $\text{PoC}(H,n)\leq m-n+1$. By Proposition~\ref{prop:mms-CG-approx}, there exists a connected allocation with respect to $H$ that gives each agent at least $1/(m-n+1)$ times her MMS. Since any connected allocation with respect to $H$ is also connected with respect to $G$, the conclusion follows.
\end{proof}

Next, we derive tight bounds on the PoC in the cases of paths and stars for any number of agents.
By Proposition~\ref{prop:mms-CG-approx}, this also yields the optimal MMS approximation for each of these cases.
The following simple fact will be useful:

\begin{lemma}
\label{lem:mms-inequality}
Let $m\geq n$, and let $M'\subseteq M$ be an arbitrary set of at least $m-n+1$ goods.
For an agent with utility function $u$, we have $u(M')\geq\text{MMS}(u,n)$.
\end{lemma}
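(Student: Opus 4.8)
The plan is to prove Lemma~\ref{lem:mms-inequality} by exhibiting a single part of some MMS partition that must be entirely contained in the large set $M'$, and then invoke monotonicity of additive utilities. The key combinatorial observation is a pigeonhole argument on how the $n$ parts of an MMS partition intersect the complement $M\setminus M'$.

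First I would fix an MMS partition $(P_1,\dots,P_n)$ of $M$ witnessing $\text{MMS}(u,n)$, so that $u(P_i)\geq\text{MMS}(u,n)$ for every $i$. Since $|M'|\geq m-n+1$, the complement satisfies $|M\setminus M'|\leq n-1$. The $n$ parts $P_1,\dots,P_n$ are pairwise disjoint, so the at most $n-1$ goods in $M\setminus M'$ can meet at most $n-1$ of these parts. By pigeonhole, at least one part, say $P_j$, is disjoint from $M\setminus M'$, which means $P_j\subseteq M'$.

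Given such a $P_j\subseteq M'$, the conclusion is immediate: by monotonicity (and indeed additivity) of $u$ we have
\[
u(M')\geq u(P_j)\geq\text{MMS}(u,n),
\]
as claimed. I would also note the edge case that an MMS partition always exists when $m\geq n$ (every part can be taken nonempty, which is consistent with the definition of MMS over all partitions into $n$ parts), so the argument is not vacuous.

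I do not anticipate a serious obstacle here; this is a short counting lemma rather than a deep statement. The only point requiring mild care is ensuring the pigeonhole is applied correctly: the parts are disjoint, so distinct goods of $M\setminus M'$ may still land in the same part, but that only helps, since it frees up more parts to be contained in $M'$. The upper bound $|M\setminus M'|\leq n-1 < n$ guarantees at least one of the $n$ parts is untouched, and that is the entire content of the proof.
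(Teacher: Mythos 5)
Your proof is correct and takes essentially the same route as the paper: the paper's argument is exactly this pigeonhole observation that since $|M\setminus M'|\leq n-1$, at least one part of any partition into $n$ parts (in particular an MMS partition) must be contained in $M'$, giving $u(M')\geq \text{MMS}(u,n)$. Your write-up just makes the counting step and the monotonicity step more explicit; there is no gap.
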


\begin{proof}
Observe that in any partition of the vertices into $n$ parts, at least one of the parts is contained in $M'$.
In particular, this holds for an MMS partition.
It follows that $\text{MMS}(u,n)\leq u(M')$, as claimed.
\end{proof}

We begin with stars.

\begin{theorem}
\label{thm:mms-any-star}
Let $n\geq 2$ and let $G$ be a star. Then
$$\text{PoC}(G,n) =
\begin{cases}
m-n+1 & \text{ if } m \geq n;\\
1 & \text{ if } m < n.
\end{cases}
$$
Moreover, when $m\ge n$, there exists a polynomial-time algorithm that computes a connected allocation in which every agent receives at least $1/(m-n+1)$ of her MMS.
\end{theorem}

\begin{proof}
If $m<n$ the PoC is $1$, so assume that $m\geq n$.
We first show that the PoC is at least $m-n+1$.
Consider a utility function $u$ with value $m-n+1$ for the center vertex and for $n-2$ of the leaves, and value $1$ for each of the remaining $m-n+1$ leaves.
We have $\text{MMS}(u,n) = m-n+1$.
In any connected partition into $n$ parts, at least $n-1$ parts contain a single leaf.
This means that at least one of these parts contains a single leaf with value $1$.
Hence the PoC is at least $m-n+1$.

Next, we show that the PoC is at most $m-n+1$; while this bound already follows from Theorem~\ref{thm:mms-PoC-general}, our proof will yield a polynomial-time algorithm for computing a desirable connected allocation in the case of stars.
Take an arbitrary utility function $u$, let $v^*$ be the center vertex, and let $v_1,v_2,\dots,v_{n-1}$ be the leaves with the highest value where $u(v_1)\geq\dots\geq u(v_{n-1})$.
Consider a connected partition $\Pi$ with each of these $n-1$ vertices as a part, and the remaining $m-n+1$ vertices as the last part.

Let $A := M\setminus\{v^*,v_1,\dots,v_{n-2}\}$.
By Lemma~\ref{lem:mms-inequality},  $\text{MMS}(u,n)\leq u(A)$.
Since there are $m-n+1$ vertices in $A$ and $v_{n-1}$ is a vertex with the highest value, we have
$$u(v_{n-1})\geq \frac{1}{m-n+1}\cdot u(A)\geq \frac{1}{m-n+1}\cdot \text{MMS}(u,n).$$
It follows that $u(v_i)\geq \text{MMS}(u,n)/(m-n+1)$ for all $i=1,2,\dots,n-1$, so the first $n-1$ parts of $\Pi$ have value at least $\text{MMS}(u,n)/(m-n+1)$ each.
The last part of $\Pi$ is $B := M\setminus\{v_1,v_2,\dots,v_{n-1}\}$.
By Lemma~\ref{lem:mms-inequality} again, we have $\text{MMS}(u,n)\leq u(B)$.
This means that all parts of $\Pi$ have value at least $\text{MMS}(u,n)/(m-n+1)$, as desired.

This proof also gives rise to a polynomial-time algorithm for computing a connected allocation for $n$ agents on a star such that each agent receives at least $1/(m-n+1)$ of her MMS:
Let each of the first $n-1$ agents pick a favorite leaf from the remaining leaves in turn, and let the last agent take the remaining $m-n+1$ vertices.
\end{proof}

To address the more involved case of paths, we introduce an approximation of proportionality that can be of interest even in the absence of connectivity considerations.
Recall that an allocation is said to be \emph{proportional} if it gives every agent at least her \emph{proportional share}, which is defined as $u(M)/n$.
Even though a proportional allocation always exists for \emph{divisible} goods, as we explained in the introduction, this is not the case for indivisible goods---our definition of \emph{indivisible proportional share} therefore adapts proportionality to the setting of indivisible goods.
In order to ensure a nontrivial approximation, we will need to hypothetically remove up to $n-1$ goods from the entire bundle. 
Indeed, when there are $n-1$ goods overall, in any allocation, one of the agents is necessarily left empty-handed.
If this agent is only allowed to hypothetically remove at most $n-2$ goods, then she cannot guarantee any positive (multiplicative) approximation of her utility for the entire bundle.
Thus, we are interested in the optimal approximation of each agent's utility after $n-1$ goods are removed.
When the number of goods is large, this approximation is $1/n$, which is reasonable because there are $n$ agents.
However, for smaller numbers of goods, we will be able to achieve a better approximation, which is captured by our IPS factor in the following definition.

\begin{definition}
\label{def:indivisible-prop-share}
For positive integers $n,m$, define
$$\text{IPS}(n,m) =
\begin{cases}
\frac{1}{n} & \text{ if } m\geq 2n-1;\\
\frac{1}{m-n+1} & \text{ if } n\leq m< 2n-1;\\
0 & \text{ if } m < n.
\end{cases}
$$
Given $n$ agents and $m$ goods, a bundle $A$ is said to satisfy the \emph{indivisible proportional share (IPS) property} for an agent with utility function $u$ if there exists a (possibly empty) set $B\subseteq M\setminus A$ with $|B|\leq n-1$ such that
$$u(A)\geq \text{IPS}(n,m)\cdot u(M\setminus B).$$
An allocation is said to satisfy the IPS property if every agent receives a bundle that satisfies the IPS property.
For brevity, we will refer to a bundle or allocation that satisfies the IPS property as being IPS.
\end{definition}

We remark that IPS is a stronger property than PROP$^*(n-1)$ considered by \citet{SegalhaleviSu19}, which corresponds to taking $\text{IPS}(n,m) = 1/n$ for $m\geq n$ and $0$ for $m<n$.
(In particular, note that $\frac{1}{m-n+1} > \frac{1}{n}$ when $n\le m < 2n-1$.)
It is also stronger than PROP1 considered by \citet{ConitzerFrSh17} and \citet{AzizCaIg19}, as well as a proportionality relaxation studied by \citet{Suksompong19}.
Despite its strength, we show that an IPS allocation always exists. Moreover, we can obtain a connected IPS allocation if the graph is a path.

\begin{proposition}
\label{prop:indivisible-prop-share}
Let $n\geq 2$ and let $G$ be a path. There exists a connected IPS allocation of the $m$ goods to the $n$ agents.
\end{proposition}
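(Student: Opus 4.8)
The plan is to argue by induction on the number of agents $n$, peeling off one agent together with a connected prefix of the path and recursing on the remaining subpath. First I would dispose of the trivial regime $m<n$, where $\text{IPS}(n,m)=0$ and every connected allocation is vacuously IPS, and record the convenient closed form $\text{IPS}(n,m)=1/\min(n,\,m-n+1)$ for $m\ge n$. Writing $\gamma:=\text{IPS}(n,m)$, I would also note the monotonicity $\text{IPS}(n-1,m-\ell)\ge\gamma$ for every $\ell\ge 1$ (both arguments of the $\min$ only decrease), which is exactly what lets the recursion's guarantee survive the removal of one agent together with at least one vertex.

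For the inductive step I would run a Dubins--Spanier-style moving knife from the left endpoint. Maintaining the set of still-unserved agents and the remaining suffix subpath $Q$, I sweep a cut rightward and stop at the shortest prefix $L\subseteq Q$ for which some unserved agent $i^\ast$ attains $u_{i^\ast}(L)\ge\gamma\,u_{i^\ast}(Q)$; I then award $L$ to $i^\ast$ and recurse on $Q\setminus L$ with the remaining $n-1$ agents, the last agent receiving whatever remains. The key structural facts are that the bundle of any later-served agent $j$ lies strictly to the right of $L$, so that the value of $Q$ lying to the right of $L$ is controlled by the stopping threshold, and that the \emph{minimal}-prefix rule forces $u_j(L\setminus\{g\})<\gamma\,u_j(Q)$, where $g$ is the last good of $L$.

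To verify IPS for a later-served agent $j$, I would let $B_j$ contain, for each segment peeled before $j$, that agent's single most valuable good in the segment, spending at most one unit of removal budget per earlier segment; since $j$ is preceded by at most $n-1$ segments, $|B_j|\le n-1$. Combining the recursive guarantee $u_j(M_j)\ge\text{IPS}(n-1,|Q\setminus L|)\,u_j((Q\setminus L)\setminus B_j')$ with the monotonicity above and the two bounds on the left and right portions should yield $u_j(M_j)\ge\gamma\,u_j(M\setminus B_j)$. For the constant itself there are two regimes: when $m\ge 2n-1$ we have $\gamma=1/n$ and the statement coincides with connected $\mathrm{PROP}^\ast(n-1)$, while for $n\le m<2n-1$ we have $\gamma=1/(m-n+1)$, each peeled segment contains at most $m-n+1$ goods, so an agent's favourite good in a segment is worth at least a $\gamma$-fraction of that segment, and feasibility of cutting into $n$ such short segments is guaranteed by $n(m-n+1)\ge m$.

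The hard part will be controlling the total value an agent must ``pay for'' through its removal set. A pure minimal-prefix rule can stall when an agent's value is concentrated in a few goods that cannot fit into any short prefix, and the remainders of the earlier segments can a priori accumulate beyond what a single removed good per segment compensates. Resolving this requires two things: showing that such value-concentrated agents already satisfy IPS for free, by placing the concentrated goods (which lie outside their bundle) into their removal set; and choosing the threshold schedule as a fraction of the \emph{current} remaining value so that the accumulated earlier-segment remainders provably stay within the budget of $n-1$ goods. Reconciling this bookkeeping with the stronger constant $1/(m-n+1)$ in the small-$m$ regime, and checking the boundary cases where the last agent receives the residual suffix, is where I expect the main technical effort to lie.
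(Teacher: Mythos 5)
Your stopping rule is where the argument breaks, and the issue is not deferred bookkeeping but a wrong threshold: stopping at the shortest prefix $L$ with $u_{i^\ast}(L)\ge\gamma\,u_{i^\ast}(Q)$ ignores exactly the feature of IPS that makes the proposition true, namely that the benchmark $u(M\setminus B)$ is itself discounted by removing up to $n-1$ goods. Concretely, take $n=2$ and a path with three goods $g_1,g_2,g_3$ (in path order) of values $0.4,\,0.6,\,0$, both agents identical, so $\gamma=\text{IPS}(2,3)=1/2$. Your sweep does not stop at $\{g_1\}$ (since $0.4<\tfrac12 u(M)=0.5$) and therefore hands $L=\{g_1,g_2\}$ to the first agent; the second agent receives $\{g_3\}$ with value $0$, and with removal budget $n-1=1$ she cannot delete both $g_1$ and $g_2$, so $u(M\setminus B)\ge 0.4$ for every admissible $B$ and IPS fails. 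The paper's procedure stops at $\{g_1\}$ already, because $\{g_1\}$ \emph{is} IPS for the first agent ($0.4\ge\tfrac12 u(M\setminus\{g_2\})=0.2$), and the second agent then gets $\{g_2,g_3\}$ with value $0.6\ge 0.5$. The same example refutes the inductive composition you outline: you would need $(\gamma'-\gamma)\,u_j\bigl((Q\setminus L)\setminus B'\bigr)\ge\gamma\,u_j(L\setminus\{g\})$, whose left-hand side is $0$ here while the right-hand side is $0.2$. It also defeats both repairs you suggest at the end: the ``value-concentrated'' agent would need two goods in her removal set but has budget one, and ``a fraction of the current remaining value'' is precisely the threshold that just failed.

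The fix is the paper's stopping rule: sweep left to right and stop at the shortest prefix that satisfies the IPS condition of Definition~\ref{def:indivisible-prop-share} itself (with respect to the original $n$ and $m$) for some yet-unserved agent; leftovers go to the last agent. No induction on $n$ is then needed. Writing each earlier bundle as $M_i=X_i\cup Y_i$ with $Y_i$ its last good and $Y=\cup_i Y_i$ (so $|Y|\le n-1$), the fact that no proper prefix $X_i$ was IPS for the last agent gives $u(X_i)<\text{IPS}(n,m)\cdot u(M\setminus Y)$ for each $i$, and summing yields $u(M_n)>\text{IPS}(n,m)\cdot u(M\setminus Y)$ when $m\ge 2n-1$. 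The regime $n\le m<2n-1$ needs one extra combinatorial step---at most $m-n$ of the earlier agents can receive two or more goods, proved by the same ``the last agent would have stopped the sweep earlier'' argument---so that the factor $(n-1)/n$ is replaced by $(m-n)/(m-n+1)$. Once you replace your threshold by the genuine IPS condition, your induction buys nothing: the direct accounting already closes the proof.
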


\begin{proof}
If $m<n$, each agent needs utility $0$ in an IPS allocation, so the claim holds trivially. Assume that $m\geq n$.
Starting with an empty bundle, we process the goods along the path (say, from left to right) and add them one at a time to the current bundle until the bundle is IPS to at least one of the agents.
We then allocate the bundle to one such agent, and repeat the procedure with the remaining goods and agents. Any leftover goods are allocated to the agent who receives the last bundle.

We claim that this procedure always results in an IPS allocation.
Notice from Definition~\ref{def:indivisible-prop-share} that if a bundle is IPS for an agent, then so is any superset of the bundle.
Hence it suffices to show that after $n-1$ bundles are allocated, the last agent still finds the remaining bundle to be IPS.
Assume without loss of generality that the bundles are allocated to agents $1,2,\dots,n$ in this order, and let $u$ be the utility function of agent $n$.
The claim holds trivially if the empty bundle is IPS for agent $n$, so assume that it is not.
For $1\leq i\leq n-1$, let the bundle allocated to agent $i$ be $M_i = X_i\cup Y_i$, where $Y_i$ consists of the last good added to $M_i$ (if $M_i$ is nonempty), and $X_i$ consists of the remaining goods.
Let $X = \cup_{i=1}^{n-1}X_i$ and $Y = \cup_{i=1}^{n-1}Y_i$.
In particular, $|Y|\leq n-1$.

Let $M_n$ be the bundle allocated to agent $n$. 

\begin{itemize}
\item Case 1: $m\geq 2n-1$.
By definition of the procedure, agent $n$ does not find any of the bundles $X_1,\dots,X_{n-1}$ to be IPS.
In particular, noting that $Y\subseteq M\setminus X_i$ for each $1\leq i\leq n-1$ and taking $B=Y$ in Definition~\ref{def:indivisible-prop-share},
we have
$u(X_i) < \text{IPS}(n,m)\cdot u(M\setminus Y) = u(M\setminus Y)/n$
for all $i$.
Hence,
\begin{align*}
u(M_n)
&= u(M) - \sum_{i=1}^{n-1}u(X_i) - \sum_{i=1}^{n-1}u(Y_i) \\
&> u(M) - \frac{n-1}{n}\cdot u(M\setminus Y) - u(Y) \\
&= \frac{1}{n}\cdot u(M\setminus Y).
\end{align*}
Since $Y\subseteq M\setminus M_n$, bundle $M_n$ is IPS for agent $n$.

\item Case 2: $n\leq m\leq 2n-1$.
First, we show that at most $m-n$ of the first $n-1$ agents can receive at least two goods.
Assume for contradiction that at least $m-n+1$ of these agents receive at least two goods, and suppose that the first $m-n+1$ of them are agents $a_1,\dots,a_{m-n+1}$ in this order.
Let $j$ be the first good in agent $a_{m-n+1}$'s bundle.
We claim that the bundle consisting of good $j$ alone is IPS for agent $n$; this is sufficient for the desired contradiction because agent $n$ should have taken this bundle ahead of agent $a_{m-n+1}$.

Before agent $a_{m-n+1}$ receives her bundle, the goods in $X$ allocated to earlier agents are precisely those in the set $X' := \cup_{i=1}^{m-n} X_{a_i}$.
Let $Z=M\setminus (X'\cup\{j\})$.
Since $|X'|\geq m-n$, we have $|Z|\leq m - (m-n) - 1 = n-1$.
By definition of the procedure, agent $n$ does not find any of the bundles $X_{a_1},\dots,X_{a_{m-n}}$ to be IPS.
In particular, noting that $Z\subseteq M\setminus X_{a_i}$ and taking $B=Z$ in Definition~\ref{def:indivisible-prop-share},
we have $u(X_{a_i}) < u(M\setminus Z)/(m-n+1)$ for all $1\leq i\leq m-n$. Hence,
\begin{align*}
u(\{j\})
&= u(M) - u(X') - u(Z) \\
&= u(M\setminus Z) - \sum_{i=1}^{m-n}u\left(X_{a_i}\right)  \\
&> u(M\setminus Z) - \frac{m-n}{m-n+1}\cdot u(M\setminus Z) \\
&= \frac{1}{m-n+1}\cdot u(M\setminus Z).
\end{align*}
Since $Z\subseteq M\setminus\{j\}$, bundle $\{j\}$ is IPS for agent $n$, so agent $n$ should indeed have taken this bundle ahead of agent $a_{m-n+1}$. This contradiction means that at most $m-n$ of the first $n-1$ agents can receive at least two goods.

We now proceed in a similar way as in Case~1.
By definition of the procedure, agent $n$ does not find any of the bundles $X_1,\dots,X_{n-1}$ to be IPS.
In particular, noting that $Y\subseteq M\setminus X_i$ for each $1\leq i\leq n-1$ and taking $B=Y$ in Definition~\ref{def:indivisible-prop-share},
we have
$u(X_i) < \text{IPS}(n,m)\cdot u(M\setminus Y) = u(M\setminus Y)/(m-n+1)$
for all $i$.
Hence,
\begin{align*}
u(M_n)
&= u(M) - \sum_{i=1}^{n-1}u(X_i) - \sum_{i=1}^{n-1}u(Y_i) \\
&> u(M) - \frac{m-n}{m-n+1}\cdot u(M\setminus Y) - u(Y) \\
&= \frac{1}{m-n+1}\cdot u(M\setminus Y),
\end{align*}
where the inequality holds because at most $m-n$ of the sets $X_i$ are nonempty.
Since $Y\subseteq M\setminus M_n$, bundle $M_n$ is IPS for agent $n$. 
\end{itemize}
The two cases together complete the proof.
\end{proof}

Proposition~\ref{prop:indivisible-prop-share} allows us to establish the PoC for paths, which we do next in Theorem~\ref{thm:mms-any-path}. Conversely, the instances that we use to show the upper bound on the PoC in Theorem~\ref{thm:mms-any-path} also show that the factor $\text{IPS}(n,m)$ in the existence guarantee of Proposition~\ref{prop:indivisible-prop-share} cannot be improved.

\begin{theorem}
\label{thm:mms-any-path}
Let $n\geq 2$ and let $G$ be a path. Then
$$\text{PoC}(G,n) =
\begin{cases}
n & \text{ if } m\geq 2n-1;\\
m-n+1 & \text{ if } n\leq m< 2n-1;\\
1 & \text{ if } m < n.
\end{cases}
$$
\end{theorem}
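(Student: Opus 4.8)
The plan is to prove the three cases separately, handling the boundary conditions and then the two substantive regimes. The case $m<n$ is immediate: as noted in the preliminaries, both G-MMS and MMS equal $0$ when $m<n$, so by our convention $\text{PoC}(G,n)=1$. This leaves the two interesting regimes, and for each I would prove a matching upper and lower bound, observing that the claimed PoC values coincide exactly with the factor $\text{IPS}(n,m)$ from Definition~\ref{def:indivisible-prop-share}. This is no accident, and it is the key structural observation driving the proof.

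For the lower bound (that the PoC is \emph{at least} the stated value in each regime), I would invoke Proposition~\ref{prop:indivisible-prop-share}, which guarantees a connected IPS allocation on a path. The link is that an IPS guarantee translates directly into an MMS guarantee: if a bundle $A$ is IPS for an agent, there is a set $B$ with $|B|\leq n-1$ and $u(A)\geq \text{IPS}(n,m)\cdot u(M\backslash B)$. Since $|B|\leq n-1$, the set $M\backslash B$ contains at least $m-n+1$ goods, so by Lemma~\ref{lem:mms-inequality} we have $u(M\backslash B)\geq \text{MMS}(u,n)$. Combining, $u(A)\geq\text{IPS}(n,m)\cdot\text{MMS}(u,n)$, so every agent in the IPS allocation receives at least $\text{IPS}(n,m)$ of their MMS. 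Because this holds for the path with \emph{every} utility function $u$, it follows that $\text{PoC}(G,n)\geq\text{IPS}(n,m)$, which matches $1/n$ when $m\geq 2n-1$ and $1/(m-n+1)$ when $n\leq m<2n-1$.

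For the upper bound I would exhibit, in each regime, a single utility function witnessing that the PoC is no larger than claimed; taking all $n$ agents to share this utility then forces some agent below the target fraction in every connected allocation. When $n\leq m<2n-1$, the bound $1/(m-n+1)$ already follows from the general lower bound's tightness, but more directly one can reuse the style of the star construction: place large equal values on a few goods arranged so that the path's connectivity forces some agent onto a single low-value good. When $m\geq 2n-1$, I would design a utility function on the path whose MMS partition splits the line into $n$ blocks of roughly equal value, but where the \emph{linear} structure prevents any connected partition from doing better than giving one agent a $1/n$ share after the unavoidable boundary losses—concretely, one assigns unit values on $2n-1$ goods (with remaining goods valued $0$) so that $\text{MMS}=2$ while any connected $n$-partition of a path of $2n-1$ unit goods must leave some part with value $1$, giving ratio $1/2=1/n$ only when $n=2$; for general $n$ the witness needs $n$ heavy "separator" goods interleaved so that connectivity forces a cut through a heavy good. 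The main obstacle is precisely this upper-bound construction in the regime $m\geq 2n-1$: unlike the star, a path admits many more connected partitions, so the witnessing instance must exploit the one-dimensional ordering carefully to ensure that \emph{no} contiguous $n$-partition can avoid isolating a low-value segment, and verifying that the G-MMS of this instance is exactly $1/n$ of its MMS (rather than merely at most) is the delicate step. I would expect to reduce this to a counting argument about how a path with $2n-1$ unit-value goods and interspersed zeros can be cut into $n$ contiguous pieces, showing some piece necessarily has value at most $1$ while the MMS achieves value $2$.
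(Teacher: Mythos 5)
Your lower bound argument is exactly right and coincides with the paper's: apply Proposition~\ref{prop:indivisible-prop-share} to $n$ agents sharing the utility $u$, then convert the IPS guarantee into an MMS guarantee via Lemma~\ref{lem:mms-inequality}, using $|M\backslash B_i|\geq m-n+1$. The problem is the upper bound, which you yourself flag as the delicate step, and where both of your concrete attempts fail. For $m\geq 2n-1$, unit values on $2n-1$ goods give $\text{MMS}=1$, not $2$: the total value is $2n-1<2n$, so no partition into $n$ parts can have minimum $2$; and since a connected partition of the path into $n-1$ pairs plus a singleton also achieves minimum $1$, the ratio for this instance is $1$, not $1/n$ (this is already wrong at $n=2$). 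Your proposed fix with $n$ heavy ``separator'' goods interleaved fails as well: with values $H,1,H,1,\dots,1,H$ ($n$ heavy, $n-1$ light), the \emph{connected} partition $\{H\},\{1,H\},\{1,H\},\dots,\{1,H\}$ gives every part value at least $H=\text{MMS}$, so the ratio is again $1$. The correct witness inverts your counts: take $n-1$ heavy goods of value $n$ alternating with $n$ light goods of value $1$, i.e., values $1,n,1,n,\dots,n,1$ on the first $2n-1$ vertices and $0$ elsewhere. Then $\text{MMS}=n$ (each heavy good alone, all light goods together), but any connected $n$-partition has a part containing no heavy good, and by the alternation such a part contains at most one good of value $1$, so $\text{G-MMS}\leq 1$. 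The point you are missing is that the number of heavy goods must be $n-1$ (one fewer than the number of parts, forcing some part to miss them all), while the light goods must be separated from one another by heavy goods so that a heavy-free connected part is stuck with at most one of them.

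Your treatment of the regime $n\leq m<2n-1$ also has a gap. The claim that the bound ``already follows from the general lower bound's tightness'' is backwards: Theorem~\ref{thm:mms-PoC-general} only gives $\text{PoC}(G,n)\geq 1/(m-n+1)$, and its tightness for paths is precisely what must be proved here, so it cannot be invoked. The star construction does not transfer verbatim either, since on a path the heavy goods must again be interleaved with the light ones; the paper's witness assigns values $1,m-n+1,1,\dots,m-n+1,1$ to the first $2m-2n+1$ vertices and value $m-n+1$ to each of the remaining $2n-1-m$ vertices, yielding $n-1$ heavy goods and $m-n+1$ light goods with $\text{MMS}=m-n+1$, while every connected $n$-partition leaves some part with value at most $1$ by the same heavy-free-part argument as above.
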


\begin{proof}
If $m<n$ the PoC is $1$, so assume that $m\geq n$.
We will show that $\text{PoC}(G,n)=1/\text{IPS}(n,m)$.

First, we show that $\text{PoC}(G,n)\leq 1/\text{IPS}(n,m)$.
Take an arbitrary utility function $u$.
Applying Proposition~\ref{prop:indivisible-prop-share} to $n$ agents who have the same utility function $u$, we find that there exists a connected IPS allocation.
This means each agent $i$ receives a bundle $M_i$ for which there exists a set $B_i\subseteq M\setminus M_i$ with $|B_i|\leq n-1$ such that $u(M_i)\geq \text{IPS}(n,m)\cdot u(M\setminus B_i)$.
Since $|M\setminus B_i|\geq m-n+1$, Lemma~\ref{lem:mms-inequality} implies that $u(M\setminus B_i)\geq \text{MMS}(u,n)$.
Consequently, we have
$$u(M_i)\geq \text{IPS}(n,m)\cdot u(M\setminus B_i) \geq \text{IPS}(n,m)\cdot \text{MMS}(u,n)$$
for all agents $i$.
Hence $(M_1,\dots,M_n)$ is a connected partition with each part having value at least $\text{IPS}(n,m)\cdot \text{MMS}(u,n)$. 
It follows that $\text{PoC}(G,n)\leq 1/\text{IPS}(n,m)$.

Next, we show that $\text{PoC}(G,n)\geq 1/\text{IPS}(n,m)$. We consider two cases.

\begin{itemize}
\item Case 1: $m\geq 2n-1$.
Consider a utility function $u$ with value $1,n,1,\dots,n,1$ for the first $2n-1$ vertices on the path (so exactly $n$ vertices have value $1$), and value $0$ for the remaining vertices.
We have $\text{MMS}(u,n) = n$.
On the other hand, one can check that in any connected partition into $n$ parts, at least one of the parts has value at most $1$. Hence the PoC is at least $n = 1/\text{IPS}(n,m)$.
\item Case 2: $n\leq m< 2n-1$.
Consider a utility function $u$ with value $1,m-n+1,1,\dots,m-n+1,1$ for the first $2m-2n+1$ vertices on the path (so $m-n+1$ vertices have value $1$ while $m-n$ vertices have value $m-n+1$),
and value $m-n+1$ for the remaining $2n-1-m$ vertices.
In total, $n-1$ vertices have value $m-n+1$, and $m-n+1$ vertices have value $1$.
We have $\text{MMS}(u,n) = m-n+1$.
On the other hand, one can check that in any connected partition into $n$ parts, at least one of the parts has value at most $1$. Hence the PoC is at least $m-n+1 = 1/\text{IPS}(n,m)$.
\end{itemize}
In both cases we have $\text{PoC}(G,n)\geq 1/\text{IPS}(n,m)$, completing the proof.
\end{proof}

Note that in order to compute a connected allocation for $n$ agents on a path such that every agent receives at least a $1/\text{PoC}(G,n)=\text{IPS}(n,m)$ fraction of their MMS, we can use the algorithm in Proposition~\ref{prop:indivisible-prop-share}, which runs in polynomial time, to compute a connected IPS allocation.
The first part in the proof of Theorem~\ref{thm:mms-any-path} implies that this allocation fulfills the desired guarantee.

\section{Envy-Freeness Relaxations}
\label{sec:envyfree}
Having extensively studied maximin share guarantees in the presence of connectivity requirements in the previous section, we now turn our attention to relaxations of envy-freeness. 
We again determine the price that we have to pay in order to maintain connectivity---intuitively, the less connected the graph is, the higher this price becomes.
Unless specified otherwise, we allow agents to have arbitrary monotonic utilities in this section.

We say that a graph $G$ {\em guarantees} EF$k$ for $n$ agents if for all permitted utilities of the $n$ agents, there exists a connected EF$k$ allocation.

\subsection{Two Agents}
\label{sec:envyfree-two}
For two agents, \citet{BiloCaFl19} characterized the set of graphs that always admit an EF1 allocation regardless of the agents' utilities.
Their characterization is based on the observation that such graphs necessarily admit a vertex ordering to which a discrete variant of the cut-and-choose protocol can be applied---in other words, the ordering is bipolar.
The family of graphs that admit a bipolar ordering can be characterized using the block decomposition of a graph. A \emph{block} is a maximal biconnected subgraph of a graph, and a \emph{cut vertex} is a vertex whose removal increases the number of connected components in the graph.
The \emph{block decomposition} of a graph $G$ is a bipartite graph $B(G)$ with all blocks of $G$ on one side and all cut vertices of $G$ on the other side; there is an edge between a block and a cut vertex in $B(G)$ if and only if the cut vertex belongs to the block in $G$.\footnote{We refer to the paper of Bil\`{o} et al.~for examples of graphs and their block decompositions.} 

\begin{proposition}[\citep{BondyMu08}]
\label{prop:block-decomposition}
For any connected graph $G$, each pair of blocks share no edge and at most one cut vertex, and the block decomposition $B(G)$ is a tree.
\end{proposition}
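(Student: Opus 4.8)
The plan is to prove all three claims of Proposition~\ref{prop:block-decomposition} directly from the definitions of block, cut vertex, and biconnectivity. The overall strategy is to first settle the two local structural claims (that any two blocks share no edge and at most one cut vertex), since these statements are really about the maximality of blocks, and then to use them to deduce that $B(G)$ is a tree, which is the main content.

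First I would prove that two distinct blocks $B_1, B_2$ share no edge. The key fact to establish is that being in a common biconnected subgraph is (essentially) an equivalence relation on edges: if two edges lie on a common cycle, they belong to the same block. I would argue that if $B_1$ and $B_2$ shared an edge $e=uv$, then $B_1 \cup B_2$ would itself be biconnected, contradicting the maximality of each block. To see that $B_1 \cup B_2$ is biconnected, I would verify that deleting any single vertex $w$ leaves it connected: within each $B_i$ the graph stays connected (since $B_i$ is biconnected, or trivially if $B_i$ is a single edge), and the two parts remain joined because they share the edge $e$, at least one of whose endpoints survives the deletion of $w$. Hence $B_1 \cup B_2$ is a biconnected subgraph strictly larger than $B_1$, contradicting maximality. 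For the claim that two blocks share at most one cut vertex, I would similarly suppose $B_1$ and $B_2$ shared two vertices $x, y$; since each $B_i$ is biconnected it contains two internally disjoint $x$--$y$ paths, and combining a path from $B_1$ with a path from $B_2$ yields a cycle through $x$ and $y$ that forces the union into a single block, again contradicting maximality.

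Next I would show $B(G)$ is a tree, i.e.\ that it is connected and acyclic. Connectivity of $B(G)$ follows from the connectivity of $G$: I would take any two blocks and connect them in $B(G)$ by tracing a walk in $G$ between vertices of the two blocks and noting that consecutive blocks along this walk are joined through a shared cut vertex. For acyclicity, the heart of the argument is to show $B(G)$ has no cycle. I would suppose toward a contradiction that there is a cycle $B_1, c_1, B_2, c_2, \dots, B_t, c_t, B_1$ in the bipartite graph $B(G)$, alternating blocks and cut vertices. Each $c_i$ lies in both adjacent blocks, so I can pick a path inside each block connecting its two incident cut vertices; concatenating these paths around the cycle produces a closed walk in $G$ that passes through vertices of several distinct blocks. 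The contradiction comes from showing this walk forces two of the blocks to actually lie in a common biconnected subgraph, so they cannot be distinct maximal blocks.

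The main obstacle I expect is this last step: carefully turning a cycle in the abstract bipartite graph $B(G)$ into a single large cycle in $G$ that violates the maximality of the blocks. The subtlety is that the internal paths chosen within each block must be combined so that the resulting closed walk in $G$ is actually a cycle (no repeated internal vertices), which requires using the ``share at most one cut vertex'' claim already proved so that consecutive blocks meet only at the designated cut vertices. Once I have a genuine cycle meeting two distinct blocks, those two blocks together with the cycle form a biconnected subgraph properly containing each, the desired contradiction. All three parts then combine to give that $B(G)$ is a connected acyclic bipartite graph, hence a tree.
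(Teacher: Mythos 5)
The paper itself offers no proof of Proposition~\ref{prop:block-decomposition}: it imports the statement from Bondy and Murty. Your attempt can therefore only be measured against the standard textbook argument, and your overall plan (maximality of blocks for the two local claims, then connectivity and acyclicity of $B(G)$) is exactly that argument; your proof of the edge claim, via deleting an arbitrary vertex $w$ from $B_1\cup B_2$ and gluing the two connected pieces along a surviving endpoint of the shared edge, is correct. The genuine gap is in your proof that two blocks share at most one vertex. If $B_1$ and $B_2$ share $x$ and $y$, you pick an $x$--$y$ path in each block and ``combine'' them into a cycle; but nothing guarantees these two paths are internally disjoint, since at this stage the blocks could a priori share further vertices through which both paths pass, so the combination need not be a cycle at all. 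Moreover, even granted such a cycle, the inference that it ``forces the union into a single block'' is unsupported: a cycle through $x$ and $y$ already exists inside $B_1$ alone (its two internally disjoint $x$--$y$ paths), so the mere existence of such a cycle cannot merge the blocks; you would need a cycle containing edges of both blocks plus a union argument. The clean repair is to reuse, verbatim, the deletion argument from your edge claim: for any vertex $w$, the graphs $B_1-w$ and $B_2-w$ are connected and still share at least one of $x,y$, so $B_1\cup B_2$ is biconnected, contradicting maximality. (This single argument in fact subsumes the edge claim, since sharing an edge entails sharing two vertices.)

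A second, smaller omission affects the tree part: both your connectivity and your acyclicity arguments for $B(G)$ tacitly assume that any vertex lying in two distinct blocks is a cut vertex---otherwise a walk in $G$ does not translate into a walk in $B(G)$, whose edges join blocks only to cut vertices. This is itself part of what must be proved; it follows, for instance, by noting that if $v\in B_1\cap B_2$ were not a cut vertex, then a path in $G-v$ between a neighbor of $v$ in $B_1$ and a neighbor of $v$ in $B_2$ would close up into a cycle $C$ containing an edge of each block, and the union argument above would force $C\subseteq B_1$ and $C\subseteq B_2$, contradicting the edge claim. Finally, in the acyclicity step you should take a \emph{shortest} cycle of $B(G)$, so that non-consecutive blocks on it share no vertex; with that choice, together with the ``at most one shared vertex'' claim for consecutive blocks, your concatenation is a genuine cycle of $G$ and the contradiction goes through as you describe.
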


Bil\`{o} et al.~showed that a connected graph $G$ guarantees EF1 for two agents if and only if a bipolar ordering exists in $G$, i.e., the blocks of $G$ can be arranged into a path.

\begin{proposition}[\citep{BiloCaFl19}]
\label{prop:EF-two-bipolar}
For any connected graph $G$, the following four conditions are equivalent:\footnote{Bil\`{o} et al.~used a slightly stronger definition of EF1 that they called ``envy-freeness up to one \emph{outer} good''.
In their definition, one is only allowed to remove a good if doing so leaves the remaining bundle connected.
It can be verified that their result also holds for the standard definition of EF1.
}
\begin{enumerate}
    \item[$(1)$] The block decomposition $B(G)$ is a path;
    \item[$(2)$] $G$ admits a bipolar ordering;
    \item[$(3)$] $G$ guarantees EF1 for two agents with arbitrary monotonic utilities;
    \item[$(4)$] $G$ guarantees EF1 for two agents with identical binary utilities.
\end{enumerate}
\end{proposition}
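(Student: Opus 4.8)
The plan is to establish the four-way equivalence through the cycle $(1)\Rightarrow(2)\Rightarrow(3)\Rightarrow(4)\Rightarrow(1)$. Of these, $(3)\Rightarrow(4)$ is immediate: identical binary utilities are a special case of arbitrary monotonic utilities, so a graph that guarantees EF1 for the latter certainly guarantees it for the former. The real work lies in the remaining three arrows, and I expect $(4)\Rightarrow(1)$ to be the crux.

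For $(1)\Rightarrow(2)$ I would use that when $B(G)$ is a path the blocks of $G$ form a chain $B_1,\dots,B_r$ in which consecutive blocks $B_j,B_{j+1}$ meet in a single cut vertex $c_j$ (Proposition~\ref{prop:block-decomposition}). Each block is biconnected or a single edge, so by Proposition~\ref{prop:biconnected} and the explicit ear-decomposition construction already carried out in the proof of Theorem~\ref{thm:mms-two-connectivity-2}, each $B_j$ admits a bipolar ordering whose first vertex is $c_{j-1}$ and whose last vertex is $c_j$ (the two extreme cut vertices being replaced by arbitrary endpoints in the end blocks $B_1,B_r$). Concatenating these orderings while identifying each shared cut vertex yields an ordering of $V(G)$, and a short induction on $r$ shows it is bipolar: a prefix ending inside $B_j$ is the union of $B_1\cup\dots\cup B_{j-1}$ with a connected prefix of $B_j$, and the two pieces share $c_{j-1}$, so their union is connected; suffixes are symmetric. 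For $(2)\Rightarrow(3)$, given a bipolar ordering $v_1,\dots,v_m$, set $L_k=\{v_1,\dots,v_k\}$ and $R_k=\{v_{k+1},\dots,v_m\}$, both connected for every $k$, and run a discrete cut-and-choose. Agent~$1$ picks the smallest $k^*$ with $u_1(L_{k^*})\ge u_1(R_{k^*+1})$ and agent~$2$ takes whichever of $L_{k^*},R_{k^*}$ she prefers. Minimality gives $u_1(R_{k^*})>u_1(L_{k^*-1})$, so if agent~$1$ keeps $L_{k^*}$ she is EF1 after deleting the boundary good $v_{k^*+1}$ from $R_{k^*}$, and if she keeps $R_{k^*}$ she is EF1 after deleting $v_{k^*}$ from $L_{k^*}$; agent~$2$ is envy-free by choice. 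The allocation is therefore connected and EF1 for all monotonic utilities.

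The main obstacle is $(4)\Rightarrow(1)$, which I would prove contrapositively. If $B(G)$ is not a path then, being a tree (Proposition~\ref{prop:block-decomposition}), it has a node of degree at least $3$, and I must produce identical binary utilities with no connected EF1 bipartition. For identical binary utilities a bipartition is EF1 exactly when the two parts differ in value by at most one, so it suffices to place value $1$ on a set $S$ such that every connected bipartition splits $S$ with difference at least $2$. When the branching node is a cut vertex $c$, then $G-c$ has at least three components, and assigning value $1$ to $c$ and to one vertex in each of three components does the job: the part avoiding $c$ lies inside a single component and so captures at most one unit, while the part containing $c$ captures at least three. The delicate case is when the branching node is a \emph{block} $B^*$, because then no single vertex is forced onto the majority side, so the cut-vertex construction has no direct analogue.

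To handle the block case I would rely on a \emph{lump property} of end blocks: in any connected bipartition in which neither part lies entirely inside an end block $B_i$, the whole of $B_i$ (its private vertices together with its cut vertex) lands on one side, since each private vertex reaches the rest of the graph only through the cut vertex and would otherwise be stranded. Choosing three end blocks in distinct branches of $B^*$ and weighting, in each, one private vertex together with its cut vertex then creates three indivisible units of value $2$; by the lump property each unit contributes $0$ or $2$ to a given part, so every connected bipartition gives one part an even total out of $6$, which can never equal $3$ and thus forces a difference of at least $2$. I expect the fiddliest points to be a clean statement and proof of the lump property and the disposal of the degenerate configurations—a part of size at most $1$, or a part contained entirely within one end block—each of which must be checked to still yield an imbalance of at least $2$.
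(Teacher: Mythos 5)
The paper never proves this proposition: it is imported from \cite{BiloCaFl19} (with only a footnote noting that the ``outer good'' version of EF1 used there can be replaced by the standard one), so there is no in-paper proof to compare you against. Judged on its own merits, your blind proof is correct and self-contained, and its outline coincides both with the original argument of Bil\`o et al.\ and with the paper's own generalization to EF$k$ (Theorem~\ref{thm:EF-two-EFk}), which runs the identical cycle $(1)\Rightarrow(2)\Rightarrow(3)\Rightarrow(4)\Rightarrow(1)$, proves $(2)\Rightarrow(3)$ by exactly your discrete cut-and-choose idea (there via merging and invoking this proposition), and proves $(4)\Rightarrow(1)$ by binary-utility counterexamples read off the block tree, much like your two cases. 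Two points you left as ``fiddly'' deserve to be closed, and both close easily. First, the degenerate configurations in your block-branching case: if a connected part $A$ lies entirely inside one of the three chosen end blocks $B_i$, then $A$ contains at most the single unit $\{p_i,d_i\}$ of value $2$, while the complement contains the two disjoint units of the other two end blocks (the three end blocks are pairwise vertex-disjoint, since any shared vertex would be a cut vertex giving a path in $B(G)$ between different branches that avoids $B^*$, contradicting the tree structure), so the complement has value at least $4$ and the imbalance is at least $2$; this also subsumes the ``part of size at most one'' case, and the two parts cannot both lie inside end blocks because the third end block would then be unallocated. Second, in $(2)\Rightarrow(3)$ your minimal $k^*$ automatically satisfies $k^*\le m-1$, because $u_1(L_{m-1})\ge 0=u_1(R_m)$; hence the boundary good $v_{k^*+1}$ used when agent~$1$ keeps $L_{k^*}$ always exists, and when $k^*=0$ the case where agent~$1$ keeps $R_0=M$ is trivially EF1, so the certificate $v_{k^*}$ is only needed when $k^*\ge 1$ and then it exists. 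Your lump property is also sound, but its one-line justification tacitly uses the standard fact that a vertex lying in two blocks is a cut vertex (implicit in Proposition~\ref{prop:block-decomposition}); stating that makes the ``stranded'' argument rigorous. With these details spelled out, your proof is a complete and valid substitute for the cited one.
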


Bil\`{o} et al.'s characterization allows us to identify graphs for which an EF1 allocation always exists in the case of two agents.
However, for the remaining graphs, it does not provide any fairness guarantee. Our next result generalizes their characterization by giving the best possible EF$k$ guarantee that can be made for each specific graph.
In particular, we will show that a graph $G$ guarantees EF$k$ for two agents if and only if $G$ admits a bipolar ordering over a subset of the vertices where each vertex in the ordering has at most $k-1$ vertices `hanging' from it.

To formalize this idea, it will be useful to define the following notions.
Given a path $P$ in the block graph $B(G)$ of a graph $G$, for any vertex $v$ of $G$ that is not contained in any block in $P$, we define its \emph{guardian} to be the cut vertex $v'$ closest to $v$ in $B(G)$ that belongs to some block in $P$ (see Figure~\ref{fig:guardian} for an example); we say that $v$ is a \emph{dependent} of $v'$.
For a given graph, we define a \emph{merge} on a subset $V$ of vertices forming a connected subgraph to be an operation where we replace the vertices in $V$ by a single vertex $v$, and there is an edge between $v$ and another vertex $w$ in the new graph exactly when $w$ is adjacent to at least one vertex of $V$ in the original graph.
A path in a tree is said to be \emph{maximal} if each of its end vertices is a leaf of the tree.

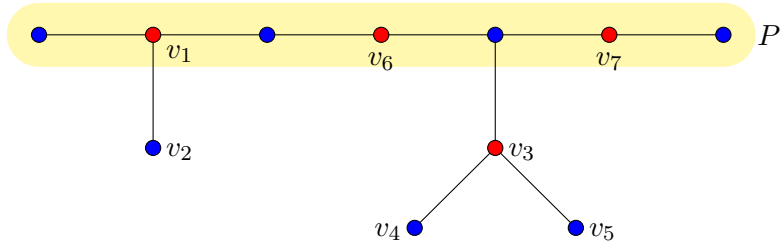
\begin{figure}[!ht]
\centering
\begin{tikzpicture}[scale=0.85]
\draw[line cap=round, yellow!40, line width=8.5mm] (0,4)--(9,4);
\draw (0,4) -- (9,4);
\draw (1.5,2.5) -- (1.5,4);
\draw (6,2.5) -- (6,4);
\draw (6,2.5) -- (4.94,1.44);
\draw (6,2.5) -- (7.06,1.44);
\draw [fill=blue] (0,4) circle [radius = 0.1];
\draw [fill=white] (1.5,4) circle [radius = 0.1];
\draw [fill=blue] (3,4) circle [radius = 0.1];
\draw [fill=white] (4.5,4) circle [radius = 0.1];
\draw [fill=blue] (6,4) circle [radius = 0.1];
\draw [fill=white] (7.5,4) circle [radius = 0.1];
\draw [fill=blue] (9,4) circle [radius = 0.1];
\draw [fill=blue] (1.5,2.5) circle [radius = 0.1];
\draw [fill=white] (6,2.5) circle [radius = 0.1];
\draw [fill=blue] (4.94,1.44) circle [radius = 0.1];
\draw [fill=blue] (7.06,1.44) circle [radius = 0.1];
\node at (1.85,3.75) {$v_1$};
\node at (1.85,2.45) {$v_2$};
\node at (6.35,2.45) {$v_3$};
\node at (4.59,1.39) {$v_4$};
\node at (7.41,1.39) {$v_5$};
\node at (4.5,3.65) {$v_6$};
\node at (7.5,3.65) {$v_7$};
\node at (9.6,4) {$P$};
\end{tikzpicture}
\caption{An example of a block decomposition $B(G)$ in the proof of Theorem~\ref{thm:EF-two-EFk}.
Blue vertices correspond to blocks in $G$ and white vertices correspond to cut vertices in $G$.
Here, $C(P)=\{v_1,v_3,v_6,v_7\}$.
In this example, $v_1$ is the guardian of all vertices in block $v_2$ except itself, $v_3$ is the guardian of all vertices in blocks $v_4$ and $v_5$ except itself, while $v_6$ and $v_7$ are not guardians of any vertices.
}
\label{fig:guardian}
\end{figure}

\begin{theorem}
\label{thm:EF-two-EFk}
For any connected graph $G$ and positive integer $k$, the following four conditions are equivalent:
\begin{enumerate}
\item[$(1)$] There exists a path $P$ in the block decomposition $B(G)$ such that each cut vertex that belongs to some block in $P$ has at most $k-1$ dependents;
\item[$(2)$] The vertices of $G$ can be partitioned into disjoint subsets $V_1,\ldots,V_r$ such that each $V_j$ forms a connected subgraph of size at most $k$ in $G$, and if we merge the vertices in every set $V_j$ separately, the resulting graph admits a bipolar ordering;
\item[$(3)$] $G$ guarantees EF$k$ for two agents with arbitrary monotonic utilities;
\item[$(4)$] $G$ guarantees EF$k$ for two agents with identical binary utilities.
\end{enumerate}
\end{theorem}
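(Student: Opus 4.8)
The plan is to establish the four-way equivalence through the cycle $(1)\Rightarrow(2)\Rightarrow(3)\Rightarrow(4)\Rightarrow(1)$, leaning on \cite{BiloCaFl19}'s characterization (Proposition~\ref{prop:EF-two-bipolar}) and the fact that $B(G)$ is a tree (Proposition~\ref{prop:block-decomposition}) wherever possible. For $(1)\Rightarrow(2)$ I would build the partition directly from the path $P$: for every cut vertex $v'$ lying on a block of $P$, merge $v'$ together with all of its dependents into one set $V_{v'}$, and leave every other vertex of $P$'s blocks as a singleton. Since $v'$ has at most $k-1$ dependents, each $V_{v'}$ has size at most $k$, and it is connected because the dependents are exactly the vertices separated from $P$ by $v'$, i.e. those sitting in the blocks hanging off $v'$. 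Contracting each $V_{v'}$ collapses every off-$P$ branch into its guardian, so the block decomposition of the resulting graph is precisely the path $P$; by Proposition~\ref{prop:EF-two-bipolar} the merged graph then admits a bipolar ordering, which is condition $(2)$.

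For $(2)\Rightarrow(3)$ I would run a discrete cut-and-choose on the merged bipolar ordering $w_1,\dots,w_r$, writing $L_j$ and $R_j$ for the prefix $V_1\cup\dots\cup V_j$ and the suffix $V_{j+1}\cup\dots\cup V_r$; both are connected in $G$ since each $V_i$ is connected and prefixes and suffixes of a bipolar ordering are connected. The cutter (agent~$1$) takes $j^\ast$ to be the least index with $u_1(L_{j^\ast})\ge u_1(R_{j^\ast})$, so that $u_1(L_{j^\ast-1})<u_1(R_{j^\ast-1})$. The crucial point is a dichotomy on the two reals $u_1(L_{j^\ast-1})$ and $u_1(R_{j^\ast})$: if $u_1(L_{j^\ast-1})\le u_1(R_{j^\ast})$ then the cut $(L_{j^\ast},R_{j^\ast})$ works, and otherwise the cut $(L_{j^\ast-1},R_{j^\ast-1})$ works. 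In both cases removing the single boundary super-vertex $V_{j^\ast}$ (at most $k$ goods) from the larger side brings it below the smaller side, so the cutter is EF$k$ whichever part she ends up with, while the chooser, who takes her preferred bundle, is envy-free. Since identical binary utilities are a special case of monotonic utilities, $(3)\Rightarrow(4)$ is immediate.

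The main work, and the step I expect to be hardest, is $(4)\Rightarrow(1)$, which I would prove in contrapositive form. With identical binary utilities a connected bipartition $(M_1,M_2)$ is EF$k$ exactly when the numbers of value-$1$ goods in the two bundles differ by at most $k$, so the task reduces to showing that if no path $P$ as in $(1)$ exists, then there is a set $W$ of value-$1$ goods for which every connected bipartition splits $W$ with difference at least $k+1$. The idea is to exploit the tree structure of $B(G)$: failure of $(1)$ means that along every maximal path some cut vertex retains at least $k$ dependents, and I would isolate a single branch vertex $c$ whose off-path branches are unavoidably heavy, using the structural fact that any connected bundle avoiding $c$ lives inside one branch while its complement must absorb all the others. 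Placing value-$1$ goods on $c$ and on the $\ge k$ dependents, padded across the branches so that no single branch can be cut to restore balance, should force every connected bipartition to concentrate an excess of at least $k+1$ value-$1$ goods on one side.

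The delicate part of $(4)\Rightarrow(1)$ is that the off-path branches can be arbitrary connected subgraphs rather than single leaves, so a whole branch can be cut at any depth to try to rebalance $W$; the example of a spider with a short leg of length exactly $k$ already shows that the value-$1$ goods must be distributed so that taking an entire branch still leaves the two sides differing by $k+1$ (which forces using the vertex $c$ itself to pad the total). Making this robust for a general $G$ requires a careful weight assignment together with a case analysis driven by the block-tree structure, and this is where I expect the bulk of the technical effort to lie.
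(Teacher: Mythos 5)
Your implications $(1)\Rightarrow(2)$, $(2)\Rightarrow(3)$, and $(3)\Rightarrow(4)$ are sound and essentially match the paper: the partition formed by each cut vertex of $C(P)$ together with its dependents is exactly the paper's construction, and your discrete cut-and-choose on the merged bipolar ordering (the dichotomy between the cuts at $j^\ast$ and $j^\ast-1$, removing the boundary super-vertex $V_{j^\ast}$ of size at most $k$) is a correct, self-contained substitute for the paper's black-box invocation of Proposition~\ref{prop:EF-two-bipolar} on the merged graph.

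The genuine gap is $(4)\Rightarrow(1)$, which you present only as a plan and explicitly defer (``careful weight assignment,'' ``bulk of the technical effort''). The missing idea is an extremal argument that makes your padding possible at all. The paper defines $k^\ast$ to be the smallest integer such that some \emph{maximal} path $P$ in $B(G)$ has every cut vertex in $C(P)$ with at most $k^\ast-1$ dependents (so $k^\ast-1\ge k$ when $(1)$ fails), and among such paths fixes one minimizing the number of cut vertices with exactly $k^\ast-1$ dependents. The heart of the proof is then a switch argument: if the side $L_v$ of the path hanging off a heaviest cut vertex $v$ contained fewer vertices than $v$'s largest hanging branch (respectively, if $L_B$ contained fewer than $k^\ast$ vertices, in the case where $v$ lies in a block $B$ of $P$ but not on $P$ itself), one could swap that side with the hanging branch and obtain a maximal path contradicting the extremal choice. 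It is precisely this step that guarantees both sides of the path contain enough vertices to host the value-$1$ goods---i.e., that your ``padding so that no single branch can be cut to restore balance'' can actually be realized. For an arbitrary path witnessing the failure of $(1)$ this can be false: one side may simply be too small to absorb the required weight, and then a connected bipartition cutting at $v$ is EF$k$, so no bad instance can be built around that path. A second piece your sketch does not anticipate is the case distinction forced by the fact that the heavy cut vertex need not lie on $P$ at all---it may be interior to a block of $P$ (vertex $v_3$ in Figure~\ref{fig:guardian})---and the two cases require different switches and different weight totals ($2r+k^\ast$ versus $3k^\ast$). So your high-level target (binary weights concentrated around a heavy cut vertex, forcing an imbalance of more than $k^\ast-1\geq k$ in every connected bipartition) is the right one, but the proposal as written does not contain the argument that makes it work.
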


\begin{proof}
Consider the block decomposition $B(G)$, and recall from Proposition~\ref{prop:block-decomposition} that $B(G)$ is a tree. For each path $P$ of $B(G)$, we denote by $C(P)$ the set of cut vertices that belong to some block in $P$.

To show $(1) \Rightarrow (2)$, suppose that there exists a path $P$ in the block decomposition $B(G)$ such that each cut vertex in $C(P)$ has at most $k-1$ dependents. Take each set $V_j$ in the theorem statement to consist of a vertex in $C(P)$ along with all of its dependents. Clearly, at most $k$ vertices belong to each $V_j$. Also, each $V_j$ is connected since the vertices in $V_j$ form a connected subgraph of the block decomposition. Let $G'$ be the graph resulting from the merge operations on each $V_j$ separately. The block decomposition of $G'$ is a path, and hence $G'$ admits a bipolar ordering by Proposition~\ref{prop:EF-two-bipolar}.

To show $(2) \Rightarrow (3)$, suppose that the vertices of $G$ can be partitioned into disjoint subsets $V_1,\dots,V_r$ as defined in the statement of the theorem.
We will show that $G$ guarantees EF$k$ for two agents.
Consider arbitrary monotonic utilities of the two agents $u_i$ for $i=1,2$. Let $G'$ be the graph resulting from the merge operations on each $V_j$ for $j=1,\dots,r$.
We define the utility functions $u'_i$ on $G'$ for $i=1,2$, where the value of an agent for each bundle $M'$ is equal to her value for all vertices of $G$ that are merged into the vertices of $M'$. Specifically, for each $i=1,2$ and each bundle $M'$ in $G'$,
$$
u'_i(M')=u_i\left(\bigcup_{V_j \in M'}V_j\right).
$$
Note that each $u'_i$ remains monotonic and, by our assumption, $G'$ admits a bipolar ordering.
Thus, by Proposition~\ref{prop:EF-two-bipolar}, $G'$ admits a connected EF1 allocation $(M'_1,M'_2)$ with the utilities $u'_i$, so an agent's envy can be eliminated by removing a vertex of $G'$ from the other agent's bundle.
Consider the corresponding allocation $(M_1,M_2)$ of $G$, where $M_i=\bigcup_{V_j \in M'_i}V_j$ for $i=1,2$.
Since each vertex of $G'$ is a merge of at most $k$ vertices, any envy that results from this allocation can be eliminated by removing at most $k$ vertices, and so the allocation $(M_1,M_2)$ is a connected EF$k$ allocation of $G$.

The implication $(3) \Rightarrow (4)$ is immediate. To show $(4) \Rightarrow (1)$, suppose that for every path $P$ of $B(G)$, there exists some cut vertex in $C(P)$ with at least $k$ dependents.
We will show that there exist identical binary utility functions for which the graph $G$ does not admit an EF$k$ allocation.

Let $k^*\geq 1$ be the smallest number for which there exists a \emph{maximal} path $P$ in $B(G)$ such that each cut vertex in $C(P)$ is the guardian of at most $k^*-1$ vertices in $G$.
Choose a maximal path $P$ in $B(G)$ where each cut vertex in $C(P)$ has at most $k^*-1$ dependents; if several such paths exist, choose one that minimizes the number of vertices in $C(P)$ with exactly $k^*-1$ dependents.
By definition of $k^*$, we have $k^*-1 \geq k$.
Hence it suffices to show the existence of identical binary utility functions for which the graph $G$ does not admit an EF$(k^*-1)$ allocation.

Let $v\in C(P)$ be a cut vertex with $k^*-1$ dependents. It could be that $v$ is on the path $P$ itself (e.g., vertices $v_1$, $v_6$, and $v_7$ in Figure~\ref{fig:guardian}), or $v$ is not on the path $P$ but belongs to some block in $P$ (e.g., vertex $v_3$ in Figure~\ref{fig:guardian}).
We consider the two cases separately.

\begin{itemize}

\begin{figure}[!ht]
\centering
\subfloat{
\begin{tikzpicture}[scale=0.85]
\draw[line cap=round, yellow!40, line width=8.5mm] (0,4)--(9,4);
\draw[line cap=round, blue!20, line width=5mm] (0,4)--(3,4);
\draw[line cap=round, green!30, line width=5mm] (4.5,2.5)--(3.44,1.44)--(3.44,-0.06);
\draw[line cap=round, green!30, line width=5mm] (4.5,2.5)--(5.56,1.44)--(5.56,-0.06);
\draw[line cap=round, orange!40, line width=5mm] (6,4)--(9,4);
\draw[line cap=round, orange!40, line width=5mm] (7.5,4)--(7.5,2.5);
\draw (0,4) -- (9,4);
\draw (4.5,2.5) -- (4.5,4);
\draw (7.5,2.5) -- (7.5,4);
\draw (4.5,2.5) -- (3.44,1.44);
\draw (4.5,2.5) -- (5.56,1.44);
\draw (3.44,1.44) -- (3.44,-0.06);
\draw (5.56,1.44) -- (5.56,-0.06);
\draw (6,2.5) -- (4.5,4);
\draw [fill=blue] (0,4) circle [radius = 0.1];
\draw [fill=white] (1.5,4) circle [radius = 0.1];
\draw [fill=blue] (3,4) circle [radius = 0.1];
\draw [fill=white] (4.5,4) circle [radius = 0.1];
\draw [fill=blue] (6,4) circle [radius = 0.1];
\draw [fill=white] (7.5,4) circle [radius = 0.1];
\draw [fill=blue] (9,4) circle [radius = 0.1];
\draw [fill=blue] (4.5,2.5) circle [radius = 0.1];
\draw [fill=white] (3.44,1.44) circle [radius = 0.1];
\draw [fill=white] (5.56,1.44) circle [radius = 0.1];
\draw [fill=blue] (3.44,-0.06) circle [radius = 0.1];
\draw [fill=blue] (5.56,-0.06) circle [radius = 0.1];
\draw [fill=blue] (7.5,2.5) circle [radius = 0.1];
\draw [fill=blue] (6,2.5) circle [radius = 0.1];
\node at (9.6,4) {$P$};
\node at (1.5,3.5) {$L_v$};
\node at (2.95,0.75) {$T$};
\node at (8.1,3.1) {$R_v$};
\node at (4.33,3.75) {$v$};
\end{tikzpicture}
}
\vspace{5mm}
\subfloat{
\begin{tikzpicture}[scale=0.85]
\draw[line cap=round, yellow!40, line width=8.5mm] (0,4)--(9,4);
\draw[line cap=round, green!30, line width=5mm] (0,4)--(3,4);
\draw[line cap=round, green!30, line width=5mm] (3,4)--(3,1);
\draw[line cap=round, blue!20, line width=5mm] (4.5,2.5)--(4.5,-0.5);
\draw[line cap=round, orange!40, line width=5mm] (6,4)--(9,4);
\draw[line cap=round, orange!40, line width=5mm] (7.5,4)--(7.5,2.5);
\draw (0,4) -- (9,4);
\draw (4.5,-0.5) -- (4.5,4);
\draw (7.5,2.5) -- (7.5,4);
\draw (3,1) -- (3,4);
\draw (6,2.5) -- (4.5,4);
\draw [fill=blue] (0,4) circle [radius = 0.1];
\draw [fill=white] (1.5,4) circle [radius = 0.1];
\draw [fill=blue] (3,4) circle [radius = 0.1];
\draw [fill=white] (4.5,4) circle [radius = 0.1];
\draw [fill=blue] (6,4) circle [radius = 0.1];
\draw [fill=white] (7.5,4) circle [radius = 0.1];
\draw [fill=blue] (9,4) circle [radius = 0.1];
\draw [fill=blue] (4.5,2.5) circle [radius = 0.1];
\draw [fill=white] (4.5,1) circle [radius = 0.1];
\draw [fill=blue] (4.5,-0.5) circle [radius = 0.1];
\draw [fill=white] (3,2.5) circle [radius = 0.1];
\draw [fill=blue] (3,1) circle [radius = 0.1];
\draw [fill=blue] (7.5,2.5) circle [radius = 0.1];
\draw [fill=blue] (6,2.5) circle [radius = 0.1];
\node at (9.6,4) {$P'$};
\node at (5.05,0.25) {$L_v$};
\node at (2.55,1.75) {$T$};
\node at (8.1,3.1) {$R_v$};
\node at (4.33,3.75) {$v$};
\end{tikzpicture}
}
\caption{An example of a switch operation on the tree $B(G)$ in Case~1 of the proof of Theorem~\ref{thm:EF-two-EFk}.
The top and bottom figures are the trees before and after the operation, respectively.}
\label{fig:EFk-case1}
\end{figure}

\item Case 1: $v$ is in the path $P$ itself.
Let $L_v$ and $R_v$ be the subtree of the tree $B(G)$ rooted at $v$ starting with each of the two blocks adjacent to $v$ on the path $P$, respectively.
For each subtree besides $L_v$ and $R_v$ of the tree $B(G)$ rooted at $v$, with a block adjacent to $v$ in $B(G)$ as the root of the subtree, define its \emph{size} to be the number of dependents of $v$ in $G$ belonging to at least one block in the subtree. Note that the size can be different from the number of vertices in the subtree in $B(G)$---for example, if we take $v = v_1$ in Figure~\ref{fig:guardian}, the subtree with vertex $v_2$ as the root contains only one vertex in $B(G)$ (i.e., $v_2$), but this vertex may represent several vertices in $G$.

Suppose that $T$ is a largest subtree among such subtrees and has size $r\leq k^*-1$.
We claim that at least $r$ vertices of $G$ (excluding $v$) belong to some block in $L_v$.
Assume for contradiction that there are at most $r-1$ such vertices.
In $B(G)$, we switch $L_v$ with $T$ and choose an arbitrary path of $T$ that contains a leaf of $B(G)$ to be on the main path $P$ (see Figure~\ref{fig:EFk-case1}). Let $P'$ denote the new maximal path.
Since $v$ loses at least $r$ dependents and gains at most $r-1$ new dependents, $v$ now has at most $k^*-2$ dependents with respect to $P'$.
Moreover, since $T$ has size at most $k^*-1$, each of the new cut vertices in $C(P')$ has at most $k^*-2$ dependents.
Hence we have decreased the number of cut vertices with $k^*-1$ dependents by at least $1$.
This gives the desired contradiction.
The same argument shows that at least $r$ vertices of $G$ (excluding $v$) belong to some block in $R_v$.

Consider two agents who have the same binary utility function with value $1$ for $v$, its $k^*-1$ dependents, $r$ arbitrary vertices of $G$ (besides $v$) belonging to some block in $L_v$, and $r$ arbitrary vertices of $G$ (besides $v$) belonging to some block in $R_v$, and value $0$ for the remaining vertices.
The total value of an agent is $2r+k^*$.
In any connected allocation, one of the agents does not receive $v$.
This agent receives value at most $r$, while the remaining goods are worth at least $r+k^*$.
It follows that the allocation cannot be EF$(k^*-1)$.

\begin{figure}[!ht]
\centering
\subfloat{
\begin{tikzpicture}[scale=0.85]
\draw[line cap=round, yellow!40, line width=8.5mm] (0,4)--(6,4);
\draw[line cap=round, blue!20, line width=5mm] (0,4)--(1.5,4);
\draw[line cap=round, green!30, line width=5mm] (3,2.5)--(1.94,1.44);
\draw[line cap=round, orange!40, line width=5mm] (4.5,4)--(6,4);
\draw[line cap=round, orange!40, line width=5mm] (4.5,2.5)--(4.5,4);
\draw (0,4) -- (6,4);
\draw (3,2.5) -- (3,4);
\draw (4.5,2.5) -- (4.5,4);
\draw (3,2.5) -- (1.94,1.44);
\draw (3,2.5) -- (4.06,1.44);
\draw [fill=blue] (0,4) circle [radius = 0.1];
\draw [fill=white] (1.5,4) circle [radius = 0.1];
\draw [fill=blue] (3,4) circle [radius = 0.1];
\draw [fill=white] (4.5,4) circle [radius = 0.1];
\draw [fill=blue] (6,4) circle [radius = 0.1];
\draw [fill=white] (3,2.5) circle [radius = 0.1];
\draw [fill=blue] (1.94,1.44) circle [radius = 0.1];
\draw [fill=blue] (4.06,1.44) circle [radius = 0.1];
\draw [fill=blue] (4.5,2.5) circle [radius = 0.1];
\node at (6.6,4) {$P$};
\node at (0.75,3.5) {$L_B$};
\node at (2.15,2.35) {$P'$};
\node at (5.1,3.1) {$R_B$};
\node at (3.25,2.5) {$v$};
\node at (3.2,3.75) {$B$};
\node at (1.5,3.75) {$v'$};
\end{tikzpicture}
}
\vspace{5mm}
\subfloat{
\begin{tikzpicture}[scale=0.85]
\draw[line cap=round, yellow!40, line width=8.5mm] (0,4)--(6,4);
\draw[line cap=round, green!30, line width=5mm] (0,4)--(1.5,4);
\draw[line cap=round, blue!20, line width=5mm] (3,2.5)--(3,1);
\draw[line cap=round, orange!40, line width=5mm] (4.5,4)--(6,4);
\draw[line cap=round, orange!40, line width=5mm] (4.5,2.5)--(4.5,4);
\draw (0,4) -- (6,4);
\draw (1.5,2.5) -- (1.5,4);
\draw (3,1) -- (3,4);
\draw (4.5,2.5) -- (4.5,4);
\draw [fill=blue] (0,4) circle [radius = 0.1];
\draw [fill=white] (1.5,4) circle [radius = 0.1];
\draw [fill=blue] (3,4) circle [radius = 0.1];
\draw [fill=white] (4.5,4) circle [radius = 0.1];
\draw [fill=blue] (6,4) circle [radius = 0.1];
\draw [fill=blue] (4.5,2.5) circle [radius = 0.1];
\draw [fill=white] (3,2.5) circle [radius = 0.1];
\draw [fill=blue] (1.5,2.5) circle [radius = 0.1];
\draw [fill=blue] (3,1) circle [radius = 0.1];
\node at (6.6,4) {$P''$};
\node at (0.75,3.5) {$P'$};
\node at (2.45,1.8) {$L_B$};
\node at (5.1,3.1) {$R_B$};
\node at (1.67,3.75) {$v$};
\node at (3.2,3.75) {$B$};
\node at (3.3,2.55) {$v'$};
\end{tikzpicture}
}
\caption{An example of a switch operation on the tree $B(G)$ in Case~2 of the proof of Theorem~\ref{thm:EF-two-EFk}.
The left and right figures are the trees before and after the operation, respectively.}
\label{fig:EFk-case2}
\end{figure}

\item Case 2: $v$ is not in the path $P$ but belongs to some block $B$ in $P$.
Let $L_B$ and $R_B$ be the subtree of the tree $B(G)$ rooted at $B$ starting with each of the two cut vertices adjacent to $B$ on the path $P$, respectively.
We claim that at least $k^*$ vertices of $G$ belong to some block in $L_B$.
Assume for contradiction that there are at most $k^*-1$ such vertices.
Let $v'$ be the cut vertex in $L_B$ adjacent to $B$.
In $B(G)$, we switch $L_B$ with $v$ and its dependents, and choose an arbitrary path $P'$ that starts with $v$ and contains at least one of its dependents as well as a leaf of $B(G)$ to be on the main path $P$  (see Figure~\ref{fig:EFk-case2}). Let $P''$ denote the new maximal path.
Since $L_B$ contains at most $k^*-1$ vertices (which include $v'$), $v'$ now has at most $k^*-2$ dependents with respect to $P''$.
Moreover, the subtree that replaced $L_B$ has at most $k^*$ vertices.
Among these vertices, $v$ and at least one other vertex belong to $P'$, which is now on the new path $P''$, so any new cut vertex has at most $k^*-2$ dependents.
Hence we have decreased the number of cut vertices with $k^*-1$ dependents by at least $1$.
This gives the desired contradiction.
The same argument shows that at least $k^*$ vertices of $G$ belong to some block in $R_B$.

Consider two agents who have the same binary utility function with value $1$ for $v$, its $k^*-1$ dependents, $k^*$ arbitrary vertices of $G$ belonging to some block in $L_B$, and $k^*$ arbitrary vertices of $G$ belonging to some block in $R_B$, and value $0$ for the remaining vertices.
The total value of an agent is $3k^*$.
In any connected allocation, one of the agents receives a bundle whose vertices of value $1$ are contained in $L_B$, $R_B$, or the set with $v$ and its dependents.
This agent receives value at most $k^*$, while the remaining goods are worth at least $2k^*$.
It follows that the allocation cannot be EF$(k^*-1)$.
\end{itemize}
Hence, in both cases there exist identical binary utility functions for which the graph does not admit an EF$(k^*-1)$ allocation, as claimed.
\end{proof}

Theorem~\ref{thm:EF-two-EFk} allows us to determine in polynomial time the optimal $k$ such that a given graph always admits an EF$k$ allocation, as well as to compute such an allocation.
To do so, we compute the block decomposition $B(G)$ of the graph---this can be done in linear time \citep{HopcroftTa73}.
We then determine the value of $k^*$ in the proof of the theorem, which we have shown to be equal to the optimal value of $k$; this can be done by testing all pairs of vertices as endpoints of the path $P$.
Finally, we compute a bipolar ordering of the vertices belonging to $P$---again, this takes linear time \citep{EvenTa76}---and apply the EF1 algorithm of \citet{BiloCaFl19} on the merged vertices.

Theorem~\ref{thm:EF-two-EFk} also yields a short proof that every graph admits an EF$(m-2)$ allocation. Moreover, we show that the bound $m-2$ is tight for stars.

\begin{proposition}
\label{prop:EF-two-m-2}
Let $n=2$, and let $G$ be any graph with at least three vertices. There exists a connected EF$(m-2)$ allocation to the two agents.
\end{proposition}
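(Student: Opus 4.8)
The plan is to derive the claim directly from the equivalence in Theorem~\ref{thm:EF-two-EFk}: it suffices to verify condition~$(1)$ with $k=m-2$, namely to exhibit a path $P$ in the block decomposition $B(G)$ such that every cut vertex in $C(P)$ has at most $m-3$ dependents. Since any such path immediately yields the EF$(m-2)$ guarantee through the implication $(1)\Rightarrow(3)$, no allocation needs to be constructed by hand.

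First I would take $P$ to be an arbitrary \emph{maximal} path of the tree $B(G)$ (recall that $B(G)$ is a tree by Proposition~\ref{prop:block-decomposition}), so that both endpoints of $P$ are leaves of $B(G)$. The central observation is a counting identity: by the definition of guardians and dependents, every vertex of $G$ that lies on no block of $P$ is a dependent of exactly one cut vertex in $C(P)$, while the remaining vertices are precisely those covered by the blocks of $P$. Writing $S$ for the set of vertices lying on some block of $P$, this means that the numbers of dependents of the cut vertices in $C(P)$ sum to exactly $m-|S|$. In particular, each individual cut vertex in $C(P)$ has at most $m-|S|$ dependents, so it suffices to prove $|S|\ge 3$.

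To bound $|S|$, I would distinguish whether $G$ has a cut vertex. If $G$ is biconnected then $B(G)$ is a single block-vertex, $C(P)=\emptyset$, and condition~$(1)$ holds vacuously (so $G$ even guarantees EF1). Otherwise $B(G)$ has at least two vertices; since $B(G)$ is bipartite with blocks on one side and cut vertices on the other, every cut vertex has degree at least $2$, so the leaves of $B(G)$ are blocks, and the maximal path $P$ runs between two leaf blocks through at least one intervening cut vertex. Hence $P$ contains at least two blocks sharing a common cut vertex; as every block has at least two vertices and two blocks share at most one vertex (Proposition~\ref{prop:block-decomposition}), their union already contributes at least $2+2-1=3$ distinct vertices to $S$, giving $|S|\ge 3$. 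Combining this with the counting identity shows that each cut vertex in $C(P)$ has at most $m-3$ dependents, which is exactly condition~$(1)$ for $k=m-2$, and the proposition follows from Theorem~\ref{thm:EF-two-EFk}.

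The step I expect to require the most care is the counting identity together with the lower bound $|S|\ge 3$. The naive route of analyzing the dependents of each cut vertex separately is unwieldy; the clean argument is to note that the dependents across all of $C(P)$ partition the off-path vertices, so the aggregate bound $m-|S|$ does all the work once $|S|\ge 3$ has been extracted from the bipartite structure of $B(G)$.
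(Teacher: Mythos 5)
Your proof is correct, but it takes a different route through Theorem~\ref{thm:EF-two-EFk} than the paper does. You verify condition~$(1)$: taking a maximal path $P$ in $B(G)$, you observe that the dependent sets of the cut vertices in $C(P)$ partition the off-path vertices of $G$ (each off-path vertex has a unique guardian since $B(G)$ is a tree), so every cut vertex has at most $m-|S|$ dependents where $S$ is the set of vertices covered by blocks of $P$; then $|S|\geq 3$ follows either vacuously (biconnected case) or because the leaves of $B(G)$ are blocks, so $P$ contains two blocks sharing a cut vertex, and two blocks of size at least $2$ sharing at most one vertex cover at least $3$ vertices. The paper instead verifies condition~$(2)$ directly: it picks any path $v_1,v_2,v_3$ in $G$, seeds $V_1=\{v_1\}$, $V_2=\{v_2\}$, $V_3=\{v_3\}$, and grows these sets arbitrarily into a connected partition of all $m$ vertices; each part then has size at most $m-2$, and the merged graph on three vertices trivially admits a bipolar ordering. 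The paper's construction is shorter and immediately yields the allocation algorithm (merge, then run the EF1 procedure on the bipolar ordering), while your argument leans on additional standard structural facts about block decompositions (cut vertices lie in at least two blocks, hence leaves of $B(G)$ are blocks; every block of a connected graph on at least two vertices contains an edge) that the paper never needs to invoke. Both arguments are sound, and your aggregate counting identity is a clean way to avoid analyzing individual cut vertices.
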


\begin{proof}
Since the graph contains at least three vertices, it has a path of length $2$; let the three vertices on this path be $v_1,v_2,v_3$.
Let $V_1=\{v_1\}, V_2=\{v_2\}, V_3=\{v_3\}$.
We add the remaining vertices to these sets arbitrarily so that each set remains connected.
Clearly, each set contains at most $m-2$ vertices.
Theorem~\ref{thm:EF-two-EFk} then implies that an EF$(m-2)$ allocation exists.
\end{proof}

\begin{proposition}
\label{prop:EF-star}
Let $n=2$, and let $G$ be a star with at least two edges. There exist identical binary utility functions of the two agents such that a connected EF$(m-3)$ allocation does not exist.
\end{proposition}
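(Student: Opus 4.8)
The plan is to exploit the rigid structure of connected subgraphs in a star. Write $v^*$ for the center and note that the star has $m-1\ge 2$ leaves, so $m\ge 3$. A set of vertices is connected if and only if it either contains $v^*$ or consists of a single leaf; in particular, any set of two or more leaves that omits $v^*$ is disconnected. Consequently, in every connected bipartition $(M_1,M_2)$ into two nonempty parts, exactly one part contains $v^*$, and the other part---being connected and center-free---must be a single leaf. Thus every connected allocation to the two agents gives one agent a single leaf and the other agent the bundle consisting of $v^*$ together with the remaining $m-2$ leaves.

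I would then take the identical binary utility function assigning value $1$ to every vertex. Under this utility the agent who receives the lone leaf has utility $1$, while the other agent's bundle has utility $m-1$. Checking the EF$(m-3)$ condition for the deprived agent amounts to asking whether there is a set $M'\subseteq M_2$ with $|M'|\le m-3$ such that $u(M_1)\ge u(M_2\setminus M')$, i.e. $1\ge (m-1)-|M'|$. This forces $|M'|\ge m-2>m-3$, so no admissible $M'$ exists and the deprived agent's envy survives the removal of any $m-3$ goods. Since this holds for every connected allocation, no connected EF$(m-3)$ allocation exists for this profile, which is exactly the claim; note that when $m=3$ this reads as the nonexistence of an envy-free allocation, consistent with $m-3=0$. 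Combined with Proposition~\ref{prop:EF-two-m-2}, this pins the optimal guarantee for stars at exactly EF$(m-2)$.

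I do not foresee a genuine obstacle here: once the connected bipartitions of a star are pinned down, the bound follows from a one-line arithmetic comparison. The only point requiring a little care is the observation that the part avoiding the center cannot contain two or more leaves, which is what rules out any more balanced split and pins the deprived bundle to a single good of value $1$; everything else is immediate from the uniform binary utility.
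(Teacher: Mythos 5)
Your proof is correct and takes essentially the same approach as the paper: identical all-ones binary utilities combined with the observation that any connected allocation on a star gives one agent at most a single good while the other receives at least $m-1$ goods, after which the EF$(m-3)$ condition fails by direct arithmetic. The only slight difference is that you restrict attention to bipartitions into two nonempty parts, whereas the paper's phrasing (``one of the agents receives at most one good'') also covers the degenerate allocation with an empty bundle --- but there the deprived agent has utility $0$ and the same arithmetic rules out EF$(m-3)$ a fortiori, so nothing is lost.
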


\begin{proof}
Consider two agents who have value $1$ for every good.
In any connected allocation, one of the agents receives at most one good, while the other agent receives at least $m-1$ goods.
Hence the allocation cannot be EF$(m-3)$.
\end{proof}

Next, we consider a stronger fairness notion, EFX. It is known that for two agents with arbitrary monotonic utilities, an EFX allocation always exists \citep{PlautRo20}. We show that if we consider connected allocations, the statement remains true only if the graph is complete.

\begin{theorem}
\label{thm:EF-two-EFX}
Let $n=2$, and let $G$ be a non-complete graph. There exist identical additive utility functions of the two agents such that no connected allocation is EFX.
\end{theorem}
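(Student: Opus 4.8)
The plan is to first reduce EFX to a clean numerical condition and then exhibit a single three-vertex gadget. Since the two agents share one additive utility $u$, a connected bipartition $(A,B)$ with $u(A)\le u(B)$ is EFX if and only if $u(B)-u(A)\le \min_{g\in B}u(g)$: the agent holding the heavier bundle $B$ never envies (as $u(B)\ge u(A)\ge u(A\setminus\{g\})$), and the agent holding $A$ can cancel her envy precisely when removing the cheapest good of $B$ suffices. I would record this equivalence first, since it turns the whole problem into controlling, over all connected bipartitions, the gap between the two bundle values relative to the lightest good in the heavier bundle.

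Next I would locate the structure in $G$. Because $G$ is connected and non-complete (so $m\ge 3$), taking a non-adjacent pair at minimum distance yields vertices $a,b,c$ with $a\sim b\sim c$ but $a\not\sim c$: a shortest path between non-adjacent vertices has no chord, so its endpoints at distance two are non-adjacent with common neighbour $b$. I would then define the identical additive utility $u(a)=u(c)=2$, $u(b)=3$, and $u(g)=0$ for every other vertex, so $u(M)=7$, and claim that no connected bipartition is EFX. To verify this I split into two cases according to the heavier part $B$. If $B$ contains a zero-valued vertex, then $\min_{g\in B}u(g)=0$, so EFX would force $u(A)=u(B)=7/2$; but $u(B)$ is a subset-sum of $\{2,2,3\}$ and hence lies in $\{0,2,3,4,5,7\}$, never equal to $7/2$, so the bipartition is not EFX. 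Otherwise $B\subseteq\{a,b,c\}$, and since each single vertex has value at most $3<7/2$, the part $B$ must be one of $\{a,b\}$, $\{b,c\}$, or $\{a,b,c\}$ --- crucially \emph{not} $\{a,c\}$, which is disconnected because $a\not\sim c$. In these three cases $u(B)-u(A)\in\{3,3,7\}$ while $\min_{g\in B}u(g)=2$, so the EFX inequality fails.

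The main obstacle, and the point the whole construction is engineered around, is the bundle $\{a,c\}$: it has value $4$, its complement has value $3$, and the gap $1$ is at most the lightest remaining good, so $\{a,c\}$ versus the rest \emph{would} be EFX in the unconstrained (complete-graph) setting. The non-adjacency of $a$ and $c$ is exactly what eliminates this single dangerous allocation by making $\{a,c\}$ disconnected. The two remaining design choices --- giving $a$ and $c$ equal weight so that the only balanced grouping puts them together, and picking the middle value so that the balance point $7/2$ is never an attainable bundle value --- are what force every surviving connected bipartition to be unbalanced by more than its cheapest heavy-side good. I would finish by checking that the gadget specialises correctly to the extreme cases (the path $a\,b\,c$ when $m=3$, and the $4$-cycle where $b$ is a common neighbour of $a$ and $c$), as a sanity check that no connected bipartition escapes the two cases above.
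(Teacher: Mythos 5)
Your proposal is correct and takes essentially the same approach as the paper: both constructions place value $2$ on each vertex of a non-adjacent pair and value $3$ on a third vertex, so that the only split compatible with EFX (the two $2$'s together against the rest) is forced to be disconnected. The differences are cosmetic --- the paper assigns the remaining vertices a small $\epsilon>0$ instead of $0$ (your zero values are handled by the parity observation that no bundle can attain $7/2$), lets the value-$3$ vertex be arbitrary rather than a common neighbour, and argues by structural contradiction where you use an explicit numerical EFX criterion and case analysis.
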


\begin{proof}
Pick an arbitrary missing edge of $G$, and let $\epsilon > 0$ be a sufficiently small constant.
Suppose that the two agents have value $2$ for each of the two vertices with a missing edge (call them $v_1$ and $v_2$), and value $3,\epsilon,\epsilon,\dots,\epsilon$ for the remaining vertices (call the first vertex $v_3$).
Assume for contradiction that there exists a connected EFX allocation.
In this allocation, neither of the agents can receive $v_3$ together with one (or both) of $v_1,v_2$.
So one of the agents must receive $v_1$ and $v_2$, while the other agent receives $v_3$.
If the first agent also receives one of the remaining vertices, the allocation cannot be EFX.
So the second agent receives all of the remaining vertices.
However, the resulting allocation is not connected, a contradiction.
\end{proof}

\subsection{Three Agents}
\label{sec:envyfree-three}

We now address the case of three agents. \citet{BiloCaFl19} showed that in this case, an EF1 allocation is guaranteed to exist if the graph contains a Hamiltonian path\footnote{Clearly, it suffices to prove the claim when the graph is a path.} or if it is a star with three edges.
We extend this result by characterizing all trees and complete bipartite graphs that always admit an EF1 allocation.
Recall that we allow agents to have arbitrary monotonic utilities in this section.

\begin{theorem}
\label{thm:EF-three-tree}
Let $G$ be a tree. Then $G$ guarantees EF1 for three agents if and only if $G$ is either a path, or a star with three edges.
\end{theorem}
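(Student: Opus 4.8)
The plan is to prove both directions of the characterization. For the ``if'' direction, I need to show that a path and a star with three edges each guarantee EF1 for three agents. The path case is already covered by \cite{BiloCaFl19}, who showed EF1 existence on a path for $n\leq 4$, and the star with three edges is also cited as a known positive result, so this direction follows by invoking those existing guarantees. The substantive content is therefore the ``only if'' direction: I must show that \emph{every} tree $G$ that is neither a path nor the three-edge star fails to guarantee EF1 for three agents, by exhibiting bad monotonic utilities.

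The natural strategy is a structural case analysis on trees that are not paths. A tree fails to be a path exactly when it has a vertex of degree at least $3$. So first I would fix a vertex $v$ of degree $\geq 3$ and examine the subtrees hanging off $v$. The key intuition, mirroring the maximin-share obstructions earlier in the paper (e.g.\ the role of high-degree vertices in Theorem~\ref{thm:mms-two-connectivity-1} and the star lower bound in Theorem~\ref{thm:mms-any-star}), is that a branching vertex forces one agent to accept a small connected bundle while another agent's bundle remains large and indivisible-by-removing-one-good. I would split into cases according to how large and how numerous the branches at $v$ are. The essential subcase is when $G$ is not a star at all: then some branch at $v$ has length at least $2$, giving enough room to plant values that cannot be balanced into three connected EF1 bundles. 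For the remaining subcase where $G$ \emph{is} a star but with at least four edges, I would use identical utilities assigning value $1$ to all leaves (as in Proposition~\ref{prop:EF-star}): with three agents and a star on $m\geq 4$ leaves, at least one agent's connected bundle is a single leaf while another agent necessarily holds at least two leaves, and removing one good from that larger bundle still leaves positive envy, so EF1 fails.

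Concretely, for the branching case I would construct identical utilities that concentrate weight so that any connected partition into three parts is forced to cut across $v$'s branches. Because removing $v$ from a tree yields at least three components (when $\deg(v)\geq 3$) or because a branch of length $\geq 2$ creates a ``deep'' good, I can choose weights so that the three connected bundles cannot be made pairwise EF1: one agent is stuck with a light bundle, and the heavy good that another agent must hold cannot be discounted by a single removal. The goal is to make each construction identical across agents and monotonic (indeed additive suffices), so that the impossibility is purely combinatorial. The main obstacle I anticipate is handling all shapes of non-path, non-three-edge-star trees uniformly: a tree may have a single high-degree vertex, multiple high-degree vertices, long branches, or be a large star, and the bad instance must be tailored to each. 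I would therefore organize the argument around a minimal bad witness---either a claw ($K_{1,3}$) with an extra pendant, or a ``spider'' with one leg of length $\geq 2$---and argue that every non-path non-three-edge-star tree contains such a configuration, reducing the analysis to a small number of canonical subtrees for which the impossibility instance is explicit.
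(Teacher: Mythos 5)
Your high-level plan does match the paper's proof: the ``if'' direction is delegated to \cite{BiloCaFl19}, and the ``only if'' direction reduces every non-path, non-three-edge-star tree to one of two canonical configurations---a large star ($K_{1,4}$-type) or a degree-$3$ vertex with a branch of length at least two---on which identical additive utilities, padded with zeros elsewhere, should defeat EF1. This reduction is sound for trees, since the Steiner tree of the valued vertices is the configuration itself, so the connectivity constraints among valued goods are the same in $G$ as in the configuration. However, two of your steps fail as written.

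First, your star subcase breaks at exactly the boundary case. For $K_{1,4}$ with value $1$ on the four leaves only, the connected allocation giving two agents one leaf each and the third agent the center plus the two remaining leaves \emph{is} EF1: each single-leaf agent values the large bundle at $2$, and removing one leaf reduces it to $1$, equal to her own value. The center must also carry value $1$ (the paper's construction puts value $1$ on $v$ and four of its neighbors); then the large bundle is worth at least $3$ and removing one good still leaves at least $2>1$. Your argument as stated only works for stars with at least five edges. Second, and more seriously, in the main branching case you never exhibit the weights: you only assert that weights ``can be chosen'' so that no connected allocation is EF1. This is the actual content of the theorem, and it is not routine. For instance, putting value $1$ on all five vertices $v,v_1,v_2,v_3,v_4$ of the spider does \emph{not} work: the connected allocation $\{v_1,v_2\}$, $\{v_3\}$, $\{v,v_4\}$ is EF1, since the agent holding $\{v_3\}$ values each other bundle at $2$, which drops to $1$ after removing one good. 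The paper uses $u(v)=u(v_3)=u(v_4)=2$, $u(v_1)=3$, $u(v_2)=4$, and verifying that no connected allocation is EF1 requires a genuine case analysis (what happens when some agent holds only $v_3$ or only $v_4$; when some agent holds $\{v,v_3,v_4\}$; where the pair $v_1,v_2$ goes). Without an explicit construction and that verification, your argument for the central case is an outline rather than a proof.
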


\begin{proof}
The `if' direction was already shown by \citet{BiloCaFl19}; we establish the `only if' direction.
Assume that $G$ is neither a path, nor a star with three edges.
Suppose first that there is a vertex $v$ with degree at least $4$.
Consider three agents who have identical utilities with value $1$ on $v$ and four of its neighbors, and $0$ on all other vertices.
In any connected allocation, an agent who does not get $v$ receives value at most $1$, while the bundle of the agent who gets $v$ has value at least $3$ to her.
Hence the allocation is not EF1.

\begin{figure}[!ht]
\centering
\begin{tikzpicture}[scale=0.85]
\draw (2,2) -- (8,2);
\draw (2,2) -- (0.59,0.59);
\draw (2,2) -- (0.59,3.41);
\draw [fill] (2,2) circle [radius = 0.1];
\draw [fill] (4,2) circle [radius = 0.1];
\draw [fill] (6,2) circle [radius = 0.1];
\draw [fill] (8,2) circle [radius = 0.1];
\draw [fill] (0.59,0.59) circle [radius = 0.1];
\draw [fill] (0.59,3.41) circle [radius = 0.1];
\node at (2,2.3) {$2$};
\node at (2,1.7) {$v$};
\node at (0.59,0.89) {$2$};
\node at (0.59,0.29) {$v_4$};
\node at (0.59,3.71) {$2$};
\node at (0.59,3.11) {$v_3$};
\node at (4,2.3) {$3$};
\node at (4,1.7) {$v_1$};
\node at (6,2.3) {$4$};
\node at (6,1.7) {$v_2$};
\node at (8,2.3) {$0$};
\end{tikzpicture}
\caption{Example of an instance in the proof of Theorem~\ref{thm:EF-three-tree}.}
\label{fig:EF-tree}
\end{figure}

Suppose now that every vertex has degree at most $3$. Since $G$ is not a path, there is a vertex $v$ with degree $3$.
Moreover, since $G$ is not a star, one of the branches from $v$ contains at least two vertices, say a branch starting with a neighbor $v_1$ of $v$ followed by another vertex $v_2$.
Let $v_3,v_4$ be the two other vertices adjacent to $v$.
Consider three agents who have identical utilities with value $2$ for $v,v_3,v_4$, value $3$ for $v_1$, value $4$ for $v_2$, and value $0$ for all other vertices (see Figure~\ref{fig:EF-tree}).
Consider any connected allocation; in what follows, we will only be concerned with goods of non-zero value.
First, assume that one of the agents receives either only $v_3$ or only $v_4$, and obtains value at most $2$.
If another agent receives at least three goods, the allocation is clearly not EF1.
So each of the other two agents receives exactly two goods, which means one of them receives $v_1$ and $v_2$.
This bundle is worth $3$ to the first agent even after removing the most valuable good, so the allocation cannot be EF1.
Hence one of the agents receives $v_3$, $v_4$, and $v$.
But then the agent who does not receive $v_2$ will envy this agent even after removing one good.
\end{proof}

Next, we consider complete bipartite graphs.
Denote by $K_{a,b}$ the complete bipartite graph with $a$ vertices on the left (call this set of vertices $L$) and $b$ vertices on the right (call this set of vertices $R$).
We start by showing that if $a,b\geq 3$, there always exists a connected EF1 allocation.
In fact, we present a generalization that holds for any number of agents.

\begin{proposition}
\label{prop:EF-three-3-3}
Let $n\geq 2$, and let $G$ be a complete bipartite graph $K_{a,b}$ with $a,b\geq n$. Then $G$ guarantees EF1 for $n$ agents.
\end{proposition}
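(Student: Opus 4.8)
The plan is to run a connectivity-aware version of the envy-cycle elimination algorithm of \cite{LiptonMaMo04}. The first step is to record the simple structure of connected bundles in $K_{a,b}$: a set $S\subseteq L\cup R$ induces a connected subgraph precisely when $|S|\le 1$, or $S$ meets both $L$ and $R$. Thus the connectivity constraint only forbids bundles of size at least two that lie entirely on one side, and once a bundle contains a vertex from each side it stays connected no matter which further vertices are added to it. This flexibility is what I intend to exploit, and it is where the hypothesis $a,b\ge n$ should enter.

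Next I would adapt the standard procedure. Starting from empty bundles, I maintain the invariant that the partial allocation is EF1 and every bundle is connected. Whenever the envy graph contains a cycle, I resolve it by rotating whole bundles among the agents along the cycle; since connectivity of a bundle depends only on its set of goods and not on its owner, this step preserves connectivity, and it preserves EF1 exactly as in the unconstrained algorithm. When the envy graph is acyclic, I hand a single good to a source (an agent whom nobody envies), which keeps the allocation EF1. The only extra requirement is to choose \emph{which} unallocated good to give so that the source's bundle remains connected: if the source's bundle is empty or already meets both sides, any good works; if it is a single vertex of $L$ (resp.\ $R$), I must hand it a vertex of $R$ (resp.\ $L$).

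The main obstacle is to show that the procedure never gets stuck, i.e.\ that at every step some source can be extended connectedly. A stuck configuration would require every source to be a one-sided singleton whose opposite side has been exhausted; tracing this, one finds it forces all remaining goods onto a single side, say $L$, with no unallocated vertex of $R$ while some source still holds a lone vertex of $L$. To rule this out I would maintain the availability invariant that, for each side $Y$, the number of unallocated vertices of $Y$ is at least the number of agents currently holding a single vertex of the opposite side. This holds initially because $a,b\ge n$ (each side starts with at least $n$ vertices while the relevant counts start at $0$), it is clearly preserved by the forced completion moves (each such move simultaneously removes a singleton and consumes one opposite-side vertex), and it can be preserved on the free moves by seeding and growing bundles from the more plentiful side. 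Since the invariant guarantees a completing vertex whenever a singleton must be extended, the algorithm always has a legal move and terminates with a connected EF1 allocation. Verifying that the invariant can indeed be maintained through every case of the free moves is where the bulk of the careful bookkeeping lies, and is the step I expect to be the most delicate.
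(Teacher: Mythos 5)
Your overall plan---envy-cycle elimination with a connectivity-aware choice of which good to hand to a source, using the facts that bundle rotations preserve connectivity and that a set in $K_{a,b}$ is connected iff it is a singleton or meets both sides---is the same family of argument as the paper's, and those parts are correct. The genuine gap is the step you defer: the ``availability invariant'' is not delicate bookkeeping, it is the heart of the proof, and \emph{as stated it cannot be maintained}. Concrete counterexample: $K_{3,3}$ with three agents having identical additive utilities, $u(l_1)=100$ and value $1$ for the other five goods. Seed the three agents one good each (empty agents are always unenvied, so these are legal moves; whether you seed $l_1,l_2,l_3$ or alternate sides as in your ``more plentiful side'' rule, every step preserves your invariant). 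Agent $1$, who holds $l_1$, is now envied by both others and will remain so, hence is never again a source. The forced moves then complete the other two agents' bundles with goods from $R$, each preserving the invariant, until you reach a state such as $\{l_1\},\{l_2,r_1\},\{l_3,r_2\}$ with only $r_3$ unallocated: the only sources are the two two-sided agents, and giving $r_3$ to either drops the number of unallocated $R$-goods to $0$ while one agent still holds a single $L$-vertex, violating your invariant. So there are runs in which every step preserves the invariant yet no continuation does. The violation here is harmless (after allocating $r_3$ the algorithm terminates and $\{l_1\}$ is connected), which reveals the deeper issue: your invariant is stronger than what is needed, because a one-sided bundle is only problematic if it has size at least two, and which singletons will ever need to grow is dictated by the envy structure, which you cannot control.

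The paper's proof closes exactly this gap with an asymmetric strategy plus a counting argument instead of a two-sided invariant. First give every agent one good from $L$ (possible since $a\geq n$ and empty agents are unenvied). Thereafter, a source whose bundle lacks an $R$-good receives an $R$-good if any remains (otherwise an $L$-good), and a source whose bundle contains an $R$-good receives an $L$-good if any remains (otherwise an $R$-good). The key count: since $L$ only shrinks, if $L$ is currently nonempty then every $R$-good consumed so far went to a bundle lacking an $R$-good, and each such consumption permanently increases the number of $R$-containing bundles (rotations permute bundles, so this is monotone); hence while some bundle still lacks an $R$-good, at most $n-1<b$ $R$-goods have been consumed and $R$ is nonempty. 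Therefore a bundle lacking an $R$-good is never given a second $L$-good, and every final bundle is a singleton or meets both sides. If you want to salvage your scheme, replace ``seed from the more plentiful side'' by ``seed everything from $L$'' and replace your symmetric invariant by this counting argument; at that point your proof essentially becomes the paper's.
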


\citet{BiloCaFl19} posed the question of finding an infinite class of graphs without a Hamiltonian path that guarantee EF1 for three or more agents.
Since $K_{a,b}$ does not contain a Hamiltonian path whenever $|a-b|\ge 2$, Proposition~\ref{prop:EF-three-3-3} answers their question for every $n \ge 3$.

\begin{proof}[Proof of Proposition~\ref{prop:EF-three-3-3}]
We enhance the \emph{envy cycle elimination algorithm} of \citet{LiptonMaMo04}, which computes an EF1 allocation for any number of agents.
The algorithm works by allocating one good at a time in arbitrary
order---we will exploit this freedom in choosing the order.
It also maintains an \emph{envy graph}, which has the agents as its vertices, and a directed edge $i\rightarrow j$ if agent $i$ envies agent $j$ with respect to the current (partial) allocation.
At each step, the next good is allocated to an agent with no incoming
edge (i.e., the agent is unenvied), and any cycle that arises as a result is eliminated by giving $j$’s bundle to $i$ for each edge $i\rightarrow j$ in the cycle.
This allows the algorithm to maintain the invariant that the envy graph is cycle-free, and so there exists an agent with no incoming edge before each allocation of a good.

We apply the envy cycle elimination algorithm by choosing a careful order of the goods to allocate.
Since $a\geq n$ and every agent is unenvied at the beginning, we can first pick $n$ goods from $L$ and allocate one of them to each agent.
After this point, we may not simply pick an arbitrary agent to be allocated the next good, since that agent may be envied.
We take an unenvied agent, i.e., an agent with no incoming edge in the envy graph, and consider two cases.

\underline{Case 1}: The agent has already received a good from $R$.
In this case, we allocate to her a good from $L$ if one still remains; otherwise, we give her a good from $R$.

\underline{Case 2}: The agent has not received a good from $R$.
In this case, we allocate to her a good from $R$ if one still remains; otherwise, we give her a good from $L$.

The pseudocode of the algorithm is presented as Algorithm~\ref{alg:envy-cycle}.

\begin{algorithm}
\caption{Enhanced Envy Cycle Elimination Algorithm}\label{alg:envy-cycle}
\begin{algorithmic}[1]
\Procedure{EnhancedEnvyCycleElimination$(N,L,R,u_1,\dots,u_n)$}{}
\State $M_1,\dots,M_n\leftarrow\emptyset$
\State $r_1,\dots,r_n\leftarrow \text{false}$ // $r_i$ indicates whether agent~$i$ has received a good from $R$
\For{$i \in N$}
\State Move an arbitrary good from $L$ to $M_i$.
\EndFor
\While{$L\cup R\neq\emptyset$}
\State Eliminate cycles in the envy graph.
\State $i\leftarrow \text{any agent with no incoming edge in the envy graph}$
\If{$r_i = \text{true}$}
\If{$L\neq\emptyset$}
\State Move an arbitrary good from $L$ to $M_i$.
\Else
\State Move an arbitrary good from $R$ to $M_i$.
\EndIf
\Else
\If{$R\neq\emptyset$}
\State Move an arbitrary good from $R$ to $M_i$.
\State $r_i\leftarrow\text{true}$
\Else
\State Move an arbitrary good from $L$ to $M_i$.
\EndIf
\EndIf
\EndWhile
\State \Return $(M_1,\dots,M_n)$
\EndProcedure
\end{algorithmic}
\end{algorithm}

The resulting allocation is EF1 \citep{LiptonMaMo04}; we now show that it is connected.
Every agent receives a good from $L$ in the first phase of the algorithm.
Note that if an agent receives at least one good from both $L$ and $R$, her bundle is guaranteed to be connected.
So it suffices to show that an agent will never receive more than one good from $L$ without receiving a good from $R$.
By construction, an agent who already has a good from $R$ will take goods from $L$ unless $L$ is already empty.
Since $b\geq n$, this means that as long as some agent has not received a good from $R$ and the algorithm has not terminated, there is at least one good from $R$ left.
This establishes the desired claim.
\end{proof}

Note that since the envy cycle elimination algorithm runs in time polynomial in the number of agents and goods \citep{LiptonMaMo04}, the proof of Proposition~\ref{prop:EF-three-3-3} also yields a polynomial-time algorithm that computes a connected EF1 allocation for any number of agents.
If the agents have additive utilities, we can also obtain an EF1 allocation via a ``double round-robin algorithm''---the details can be found in Appendix~\ref{app:double-round-robin}.

With Proposition~\ref{prop:EF-three-3-3} in hand, we now proceed with the characterization for complete bipartite graphs.

\begin{theorem}
\label{thm:EF-three-bipartite}
Let $a,b$ be positive integers with $a\leq b$.
The graph $K_{a,b}$ guarantees EF1 for three agents if and only if one of the following holds:
\begin{enumerate}
    \item $a=1$ and $b\leq 3$;
    \item $a=2$ and $b\leq 3$;
    \item $a,b\geq 3$.
\end{enumerate}
\end{theorem}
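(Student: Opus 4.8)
The plan is to prove both directions by reducing to results already established in the paper, with a single new non-existence construction doing the essential work. First I would record the structural observation that drives everything: in $K_{a,b}$ a set of vertices is connected if and only if it is a singleton or contains at least one vertex from each side $L$ and $R$. This is immediate, since the subgraph induced on any vertex set is itself a complete bipartite graph, which is connected exactly when it is a single vertex or has vertices on both sides.

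For the `if' direction I would handle the three listed cases using prior results. Condition~3 ($a,b\geq 3$) is exactly Proposition~\ref{prop:EF-three-3-3} with $n=3$. Condition~1 ($a=1$, $b\leq 3$) covers the trees $K_{1,1}$ and $K_{1,2}$, which are paths, and $K_{1,3}$, which is the star with three edges; all three guarantee EF1 for three agents by Theorem~\ref{thm:EF-three-tree}. For Condition~2 ($a=2$, $b\leq 3$) the two graphs $K_{2,2}$ and $K_{2,3}$ each contain a Hamiltonian path---for instance $l_1 r_1 l_2 r_2$ and $r_1 l_1 r_2 l_2 r_3$, respectively---so a connected EF1 allocation exists by the result of \cite{BiloCaFl19} that EF1 can always be guaranteed on a path for three agents.

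For the `only if' direction I must show that every remaining graph fails to guarantee EF1; with $a\leq b$ these are exactly $K_{1,b}$ with $b\geq 4$ and $K_{2,b}$ with $b\geq 4$ (if $a\geq 3$ then $b\geq 3$ and we are in Condition~3). The case $a=1$, $b\geq 4$ is a star with at least four edges, which fails by Theorem~\ref{thm:EF-three-tree}. The main new content is the case $a=2$, $b\geq 4$: here I would give all three agents the identical binary utility that assigns value $1$ to every vertex. By the structural observation, since the two vertices of $L$ lie in at most two of the three bundles, some bundle contains no vertex of $L$ and is therefore either empty or a single vertex of $R$; in either case that bundle has value at most $1$. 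The other two bundles together carry value at least $b+1\geq 5$, so one of them has value at least $3$, and after deleting any single good it still has value at least $2>1$. Hence the agent holding the low-value bundle envies that bundle even after one good is removed, so no connected allocation is EF1.

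The argument has no serious obstacle, but the step that requires care is the non-existence construction for $K_{2,b}$: one must verify that the forced appearance of an $R$-singleton (or empty) bundle is genuinely unavoidable in every connected three-bundle allocation, which is precisely where the characterization of connected sets in $K_{2,b}$ is used, and one must check the counting $b+1\geq 5$ so that some other bundle survives the removal of a single good with value strictly exceeding $1$. Assembling the two directions then yields the stated characterization.
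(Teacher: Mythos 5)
Your proposal is correct, and its overall skeleton matches the paper's: the `if' direction is handled exactly as in the paper (Proposition~\ref{prop:EF-three-3-3} for $a,b\geq 3$, Theorem~\ref{thm:EF-three-tree} for $a=1$, and a Hamiltonian path plus the result of \cite{BiloCaFl19} for $a=2$, $b\leq 3$), and the case $a=1$, $b\geq 4$ of the `only if' direction is likewise dispatched by the tree characterization. Where you genuinely diverge is the key non-existence construction for $K_{2,b}$ with $b\geq 4$. The paper assigns value $2$ to the two vertices of $L$, value $1$ to four vertices of $R$, and value $0$ elsewhere, and then runs a case analysis on whether the two $L$-vertices land in the same bundle or in different bundles. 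You instead give every vertex value $1$ and argue by pigeonhole: some bundle misses $L$ entirely, hence is empty or an $R$-singleton of value at most $1$, while the remaining value $b+1\geq 5$ forces another bundle to retain value at least $2$ after any single removal. Your argument is shorter, avoids the case split, and buys a slightly stronger conclusion: $K_{2,b}$ with $b\geq 4$ fails EF1 even for identical \emph{binary} utilities, mirroring the binary-utility condition $(4)$ in Theorem~\ref{thm:EF-two-EFk}, whereas the paper's instance uses values in $\{0,1,2\}$. Both proofs rest on the same structural fact you state up front---a connected set in $K_{a,b}$ is a singleton or meets both sides---and your counting correctly pins down why $b\geq 4$ (and not $b=3$) is the threshold.
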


\begin{figure}[!ht]
\centering
\begin{tikzpicture}[scale=0.85]
\draw [fill] (2,4) circle [radius = 0.1];
\draw [fill] (2,6) circle [radius = 0.1];
\draw [fill] (6,3) circle [radius = 0.1];
\draw [fill] (6,4) circle [radius = 0.1];
\draw [fill] (6,5) circle [radius = 0.1];
\draw [fill] (6,6) circle [radius = 0.1];
\draw [fill] (6,7) circle [radius = 0.1];
\draw (2,4) -- (6,3);
\draw (2,4) -- (6,4);
\draw (2,4) -- (6,5);
\draw (2,4) -- (6,6);
\draw (2,4) -- (6,7);
\draw (2,6) -- (6,3);
\draw (2,6) -- (6,4);
\draw (2,6) -- (6,5);
\draw (2,6) -- (6,6);
\draw (2,6) -- (6,7);
\node at (1.6,6) {$2$};
\node at (2,5.7) {$v_1$};
\node at (1.6,4) {$2$};
\node at (2,3.7) {$v_2$};
\node at (6.4,7) {$1$};
\node at (6,6.7) {$v_3$};
\node at (6.4,6) {$1$};
\node at (6,5.7) {$v_4$};
\node at (6.4,5) {$1$};
\node at (6,4.7) {$v_5$};
\node at (6.4,4) {$1$};
\node at (6,3.7) {$v_6$};
\node at (6.4,3) {$0$};
\end{tikzpicture}
\caption{Example of an instance in the proof of Theorem~\ref{thm:EF-three-bipartite}.}
\label{fig:EF-bipartite}
\end{figure}

\begin{proof}
The case $a=1$ is covered by Theorem~\ref{thm:EF-three-tree} and the case $a\geq 3$ by Proposition~\ref{prop:EF-three-3-3}, so assume that $a=2$.
If $b\leq 3$, then $G$ contains a Hamiltonian path, so the existence of an EF1 allocation follows from the result of \citet{BiloCaFl19}.
Else, let $b\geq 4$. Consider three agents who have identical utilities with value $2$ on each of the two vertices $v_1,v_2\in L$, value $1$ on four of the vertices $v_3,v_4,v_5,v_6\in R$, and value $0$ for the remaining vertices (see Figure~\ref{fig:EF-bipartite}).
Consider any connected allocation.
If $v_1$ and $v_2$ are allocated to the same agent, this agent must also receive at least one of the vertices from $R$, and the allocation is not EF1.
Else, one agent receives $v_1$ and another agent receives $v_2$.
Now, the third agent can get at most one vertex from $R$ and therefore receives value at most $1$.
This means that one of the first two agents receives one of $v_1$ and $v_2$ along with at least two of $v_3,v_4,v_5,v_6$.
This agent is envied by the third agent even after we remove a good.
It follows that the allocation cannot be EF1.
\end{proof}

\section{Conclusion and Future Work}

In this paper, we study the fair allocation of indivisible goods under connectivity constraints and provide an extensive set of results on the guarantees that can be achieved via maximin share fairness and relaxations of envy-freeness for various classes of graphs.
For maximin share fairness, we establish a link between the graph-specific maximin share and the well-studied maximin share through our price of connectivity (PoC) notion. We present a number of bounds on the PoC, several of which are tight, and leave a tempting conjecture that would settle the two-agent case if it holds.
On the envy-freeness front, we classify all connected graphs based on the strongest relaxation with guaranteed existence in the case of two agents---thereby also quantifying the price that we have to pay with respect to fairness for each graph---and characterize the set of trees and complete bipartite graphs that always admit an EF1 allocation for three agents.
Extending our results beyond three agents is a challenging problem: even when the graph is a path, the only known proof of EF1 existence for four agents employs arguments based on Sperner's lemma, and the corresponding question remains open when there are at least five agents \citep{BiloCaFl19}.

Our results on envy-freeness relaxations hold for agents with arbitrary monotonic utilities.
On the other hand, as is the case in most of the literature, our results on maximin share fairness rely on the assumption that the agents' utility functions are additive.
Maximin share fairness beyond additive utilities has been studied by \citet{BarmanKr20} and \citet{GhodsiHaSe18}; for example, they showed that a constant approximation of the maximin share can be achieved for any number of agents with submodular utilities when the graph is complete.
Since complementarity and substitutability are common in practice, it would be interesting to see how the graph-based approximations that we obtain in this paper change as we enlarge the class of utility functions considered.
Indeed, as \citet{PlautRo20} noted, there is a rich landscape of problems to explore in fair division with different classes of utility functions, and the graphical setting is likely to be no exception.

Finally, while our results in this work provide fairness guarantees that hold regardless of the agents' utilities, better guarantees can be obtained in many instances if we take the utilities into account.
For example, even though an envy-free allocation does not always exist, it is known that such an allocation exists most of the time when utilities are drawn at random \citep{DickersonGoKa14,ManurangsiSu19,ManurangsiSu21}.
On a complete graph, deciding the existence of an envy-free allocation is NP-hard even for two agents with identical utilities \citep{LiptonMaMo04}.
By contrast, this problem can be solved efficiently on a tree or a cycle for any constant number of agents, since we can simply go through all of the (polynomially many) connected allocations; yet, the problem again becomes NP-hard even on a path if the number of agents is non-constant \citep{BouveretCeEl17}.
Similar computational questions can be asked for other combinations of graphs and fairness notions without guaranteed existence, and we believe that these questions constitute an important direction that deserves to be pursued in future work.

\section*{Acknowledgments}

This work was partially supported
by the Ministry of Education, Singapore, under its Academic Research Fund Tier~1 (RG23/20),
by the KAKENHI Grant-in-Aid for JSPS Fellows number 18J00997,
by the European Research Council (ERC) under grant number 639945 (ACCORD),
by an NUS Start-up Grant, and by JST, ACT-X.

\bibliographystyle{plainnat}
\bibliography{arxiv}

\begin{thebibliography}{58}
\providecommand{\natexlab}[1]{#1}
\providecommand{\url}[1]{\texttt{#1}}
\expandafter\ifx\csname urlstyle\endcsname\relax
  \providecommand{\doi}[1]{doi: #1}\else
  \providecommand{\doi}{doi: \begingroup \urlstyle{rm}\Url}\fi

\bibitem[Abebe et~al.(2017)Abebe, Kleinberg, and Parkes]{AbebeKlPa17}
Rediet Abebe, Jon Kleinberg, and David~C. Parkes.
\newblock Fair division via social comparison.
\newblock In \emph{Proceedings of the 16th Conference on Autonomous Agents and
  MultiAgent Systems (AAMAS)}, pages 281--289, 2017.

\bibitem[Aziz et~al.(2018)Aziz, Bouveret, Caragiannis, Giagkousi, and
  Lang]{AzizBoCa18}
Haris Aziz, Sylvain Bouveret, Ioannis Caragiannis, Ira Giagkousi, and
  J\'{e}r\^{o}me Lang.
\newblock Knowledge, fairness, and social constraints.
\newblock In \emph{Proceedings of the 32nd AAAI Conference on Artificial
  Intelligence (AAAI)}, pages 4638--4645, 2018.

\bibitem[Aziz et~al.(2022)Aziz, Caragiannis, Igarashi, and Walsh]{AzizCaIg19}
Haris Aziz, Ioannis Caragiannis, Ayumi Igarashi, and Toby Walsh.
\newblock Fair allocation of indivisible goods and chores.
\newblock \emph{Autonomous Agents and Multi-Agent Systems}, 36\penalty0
  (1):\penalty0 3:1--3:21, 2022.

\bibitem[Barman and {Krishnamurthy}(2020)]{BarmanKr20}
Siddharth Barman and Sanath~Kumar {Krishnamurthy}.
\newblock Approximation algorithms for maximin fair division.
\newblock \emph{ACM Transactions on Economics and Computation}, 8\penalty0
  (1):\penalty0 5:1--5:28, 2020.

\bibitem[Bei and Suksompong(2021)]{BeiSu21}
Xiaohui Bei and Warut Suksompong.
\newblock Dividing a graphical cake.
\newblock In \emph{Proceedings of the 35th AAAI Conference on Artificial
  Intelligence (AAAI)}, pages 5159--5166, 2021.

\bibitem[Bei et~al.(2017)Bei, Qiao, and Zhang]{BeiQiZh17}
Xiaohui Bei, Youming Qiao, and Shengyu Zhang.
\newblock Networked fairness in cake cutting.
\newblock In \emph{Proceedings of the 26th International Joint Conference on
  Artificial Intelligence (IJCAI)}, pages 3632--3638, 2017.

\bibitem[Beynier et~al.(2019)Beynier, Chevaleyre, Gourv{\`{e}}s, Harutyunyan,
  Lesca, Maudet, and Wilczynski]{BeynierChGo18}
Aur{\'{e}}lie Beynier, Yann Chevaleyre, Laurent Gourv{\`{e}}s, Ararat
  Harutyunyan, Julien Lesca, Nicolas Maudet, and Ana{\"{e}}lle Wilczynski.
\newblock Local envy-freeness in house allocation problems.
\newblock \emph{Autonomous Agents and Multi-Agent Systems}, 33\penalty0
  (5):\penalty0 591--627, 2019.

\bibitem[Bil\`{o} et~al.(2022)Bil\`{o}, Caragiannis, Flammini, Igarashi,
  Monaco, Peters, Vinci, and Zwicker]{BiloCaFl19}
Vittorio Bil\`{o}, Ioannis Caragiannis, Michele Flammini, Ayumi Igarashi,
  Gianpiero Monaco, Dominik Peters, Cosimo Vinci, and William~S. Zwicker.
\newblock Almost envy-free allocations with connected bundles.
\newblock \emph{Games and Economic Behavior}, 131:\penalty0 197--221, 2022.

\bibitem[Biswas and Barman(2018)]{BiswasBa18}
Arpita Biswas and Siddharth Barman.
\newblock Fair division under cardinality constraints.
\newblock In \emph{Proceedings of the 27th International Joint Conference on
  Artificial Intelligence (IJCAI)}, pages 91--97, 2018.

\bibitem[Bondy and Murty(2008)]{BondyMu08}
John~Adrian Bondy and U.~S.~R. Murty.
\newblock \emph{Graph Theory}.
\newblock Springer, 1st edition, 2008.

\bibitem[Bouveret et~al.(2017)Bouveret, Cechl\'{a}rov\'{a}, Elkind, Igarashi,
  and Peters]{BouveretCeEl17}
Sylvain Bouveret, Katar\'{i}na Cechl\'{a}rov\'{a}, Edith Elkind, Ayumi
  Igarashi, and Dominik Peters.
\newblock Fair division of a graph.
\newblock In \emph{Proceedings of the 26th International Joint Conference on
  Artificial Intelligence (IJCAI)}, pages 135--141, 2017.

\bibitem[Bouveret et~al.(2019)Bouveret, Cechl\'{a}rov\'{a}, and
  Lesca]{BouveretCeLe18}
Sylvain Bouveret, Katar\'{i}na Cechl\'{a}rov\'{a}, and Julien Lesca.
\newblock Chore division on a graph.
\newblock \emph{Autonomous Agents and Multi-Agent Systems}, 33\penalty0
  (5):\penalty0 540--563, 2019.

\bibitem[Brams and Fishburn(2000)]{BramsFi00}
Steven~J. Brams and Peter~C. Fishburn.
\newblock Fair division of indivisible items between two people with identical
  preferences: Envy-freeness, {P}areto-optimality, and equity.
\newblock \emph{Social Choice and Welfare}, 17\penalty0 (2):\penalty0 247--267,
  2000.

\bibitem[Brams and Taylor(1996{\natexlab{a}})]{BramsTa96}
Steven~J. Brams and Alan~D. Taylor.
\newblock \emph{Fair Division: From Cake-Cutting to Dispute Resolution}.
\newblock Cambridge University Press, 1996{\natexlab{a}}.

\bibitem[Brams and Taylor(1996{\natexlab{b}})]{BramsTa96-2}
Steven~J. Brams and Alan~D. Taylor.
\newblock A procedure for divorce settlements.
\newblock \emph{Mediation Quarterly}, 13\penalty0 (3):\penalty0 191--205,
  1996{\natexlab{b}}.

\bibitem[Brams et~al.(2012)Brams, Kilgour, and Klamler]{BramsKiKl12}
Steven~J. Brams, D.~Marc Kilgour, and Christian Klamler.
\newblock The undercut procedure: an algorithm for the envy-free division of
  indivisible items.
\newblock \emph{Social Choice and Welfare}, 39\penalty0 (2--3):\penalty0
  615--631, 2012.

\bibitem[Brams et~al.(2014)Brams, Kilgour, and Klamler]{BramsKiKl14}
Steven~J. Brams, D.~Marc Kilgour, and Christian Klamler.
\newblock Two-person fair division of indivisible items: an efficient,
  envy-free algorithm.
\newblock \emph{Notices of the AMS}, 61\penalty0 (2):\penalty0 130--141, 2014.

\bibitem[Bredereck et~al.(2022)Bredereck, Kaczmarczyk, and
  Niedermeier]{BredereckKaNi18}
Robert Bredereck, Andrzej Kaczmarczyk, and Rolf Niedermeier.
\newblock Envy-free allocations respecting social networks.
\newblock \emph{Artificial Intelligence}, 305:\penalty0 103664, 2022.

\bibitem[Caragiannis et~al.(2019)Caragiannis, Kurokawa, Moulin, Procaccia,
  Shah, and Wang]{CaragiannisKuMo16}
Ioannis Caragiannis, David Kurokawa, Herv\'{e} Moulin, Ariel~D. Procaccia,
  Nisarg Shah, and Junxing Wang.
\newblock The unreasonable fairness of maximum {N}ash welfare.
\newblock \emph{ACM Transactions on Economics and Computation}, 7\penalty0
  (3):\penalty0 12:1--12:32, 2019.

\bibitem[Conitzer et~al.(2017)Conitzer, Freeman, and Shah]{ConitzerFrSh17}
Vincent Conitzer, Rupert Freeman, and Nisarg Shah.
\newblock Fair public decision making.
\newblock In \emph{Proceedings of the 18th ACM Conference on Economics and
  Computation (EC)}, pages 629--646, 2017.

\bibitem[{de Clippel} et~al.(2008){de Clippel}, Moulin, and
  Tideman]{DeclippelMoTi08}
Geoffroy {de Clippel}, Herv\'{e} Moulin, and Nicolaus Tideman.
\newblock Impartial division of a dollar.
\newblock \emph{Journal of Economic Theory}, 139:\penalty0 176--191, 2008.

\bibitem[Dickerson et~al.(2014)Dickerson, Goldman, Karp, Procaccia, and
  Sandholm]{DickersonGoKa14}
John~P. Dickerson, Jonathan Goldman, Jeremy Karp, Ariel~D. Procaccia, and
  Tuomas Sandholm.
\newblock The computational rise and fall of fairness.
\newblock In \emph{Proceedings of the 28th AAAI Conference on Artificial
  Intelligence (AAAI)}, pages 1405--1411, 2014.

\bibitem[Dror et~al.(2021)Dror, Feldman, and Segal-Halevi]{DrorFeSe21}
Amitay Dror, Michal Feldman, and Erel Segal-Halevi.
\newblock On fair division under heterogeneous matroid constraints.
\newblock In \emph{Proceedings of the 35th AAAI Conference on Artificial
  Intelligence (AAAI)}, pages 5312--5320, 2021.

\bibitem[Dyer and Frieze(1985)]{DyerFr85}
Martin~E. Dyer and Alan~M. Frieze.
\newblock On the complexity of partitioning graphs into connected subgraphs.
\newblock \emph{Discrete Applied Mathematics}, 10:\penalty0 139--153, 1985.

\bibitem[Elkind et~al.(2021{\natexlab{a}})Elkind, Segal-Halevi, and
  Suksompong]{ElkindSeSu21}
Edith Elkind, Erel Segal-Halevi, and Warut Suksompong.
\newblock Mind the gap: Cake cutting with separation.
\newblock In \emph{Proceedings of the 35th AAAI Conference on Artificial
  Intelligence (AAAI)}, pages 5330--5338, 2021{\natexlab{a}}.

\bibitem[Elkind et~al.(2021{\natexlab{b}})Elkind, Segal-Halevi, and
  Suksompong]{ElkindSeSu21-Graph}
Edith Elkind, Erel Segal-Halevi, and Warut Suksompong.
\newblock Graphical cake cutting via maximin share.
\newblock In \emph{Proceedings of the 30th International Joint Conference on
  Artificial Intelligence (IJCAI)}, pages 161--167, 2021{\natexlab{b}}.

\bibitem[Elkind et~al.(2021{\natexlab{c}})Elkind, Segal-Halevi, and
  Suksompong]{ElkindSeSu21-Land}
Edith Elkind, Erel Segal-Halevi, and Warut Suksompong.
\newblock Keep your distance: Land division with separation.
\newblock In \emph{Proceedings of the 30th International Joint Conference on
  Artificial Intelligence (IJCAI)}, pages 168--174, 2021{\natexlab{c}}.

\bibitem[Even and Tarjan(1976)]{EvenTa76}
Shimon Even and Robert~Endre Tarjan.
\newblock Computing an $st$-numbering.
\newblock \emph{Theoretical Computer Science}, 2\penalty0 (3):\penalty0
  339--344, 1976.

\bibitem[Ghodsi et~al.(2022)Ghodsi, Hajiaghayi, Seddighin, Seddighin, and
  Yami]{GhodsiHaSe18}
Mohammad Ghodsi, Mohammad~Taghi Hajiaghayi, Masoud Seddighin, Saeed Seddighin,
  and Hadi Yami.
\newblock Fair allocation of indivisible goods: Beyond additive valuations.
\newblock \emph{Artificial Intelligence}, 303:\penalty0 103633, 2022.

\bibitem[Gourv\`{e}s and Monnot(2019)]{GourvesMo19}
Laurent Gourv\`{e}s and J\'{e}r\^{o}me Monnot.
\newblock On maximin share allocations in matroids.
\newblock \emph{Theoretical Computer Science}, 754:\penalty0 50--64, 2019.

\bibitem[Hopcroft and Tarjan(1973)]{HopcroftTa73}
John Hopcroft and Robert Tarjan.
\newblock Algorithm 447: Efficient algorithms for graph manipulation.
\newblock \emph{Communications of the ACM}, 16\penalty0 (6):\penalty0 372--378,
  1973.

\bibitem[Igarashi and Peters(2019)]{IgarashiPe19}
Ayumi Igarashi and Dominik Peters.
\newblock Pareto-optimal allocation of indivisible goods with connectivity
  constraints.
\newblock In \emph{Proceedings of the 33rd AAAI Conference on Artificial
  Intelligence (AAAI)}, pages 2045--2052, 2019.

\bibitem[Igarashi and Zwicker(2021)]{IgarashiZw21}
Ayumi Igarashi and William~S. Zwicker.
\newblock Fair division of graphs and of tangled cakes.
\newblock \emph{CoRR}, abs/2102.08560, 2021.

\bibitem[Jung(1970)]{Jung70}
Heinz~A. Jung.
\newblock Eine {V}erallgemeinerung des $n$-fachen {Z}usammenhangs f{\"{u}}r
  {G}raphen.
\newblock \emph{Mathematische Annalen}, 187\penalty0 (2):\penalty0 95--103,
  1970.

\bibitem[Kilgour and Vetschera(2018)]{KilgourVe18}
D.~Marc Kilgour and Rudolf Vetschera.
\newblock Two-player fair division of indivisible items: Comparison of
  algorithms.
\newblock \emph{European Journal of Operational Research}, 271\penalty0
  (2):\penalty0 620--631, 2018.

\bibitem[Kurokawa et~al.(2018)Kurokawa, Procaccia, and Wang]{KurokawaPrWa18}
David Kurokawa, Ariel~D. Procaccia, and Junxing Wang.
\newblock Fair enough: Guaranteeing approximate maximin shares.
\newblock \emph{Journal of the ACM}, 64\penalty0 (2):\penalty0 8:1--8:27, 2018.

\bibitem[Kyropoulou et~al.(2020)Kyropoulou, Suksompong, and
  Voudouris]{KyropoulouSuVo19}
Maria Kyropoulou, Warut Suksompong, and Alexandros~A. Voudouris.
\newblock Almost envy-freeness in group resource allocation.
\newblock \emph{Theoretical Computer Science}, 841:\penalty0 110--123, 2020.

\bibitem[Lipton et~al.(2004)Lipton, Markakis, Mossel, and Saberi]{LiptonMaMo04}
Richard~J. Lipton, Evangelos Markakis, Elchanan Mossel, and Amin Saberi.
\newblock On approximately fair allocations of indivisible goods.
\newblock In \emph{Proceedings of the 5th ACM Conference on Electronic Commerce
  (EC)}, pages 125--131, 2004.

\bibitem[Lonc and Truszczynski(2020)]{LoncTr18}
Zbigniew Lonc and Miroslaw Truszczynski.
\newblock Maximin share allocations on cycles.
\newblock \emph{Journal of Artificial Intelligence Research}, 69:\penalty0
  613--655, 2020.

\bibitem[Manurangsi and Suksompong(2020)]{ManurangsiSu19}
Pasin Manurangsi and Warut Suksompong.
\newblock When do envy-free allocations exist?
\newblock \emph{SIAM Journal on Discrete Mathematics}, 34\penalty0
  (3):\penalty0 1505--1521, 2020.

\bibitem[Manurangsi and Suksompong(2021)]{ManurangsiSu21}
Pasin Manurangsi and Warut Suksompong.
\newblock Closing gaps in asymptotic fair division.
\newblock \emph{SIAM Journal on Discrete Mathematics}, 35\penalty0
  (2):\penalty0 668--706, 2021.

\bibitem[Markakis(2017)]{Markakis17}
Evangelos Markakis.
\newblock Approximation algorithms and hardness results for fair division with
  indivisible goods.
\newblock In Ulle Endriss, editor, \emph{Trends in Computational Social
  Choice}, chapter~12, pages 231--247. AI Access, 2017.

\bibitem[M\'{e}sz\'{a}ros(2015)]{Mezaros15}
G\'{a}bor M\'{e}sz\'{a}ros.
\newblock \emph{Linkedness and path-pairability in the Cartesian product of
  graphs}.
\newblock PhD thesis, Central European University, 2015.

\bibitem[Moulin(2003)]{Moulin03}
Herv\'{e} Moulin.
\newblock \emph{Fair Division and Collective Welfare}.
\newblock MIT Press, 2003.

\bibitem[Moulin(2019)]{Moulin19}
Herv\'{e} Moulin.
\newblock Fair division in the internet age.
\newblock \emph{Annual Review of Economics}, 11:\penalty0 407--441, 2019.

\bibitem[Oh et~al.(2021)Oh, Procaccia, and Suksompong]{OhPrSu19}
Hoon Oh, Ariel~D. Procaccia, and Warut Suksompong.
\newblock Fairly allocating many goods with few queries.
\newblock \emph{SIAM Journal on Discrete Mathematics}, 35\penalty0
  (2):\penalty0 788--813, 2021.

\bibitem[Paulusma and van Rooij(2011)]{PaulusmaVa11}
Dani\"{e}l Paulusma and Johan M.~M. van Rooij.
\newblock On partitioning a graph into two connected subgraphs.
\newblock \emph{Theoretical Computer Science}, 412\penalty0 (48):\penalty0
  6761--6769, 2011.

\bibitem[Plaut and Roughgarden(2020{\natexlab{a}})]{PlautRo20}
Benjamin Plaut and Tim Roughgarden.
\newblock Almost envy-freeness with general valuations.
\newblock \emph{SIAM Journal on Discrete Mathematics}, 34\penalty0
  (2):\penalty0 1039--1068, 2020{\natexlab{a}}.

\bibitem[Plaut and Roughgarden(2020{\natexlab{b}})]{PlautRo20-2}
Benjamin Plaut and Tim Roughgarden.
\newblock Communication complexity of discrete fair division.
\newblock \emph{SIAM Journal on Computing}, 49\penalty0 (1):\penalty0 206--243,
  2020{\natexlab{b}}.

\bibitem[Schmidt(2013)]{Schmidt13}
Jens~M. Schmidt.
\newblock A simple test on 2-vertex- and 2-edge-connectivity.
\newblock \emph{Information Processing Letters}, 113\penalty0 (7):\penalty0
  241--244, 2013.

\bibitem[Segal-Halevi and Suksompong(2019)]{SegalhaleviSu19}
Erel Segal-Halevi and Warut Suksompong.
\newblock Democratic fair allocation of indivisible goods.
\newblock \emph{Artificial Intelligence}, 277:\penalty0 103167, 2019.

\bibitem[Suksompong(2019)]{Suksompong19}
Warut Suksompong.
\newblock Fairly allocating contiguous blocks of indivisible items.
\newblock \emph{Discrete Applied Mathematics}, 260:\penalty0 227--236, 2019.

\bibitem[Suksompong(2021)]{Suksompong21}
Warut Suksompong.
\newblock Constraints in fair division.
\newblock \emph{ACM SIGecom Exchanges}, 19\penalty0 (2):\penalty0 46--61, 2021.

\bibitem[Thomson(2016)]{Thomson16}
William Thomson.
\newblock Introduction to the theory of fair allocation.
\newblock In Felix Brandt, Vincent Conitzer, Ulle Endriss, J\'{e}r\^{o}me Lang,
  and Ariel~D. Procaccia, editors, \emph{Handbook of Computational Social
  Choice}, chapter~11, pages 261--283. Cambridge University Press, 2016.

\bibitem[{van 't Hof} et~al.(2009){van 't Hof}, Paulusma, and
  Woeginger]{VanthofPaWo09}
Pim {van 't Hof}, Dani\"{e}l Paulusma, and Gerhard~J. Woeginger.
\newblock Partitioning graphs into connected parts.
\newblock \emph{Theoretical Computer Science}, 410\penalty0 (47--49):\penalty0
  4834--4843, 2009.

\bibitem[Whitney(1932{\natexlab{a}})]{Whitney32a}
Hassler Whitney.
\newblock Congruent graphs and the connectivity of graphs.
\newblock \emph{American Journal of Mathematics}, 54\penalty0 (1):\penalty0
  150--168, 1932{\natexlab{a}}.

\bibitem[Whitney(1932{\natexlab{b}})]{Whitney32b}
Hassler Whitney.
\newblock Non-separable and planar graphs.
\newblock \emph{Transactions of the American Mathematical Society}, 34\penalty0
  (2):\penalty0 339--362, 1932{\natexlab{b}}.

\bibitem[Woeginger(1997)]{Woeginger97}
Gerhard~J. Woeginger.
\newblock A polynomial-time approximation scheme for maximizing the minimum
  machine completion time.
\newblock \emph{Operations Research Letters}, 20\penalty0 (4):\penalty0
  149--154, 1997.

\end{thebibliography}

\appendix

\section{Algorithm for Theorem~\ref{thm:mms-two-connectivity-2}} \label{app:algo-connectivity-2}
In this section, we give a polynomial-time algorithm for computing an allocation that gives both agents at least $3/4$ of their MMS when the graph is biconnected.
As in the remark following Theorem~\ref{thm:mms-two-connectivity-1}, it suffices to compute a bipartition such that the first agent has value at least $3/4$ of her MMS for both parts; the second agent can then choose the part that she prefers.

To compute such a bipartition, we iterate over all pairs of goods $g_1,g_2$.
For each pair, we construct a bipolar ordering that begins with $g_1$ and ends with $g_2$; this is possible as explained in the proof of Theorem~\ref{thm:mms-two-connectivity-2}.
We then consider taking every possible prefix of the ordering as one part of the bipartition, and return the bipartition with the highest minimum between the two parts across all pairs $g_1,g_2$.
The pseudocode of the algorithm is given as Algorithm~\ref{alg:biconnected}.

\begin{algorithm}
\caption{Approximate MMS Algorithm for Biconnected Graphs}\label{alg:biconnected}
\begin{algorithmic}[1]
\Procedure{ApproximateMMS$(M,G,u)$}{}
\State current-best $\leftarrow$ 0
\State $M_1 \leftarrow \emptyset$
\State $M_2 \leftarrow M$
\For{$(g_1,g_2)\in M\times M$ with $g_1\neq g_2$}
\State Construct a bipolar ordering $\sigma$ of $G$ starting with $g_1$ and ending with $g_2$.
\For{$g\in M$}
\State $M_1' \leftarrow \{$all goods before $g$ in $\sigma\}$
\State $M_2' \leftarrow \{g$ and all goods after $g$ in $\sigma\}$
\If{$\min\{u(M_1'),u(M_2')\} >$ current-best}
\State current-best $\leftarrow \min\{u(M_1'),u(M_2')\}$
\State $M_1 \leftarrow M_1'$
\State $M_2 \leftarrow M_2'$
\EndIf
\EndFor
\EndFor
\State \Return $(M_1,M_2)$
\EndProcedure
\end{algorithmic}
\end{algorithm}

Since constructing a bipolar ordering with a specific cycle as the first ear can be done in linear time \citep{Schmidt13}, Algorithm~\ref{alg:biconnected} runs in polynomial time.
We now establish the correctness of the algorithm.
Assume without loss of generality that $\text{MMS}(u,2)=1/2$, so there exists a bipartition $(M_1,M_2)$ of $M$ such that $u(M_1)=1/2\leq u(M_2)$.\footnote{Note that we are not assuming $u(M)=1$ as in the proof of Theorem~\ref{thm:mms-two-connectivity-2}.}
Consider a modified utility function $u'$ where we start with $u$ and arbitrarily decrease the values of some goods in $M_2$ so that $u'(M_2) = 1/2$.
In the new instance, the proof of Theorem~\ref{thm:mms-two-connectivity-2} implies that there exists a connected bipartition for which both parts have value at least $3/8$, and this bipartition corresponds to one of the bipartitions examined by Algorithm~\ref{alg:biconnected}.
Since the values in the original instance with utility function $u$ can only be higher than in the new instance with utility function $u'$, in the original instance both parts of this bipartition also have value at least $3/8$.
It follows that both parts of the bipartition returned by Algorithm~\ref{alg:biconnected} have value at least $3/8$, which is $3/4$ of the MMS.

\section{Double Round-Robin Algorithm} \label{app:double-round-robin}
In this section, we provide a simple polynomial-time algorithm for computing a connected EF1 allocation among $n$ agents with \emph{additive} utilities, when the graph is a complete bipartite graph $K_{a,b}$ with $a,b\geq n$.
Let $L$ and $R$ denote the set of vertices on the left and right side of the graph, respectively.
The algorithm proceeds by running the classical round-robin algorithm twice, once on $L$ and once on $R$, with opposite orderings of the agents.
The pseudocode is shown as Algorithm~\ref{alg:roundrobin}.

\begin{algorithm}
\caption{Double Round-Robin Algorithm}\label{alg:roundrobin}
\begin{algorithmic}[1]
\Procedure{DoubleRoundRobin$(N,L,R,u_1,\dots,u_n)$}{}
\State $M_1,\dots,M_n\leftarrow \emptyset$
\State $i=1$
\While{$L\neq\emptyset$}
\State $j\leftarrow\text{highest-valued good in }L\text{ according to }u_i$
\State{Move $j$ from $L$ to $M_i$.}
\State{$i\leftarrow i+1$}
\If{$i=n+1$}
\State $i=1$
\EndIf
\EndWhile
\State $i=n$
\While{$R\neq\emptyset$}
\State $j\leftarrow\text{highest-valued good in }R\text{ according to }u_i$
\State{Move $j$ from $R$ to $M_i$.}
\State{$i\leftarrow i-1$}
\If{$i=0$}
\State $i=n$
\EndIf
\EndWhile
\State \Return $(M_1,\dots,M_n)$
\EndProcedure
\end{algorithmic}
\end{algorithm}

Since $a,b\geq n$, every agent receives at least one good from each of $L$ and $R$, so the resulting allocation is connected.
We claim that it is EF1.
To see this, consider two agents $i,i'$ with $i<i'$.
When allocating each of the sets $L$ and $R$, we consider a \emph{round} to begin when $i$ picks a good, and end just before the next time $i$ picks a good (or when the set runs out of goods).
During the allocation of $L$, in each round $i$ picks before $i'$. Since the utilities are additive, $i$ does not envy $i'$ with respect to the goods in $L$.
Similarly, $i$ does not envy $i'$ in each round during the allocation of $R$.
The only possible source of envy is before the first round starts, when $i'$ picks her first good.
However, this means that the envy can be eliminated if we remove this good from the bundle of $i'$.
Hence $i$ does not envy $i'$ up to one good in total; an analogous argument shows that $i'$ also does not envy $i$ up to one good.
Since $i$ and $i'$ are arbitrary, the allocation is EF1.

\end{document}